\title{\vspace{-2cm}Quasi-maximum likelihood estimation for scalable ARMA models}
\author{\small Yuchang Lin, Wenyu Li and Qianqian Zhu\thanks{Address for correspondence: Qianqian Zhu, School of Statistics and Data Science, Shanghai University of Finance and Economics, Shanghai, China. Email: zhu.qianqian@mail.shufe.edu.cn}  \\[-0.2cm]
{\footnotesize\textit{The University of Hong Kong, Nankai University,}}\\[-0.1cm]
{\footnotesize\textit{and Shanghai University of Finance and Economics}}}
\numberwithin{equation}{section}
\newcolumntype{L}[1]{>{\raggedright\let\newline\\\arraybackslash\hspace{0pt}}m{#1}}
\newcolumntype{C}[1]{>{\centering\let\newline\\\arraybackslash\hspace{0pt}}m{#1}}
\newcolumntype{R}[1]{>{\raggedleft\let\newline\\\arraybackslash\hspace{0pt}}m{#1}}
\newtheorem{assum}{Assumption}
\newtheorem{lemma}{Lemma}
\newtheorem{proposition}{Proposition}
\newtheorem{thm}{Theorem}
\newtheorem{remark}{Remark}
\DeclareMathOperator*{\argmin}{arg\,min}
\DeclareMathOperator*{\diag}{diag}
\DeclareMathOperator*{\var}{var}
\newcommand{\bm}[1]{\mbox{\boldmath{$#1$}}}
\newcommand{\RN}[1]{\uppercase\expandafter{\romannumeral#1}}
\begin{document}

\setlength{\parindent}{16pt}

\maketitle

\begin{abstract}
The recently proposed scalable ARMA model preserves the parsimony of traditional VARMA models while achieving greater computational tractability. However, existing studies are limited to regularized least squares estimation (LSE) for high-dimensional settings, which is not only statistically less efficient but also requires the sub-Gaussian assumption for its theoretical guarantees. Moreover, it still lacks inference tool for real applications.
To fill this gap, we develop a quasi-maximum likelihood estimation (QMLE) framework for scalable ARMA models. Its asymptotic normality is established under a finite fourth order moment condition, and we formally prove its asymptotic efficiency gain over LSE. We also introduce an efficient block coordinate descent algorithm for computation and a consistent Bayesian information criterion for model selection. Simulation studies validate the finite-sample performance of our methodology, and an empirical application to six macroeconomic indicators demonstrates its practical utility.
\end{abstract}

\textit{Keywords}: Model selection, quasi-maximum likelihood estimation, scalable ARMA, vector ARMA.

\newpage
\vspace{-1cm}

\linespread{1.55}
\selectfont{}

\section{Introduction}\label{Introduction}

Multivariate time series analysis constitutes a fundamental methodology in fields such as economics, finance, and environmental studies \citep{luetkpohl2005}.
Economists rely on the tracking of key variables, such as industrial production, civilian unemployment rate, consumer price index, real wages, real personal consumption expenditures, and federal funds rate, to evaluate economic conditions and guide policy or investment decisions. 
Critically, these economic variables are usually cross-correlated, and understanding their interdependence will help reveal underlying linkage relationships and co-movements within the economic system. 
In response to this need, several multivariate models have been introduced to capture these dynamic interrelations and enhance forecast accuracy. Notable among these are the vector autoregressive (VAR), vector moving average (VMA), and vector autoregressive moving average (VARMA) \citep{sims1980,TiaoandBox1981, Tiao_Tasy1989, reinsel2003elements}. 
The VAR model is the most widely adopted owing to its theoretical simplicity, computational ease, and clear interpretation \citep{Tsay2014}. 
Despite its popularity, a practical limitation of the VAR model is that it often necessitates a high lag order to achieve a satisfactory fit. This results in a large number of coefficient matrices and thus a lack of parsimony \citep{CHAN2016}. 

Compared to VAR models, the VARMA model has a more parsimonious structure. 
Meanwhile, the invertible VARMA model is equivalent to a VAR($\infty$) form, which enables it to capture temporal dependence in a more flexible and comprehensive manner than finite-order VAR models. 
This flexibility contributes to superior forecasting performance of VARMA models over VAR models, as supported by much empirical evidence \citep{poskitt2003,athanasopoulos2008,athanasopoulos2012,kascha2012comparison,simionescu2013,dufour2022practical}. 
However, the practical application of VARMA models faces challenges due to issues of non-identifiability and computational complexity. 
Many studies have attempted to address these issues. 
For example, various sophisticated identification constraints have been proposed, including the final equations form \citep{zellner1974time, wallis1977multiple}, the Echelon form \citep{hannan1984, lutkepohl1996}, and scalar component models \citep{Tiao_Tasy1989}. 
To mitigate estimation difficulties, an iterative ordinary least squares method was introduced in \cite{dias2018estimation}. 
Moreover, \cite{CHAN2016} developed a Bayesian approach to ensure identification and parsimony in VARMA modeling, and \cite{Wilms2023} established a convex optimization-based framework for identification and estimation. 
Nevertheless, these approaches primarily address the identification problem by imposing complex identification conditions or tackle the computational intractability by optimization algorithms, without fundamentally addressing the inherent limitations rooted in the model's construction.

Recent work by \cite{zheng2025interpretable} and \cite{huang2025} proposed a scalable ARMA model formulated in a VAR($\infty$) form. This model retains the parsimony and rich dynamic patterns (e.g., exponential and damped sinusoidal decays) of traditional VARMA models, while successfully circumventing their inherent issues of non-identifiability and computational intractability. Its VAR($\infty$) representation also allows for direct inference of multivariate Granger causality through its coefficient matrices. Nonetheless, two critical gaps remain. First, existing work is confined to regularized least squares estimation (LSE) for high-dimensional data, supported by non-asymptotic error bounds. These bounds, however, do not facilitate standard statistical inference such as constructing confidence intervals or significance tests. Consequently, valid inferential tools for scalable ARMA models remain unknown. 
Moreover, given the prominent role of factor models in dimension reduction, integrating inference for scalable ARMA with factor structures could offer powerful tools for high-dimensional analysis. 
Second, LSE itself is statistically less efficient, as it ignores the contemporaneous correlations and heterogeneity among innovations. Moreover, non-asymptotic theories require the restrictive sub-Gaussian assumption. This limits its applicability to heavier-tailed economic and financial time series. To address these gaps, this paper develops a more efficient and robust inference framework for scalable ARMA models under a finite fourth order moment condition. The proposed tools enable a deeper understanding of the interdependence and dynamic patterns in multiple series, as illustrated in the application (Section \ref{Sec-Realdata}).

Our main contributions can be summarized as follows: 
(i) we develop the quasi-maximum likelihood estimation (QMLE) for the scalable ARMA model, establishing its asymptotic properties under relaxed moment conditions. The asymptotic efficiency of the QMLE over the least squares estimation (LSE) is also rigorously confirmed; 
(ii) we propose a block coordinate descent (BCD) algorithm for the efficient computation of the QMLEs; 
(iii) we introduce an asymptotically consistent Bayesian information criterion (BIC) for model order selection. 
Moreover, simulation studies validate these theoretical results.
Furthermore, the practical value of our proposed framework is highlighted by an empirical study on exploring the Granger-causal relationship among six macroeconomic variables, as well as the forecasting comparison with VAR and VARMA models. 

The remainder of this paper is organized as follows. Section \ref{Sec-SARMA} introduces the scalable ARMA model. Section \ref{Sec-Methodology} proposes the QMLE method and BCD algorithm for estimation, as well as the BIC method for order selection. Section \ref{Sec-AsymptoticResults} establishes the asymptotic properties for the proposed QMLE and BIC method, and the asymptotic efficiency of the QMLE and LSE is also compared. 
Simulation studies and an empirical example are provided in Sections \ref{Sec-Simulation} and \ref{Sec-Realdata}, respectively.
The conclusion and discussion appear in Section \ref{Sec-Conclusion}. Technical proofs of all theorems and propositions are provided in the Appendix.
Throughout the paper, we denote vectors by small boldface letters and matrices by capital letters.
Denote $\|\bm{x}\|_p$ as the $\ell_p$ norm of a vector $\bm{x}$, that is $\|\bm{x}\|_p=(\sum_{i}|x_i|^p)^{1/p}$. 
For a matrix $A$, $\|A\|_{\operatorname{op}}$, $\|A\|$, $A^{\prime}$ and $\text{vec}(A)$ are the operator norm (i.e., $\|A\|_{\operatorname{op}}=\sup_{\|\bm x\|_2 =1}\|A \bm x\|$), matrix norm satisfying sub-multiplicative property, transpose and vectorization of $A$, respectively. 
For a square matrix $B$, $|B|$, $\text{tr}(B)$ and $\text{vech}({B})$ are the  determinant, trace, and the lower triangular-stacking vector of $B$, respectively. 
And $B \succeq 0$ means that $B$ is semi-positive definite. 
For square matrices $B_j$'s, let $\diag\{B_{1}, \ldots, B_{m}\}$ be the block diagonal matrix whose main diagonal consists of $B_{1}, \ldots, B_{m}$.
Moreover, $\bm 1_N$ is an $N \times 1$ vector of ones, $0_{N\times T}$ is an $N\times T$ matrix of zeroes, and $I_N$ is an $N\times N$ identity matrix. 
Additionally, $O(1)$ denotes a bounded series of nonstochastic variables, $\rightarrow_{p}$ denotes the convergence in probability, and $\rightarrow_{d}$ denotes the convergence in distribution. 
The data that support the findings of this study in Section \ref{Sec-Realdata} are openly available in FRED-MD repository at \url{stlouisfed.org/research/economists/mccracken/fred-databases}, and computer programs for the analysis are available at \url{github.com/LinyuchangSufe/SARMA}.

\section{Scalable ARMA models}\label{Sec-SARMA}

Following \cite{zheng2025interpretable} and \cite{huang2025}, the scalable ARMA model for an $N$-dimensional process ${\bm{y}_t}$ is given by the parametric VAR($\infty$) form: 
\begin{align}\label{model_VARinf}
	\bm y_t = \sum_{h=1}^{\infty} A_h(\bm \omega, \bm g) \bm y_{t-h} +\bm \varepsilon_t
	=\sum_{k=1}^{d}G_k \sum_{h=1}^{\infty} \ell_{hk}(\bm \omega) \bm y_{t-h}+\bm\varepsilon_t,
\end{align}
where $\{\bm{\varepsilon}_t\}$ is a sequence of independent and identically distributed ($i.i.d.$) random vectors with zero mean and covariance matrix $\Sigma=(\sigma_{ij})_{1\le i,j \le N}$, 
$d=p+r+2s$, 
$\bm{\omega}= (\bm{\lambda}^\prime, \bm{\eta}^\prime)^\prime$, 
$\bm\lambda = (\lambda_1,\ldots,\lambda_r)^\prime$ with $|\lambda_i| \in (0,1)$ for $1\le i\le r$, 
$\bm\eta = (\bm{\eta}_1^\prime, \ldots ,\bm{\eta}_s^\prime)^\prime$ with $\bm{\eta}_j = (\gamma_j, \varphi_j) \in (0,1) \times (0,\pi)$ for $1 \le j \le s$, 
$\bm{g}=\text{vec}(G_1,\ldots,G_d)$ with $G_k \in \mathbb{R}^{N\times N}$ for $1 \le k \le d$, 
and the entries $\ell_{hk}(\bm{\omega})$ for $h\ge 1$ and $1 \leq k \leq d$ are defined as follows: 
$$
\ell_{hk}(\bm{\omega}) = 
\begin{cases}
1, & \text{if} \;\; 1 \leq h = k \leq p, \\
\lambda_i^{h-p}, & \text{if} \;\; h>p, \;\; k=p+i \;\; \text{with} \;\; 1 \leq i \leq r,  \\
\gamma_j^{h-p} cos[(h-p) \varphi_j], & \text{if} \;\; h>p, \;\; k=p+r+2j-1 \;\; \text{with} \;\; 1 \leq j \leq s,  \\
\gamma_j^{h-p} sin[(h-p) \varphi_j], & \text{if} \;\; h>p, \;\; k=p+r+2j \;\; \text{with} \;\; 1 \leq j \leq s, \\
0, & \text{otherwise}. 
\end{cases}
$$
Particularly, $(p,r,s)$ is the order of model \eqref{model_VARinf}. 
To ensure the uniqueness of model \eqref{model_VARinf}, we assume that $\lambda_1> \cdots > \lambda_r$ and $\gamma_1 > \cdots > \gamma_s$. 
Model \eqref{model_VARinf} provides a time series model comparable to the VARMA model, while successfully circumventing the latter's inherent issues of non-identifiability and computational intractability. 

For $1 \leq h \leq p$, $\ell_{hk}(\bm{\omega}) = 1$ if $k=h$ and 0 otherwise, which implies that $A_h = G_h$. 
For $h > p$, the coefficient $\ell_{hk}(\bm{\omega})$ follows two distinct decay patterns as $h$ increases: exponential decay and exponentially damped cosine (or sine) waves. In this case, $A_h$ can be decomposed as $A_h = G_h^{\RN{1}} + G_h^{\RN{2}}$, where $G_h^{\RN{1}} = \sum_{i=1}^r \lambda_i^{h-p} G_{p+i}$ and $G_h^{\RN{2}} = \sum_{j=1}^s \gamma_j^{h-p} \{\cos[(h-p)\varphi_j] G_{p+r+2j-1} + \sin[(h-p)\varphi_j] G_{p+r+2j}\}$ capture these two decay patterns, respectively. 
Consequently, model \eqref{model_VARinf} can be rewritten as 
\begin{align}\label{model_VARinf-2}
	\bm{y}_t=\sum_{h=1}^{p}G_h \bm y_{t-h}+\sum_{h=p+1}^{\infty} (G_h^{\RN{1}} + G_h^{\RN{2}}) \bm y_{t-h}+\bm{\varepsilon}_t. 
\end{align} 
Model \eqref{model_VARinf-2} explicitly separates the short-run and long-run autoregressive effects. 
Specifically, $\sum_{h=1}^{p}G_h \bm y_{t-h}$ captures the short-run effects, analogous to a standard VAR($p$) model. 
In contrast, $\sum_{h=p+1}^{\infty} (G_h^{\RN{1}} + G_h^{\RN{2}}) \bm y_{t-h}$ captures the long-run effects. 
The term $G_h^{\RN{1}}$ embodies exponential decay via $\lambda_i^{h-p}$, while the term $G_h^{\RN{2}}$ generates exponentially damped oscillatory patterns via $\gamma_j^{h-p}\cos[(h-p)\varphi_j]$ and $\gamma_j^{h-p}\sin[(h-p)\varphi_j]$. Both ensure that long-run contributions vanish fast as the time lag $h$ increases. 
In addition, we refer to $\lambda_i$'s and $\gamma_j$'s the decay rate coefficients, $\varphi_j$'s the periodic coefficients. 
In the special case where $r = s = 0$, we have $G_h^{\RN{1}} = G_h^{\RN{2}} = 0_{N\times N}$ for all $h > p$, and the model \eqref{model_VARinf-2} will reduce to a VAR($p$) process, retaining only the short-run component.

On the other hand, model \eqref{model_VARinf} can also be rewritten as follows
\begin{align}\label{VARprs}
	\bm{y}_t=\sum_{l=1}^{p}G_l \bm y_{t-l}+ 
	\sum_{i=1}^{r}G_{p+i}\bm{f}^{I}(\bm{x}_t;\lambda_i)+
	\sum_{j=1}^{s}\sum_{m=1}^{2}G_{p+r+2(j-1)+m}\bm{f}^{II,m}(\bm{x}_t;\bm{\eta}_j)+\bm{\varepsilon}_t,
\end{align} 
where $\bm{x}_t=(\bm y_{t-1}^{\prime},\bm y_{t-2}^{\prime},\ldots)^{\prime}$, 
$\bm{f}^{I}(\bm{x}_t;\lambda_i)=\sum_{h=p+1}^{\infty}\lambda_i^{h-p}\bm{y}_{t-h}$ for $1\le i \le r$, $\bm{f}^{II,1}(\bm{x}_t;\bm{\eta}_j)=\sum_{h=p+1}^{\infty}\gamma_j^{h-p}\cos[(h-p)\varphi_j]\bm{y}_{t-h}$ and $\bm{f}^{II,2}(\bm{x}_t;\bm{\eta}_j)=\sum_{h=p+1}^{\infty}\gamma_j^{h-p}\sin[(h-p)\varphi_j]\bm{y}_{t-h}$ for $1\le j \le s$. 
It can be seen that the orders $p$ and $(r,s)$ are analogous to the AR and MA orders of the VARMA model \citep{zheng2025interpretable}. 
Particularly, model \eqref{VARprs} will reduce to a VAR($p$) model when $r=s=0$, and it can be treated as a VMA model when $p=0$. 
Consequently, we refer to $G_1,\ldots,G_p$ the AR coefficient matrices, and $G_{p+1},\ldots,G_d$ the MA coefficient matrices of model \eqref{model_VARinf} with the order $(p,r,s)$.

Additionally, we can use model \eqref{model_VARinf} to infer the Granger causal (GC) relations between any pair of component series in $\bm y_t$ \citep{zheng2025interpretable}. Since $G_k$ captures cross-sectional information associated with a particular sequence $\{\ell_{hk}(\bm \omega)\}_{h=1}^\infty$, it enables us to identify the decay pattern of GC relations. Denote $g_{i,j,k}$ as the $(i,j)$th entry of $G_k$, and $y_{j,t}$ as the $j$th entry of $\bm y_t$. Consider the influence of $\{y_{j,t}\}$ on $\{y_{1,t}\}$ for illustration. The equation for $y_{1,t}$ in model \eqref{VARprs} includes two conditional mean terms: the first term involves the sum of $g_{1,j,k} y_{j,t-k}$ for $1\le k\leq p$, while the second term involves a weighted mixture of $r$ distinct exponential decay rates and $s$ distinct pairs of damped cosine and sine waves to capture the influence beyond lag $p$. Hence, if the GC relation exists, the lagged influence of $\{y_{j,t}\}$ on $\{y_{1,t}\}$ can be classified into three scenarios: (1) \textit{short-term only}, $ g_{1,j,k} \neq 0 $ for some $ 1 \le k \le p $, while $ g_{1,j,p+1} = \cdots = g_{1,j,d} = 0 $; (2) \textit{long-term only}, $ g_{1,j,1} = \cdots = g_{1,j,p} = 0 $, while $ g_{1,j,k} \neq 0 $ for some $ p+1 \le k \le d $; and (3) \textit{both short-term and long-term influences}, $ g_{1,j,k} \neq 0 $ for some $ 1 \le k \le p $ and some $ p+1 \le k \le d $. 

This paper focuses on the case of fixed $N$, thus requiring no sparsity or low-rank conditions on $G_k$ for $1 \le k \le d$. This differs from the high-dimensional frameworks studied by \cite{zheng2025interpretable} and \cite{huang2025}.
Below, we propose an estimation method for model \eqref{model_VARinf} and an information criterion to select its order $(p,r,s)$. 


\section{Methodology}\label{Sec-Methodology}


\subsection{Quasi-maximum likelihood estimation}\label{subSec-MLE}


Suppose that the time series $\{\bm y_t\}_{t=1}^T$ is generated by model \eqref{model_VARinf} with order $(p,r,s)$. 
Recall that $\bm{\omega}= (\bm{\lambda}^\prime, \bm{\eta}^\prime)^\prime$, $\bm{g}=\text{vec}(G_1,\ldots,G_d)$, and $\Sigma$ is the covariance matrix of $\{\bm{\varepsilon}_t\}$. 
Denote $\bm{\alpha}=(\bm{\omega}^\prime, \bm{g}^\prime)^\prime$ and $\bm{\sigma}= \text{vech}(\Sigma)$. Let $\bm{\theta}=(\bm{\alpha}^\prime, \bm{\sigma}^\prime)^\prime$ be the parameter vector of model \eqref{model_VARinf} and $\Theta$ be the parameter space, where $\Theta$ is a compact subspace of $\{(-1,0)\cup (0,1)\}^r \times \{(0,1)\times (0,\pi )\}^s \times \mathbb{R}^{N^2d} \times \mathbb{R}^{N(N+1)/2}$. 
Conditional on $\{\bm y_1,\ldots,\bm y_T\}$, the negative Gaussian quasi-log-likelihood function (ignoring a constant) is 
\begin{align*}
	\mathcal{L}(\bm{\theta})=T^{-1} \sum_{t=1}^{T} l_t(\bm{\theta}) \quad \text{and} \quad l_t(\bm{\theta}) =\frac{1}{2}\ln |\Sigma(\bm{\sigma})|+\frac{1}{2} \bm{\varepsilon}_t^\prime (\bm{\alpha})\Sigma^{-1}(\bm{\sigma})\bm{\varepsilon}_t(\bm{\alpha}),
\end{align*}
where $\bm{\varepsilon}_t(\bm{\alpha}) = \bm y_t - \sum_{h=1}^{\infty}A_{h}(\bm{\omega}, \bm{g})\bm y_{t-h}$ is the error function, and $\Sigma(\bm\sigma)$ denotes the reconstruction of the symmetric matrix $\Sigma$ from its half-vectorized form $\bm\sigma$. 
Note that $\bm{\varepsilon}_t(\bm{\alpha})$, $l_t(\bm{\theta})$ and $\mathcal{L}(\bm{\theta})$ depend on observations in the infinite past. We simply set $\bm{y}_s=0_{N\times 1}$ for $s\le 0$ as the initial values for $\{\bm{y}_s,s\le 0\}$ in this paper, and denote the resulting functions of $\bm{\varepsilon}_t(\bm{\alpha})$, $l_t(\bm{\theta})$ and $\mathcal{L}(\bm{\theta})$ as $\widetilde{\bm{\varepsilon}}_t(\bm{\alpha})$, $\widetilde{l}_t(\bm{\theta})$ and $\widetilde{\mathcal{L}}(\bm{\theta})$, respectively.
Then the Gaussian quasi-maximum likelihood estimator (QMLE) can be defined as 
\begin{align}\label{QMLE}
	\widehat{\bm \theta} = (\widehat{\bm \alpha}^\prime , \widehat{\bm \sigma}^\prime)^\prime= \argmin_{\bm{\theta}\in \Theta} \widetilde{\mathcal{L}}(\bm{\theta}).
\end{align}
Because we do not assume that $\bm\varepsilon_t$ is Gaussian, $\widehat{\bm \theta}$ is called the QMLE.  
Our theoretical analysis will prove that the effect of the initial values on the estimation is asymptotically negligible.
Moreover, the consistency and asymptotic normality of the QMLE $\widehat{\bm \theta}$ will be established under regularity conditions in Section \ref{Sec-AsymptoticResults-QMLE}. 



\subsection{Block coordinate descent algorithm}\label{Sec-BCD_QMLE}

The objective function $\widetilde{\mathcal{L}}(\bm{\theta})$ in \eqref{QMLE} is non-convex and highly nonlinear with respect to the argument $\bm{\theta} =(\bm{\omega}^{\prime},\bm g^\prime,\bm{\sigma}^{\prime})^\prime$, and thus the QMLE $\widehat{\bm \theta}$ has no explicit form and needs to be abtained by optimization algorithm. 
While $\widetilde{\mathcal L}(\bm\theta)$ is not jointly convex, the parameterization in model \eqref{VARprs} allows a convenient conditional block structure: (i) for fixed $(\bm g, \bm\sigma)$, it is smooth in $\bm{\omega}$; (ii) for fixed $(\bm{\omega}, \bm\sigma)$, it is quadratic in $\bm g$; and (iii) for fixed $(\bm{\omega},\bm g)$, it has a closed-form solution for $\bm\sigma$; see details in \eqref{BCD_MLE_lambda}--\eqref{BCD_MLE_sigma}. 
This motivates us to employ a BCD algorithm to solve the QMLE $\widehat{\bm \theta}$. 

Recall that $\bm{\theta} =(\bm{\alpha}^{\prime},\bm{\sigma}^{\prime})^\prime =(\bm{\omega}^{\prime},\bm g^\prime,\bm{\sigma}^{\prime})^\prime$, and rewrite $\widetilde{\mathcal{L}}(\bm\theta)=\widetilde{\mathcal{L}}(\bm{\omega},\bm g, \bm \sigma)$. The BCD algorithm starts from an initial point $\bm{\theta}^{(0)}=(\bm{\alpha}^{(0)\prime},\bm{\sigma}^{(0)\prime})^{\prime}=(\bm{\omega}^{(0)\prime},\bm g^{(0)\prime},\bm{\sigma}^{(0)\prime})^{\prime}$, and generates a sequence of vectors $\{\bm{\theta}^{(k)}\}_{k=0}^\infty$. 
Denote $\widetilde{\bm{x}}_t=(\bm{y}_{t-1}^{\prime},\ldots,\bm{y}_{1}^{\prime}, \bm{0}^{\prime})^{\prime}$. 
Define the $N \times N^2d$ matrix $ Z_t^{(k+1)} = \{\bm y_{t-1}^\prime ,\ldots , \bm y_{t-p}^\prime,
 \bm{f}^{I}(\widetilde{\bm{x}}_t;\lambda_1^{(k+1)})^\prime ,\ldots, \bm{f}^{I}(\widetilde{\bm{x}}_t;\lambda_r^{(k+1)})^\prime, 
\bm{f}^{II,1}(\widetilde{\bm{x}}_t;\bm \eta_1^{(k+1)})^\prime, \\
\ldots,
\bm{f}^{II,1}(\widetilde{\bm{x}}_t;\bm \eta_s^{(k+1)})^\prime,
\bm{f}^{II,2}(\widetilde{\bm{x}}_t;\bm \eta_1^{(k+1)})^\prime, \ldots, \bm{f}^{II,2}(\widetilde{\bm{x}}_t;\bm \eta_s^{(k+1)})^\prime
\}\otimes I_N $. 
Then given $\bm \omega^{(k+1)}$ and initial values $\bm{y}_s=0_{N\times 1}$ for $s\le 0$, we can rewrite model \eqref{VARprs} as 
$$\bm y_t =  Z_t^{(k+1)} \bm g+\bm{\varepsilon}_t.$$ 
For each outer iteration from $\bm{\theta}^{(k)}$ to $\bm{\theta}^{(k+1)}$, we update $\bm{\omega}, \bm g$ and $\bm{\sigma}$ sequentially: 
\begin{itemize}
	\item[(i)] \textbf{Update $\bm{\omega}$.} 
	Given $\lambda_1^{(k+1)},\ldots,\lambda_{i-1}^{(k+1)},\lambda_{i+1}^{(k)},\ldots,\lambda_{r}^{(k)},\bm \eta^{(k)}$, $\bm{g}^{(k)}$ and $\bm \sigma^{(k)}$, we update
	$\lambda_{i}^{(k+1)}$ by 
	\begin{align}\label{BCD_MLE_lambda}
		\lambda_i^{(k+1)}=\argmin_{\lambda_i} \widetilde{\mathcal{L}}(\lambda_1^{(k+1)},\ldots,\lambda_{i-1}^{(k+1)},\lambda_i,\lambda_{i+1}^{(k)},\ldots,\lambda_{r}^{(k)},\bm \eta^{(k)}, \bm g^{(k)},\bm \sigma^{(k)}),
	\end{align}
	and given $\bm \lambda^{(k+1)},\bm \eta_{1}^{(k+1)},\ldots,\bm \eta_{j-1}^{(k+1)},\bm \eta_{j+1}^{(k)},\ldots,\bm \eta_{s}^{(k)}, \bm \sigma^{(k)}$, we update $\bm \eta_{j}^{(k+1)}$ by 
	\begin{align}\label{BCD_MLE_eta}
		\bm \eta_j^{(k+1)}=\argmin_{\bm \eta_j} \widetilde{\mathcal{L}}(\bm \lambda^{(k+1)},\bm \eta_{1}^{(k+1)},\ldots,\bm \eta_{j-1}^{(k+1)},\bm \eta_j,\bm \eta_{j+1}^{(k)},\ldots,\bm \eta_{s}^{(k)},\bm g^{(k)},\bm \sigma^{(k)}),
	\end{align}
	where \eqref{BCD_MLE_lambda} and \eqref{BCD_MLE_eta} are only one- or two-dimensional, and thus can be solved using the NR algorithm. 
	\item[(ii)] \textbf{Update $\bm g$.} 
	Given $\bm \omega^{(k+1)}$ and $\bm\sigma^{(k)}$, $\bm g$ can be updated by the following closed-form: 
	\begin{align}\label{BCD_MLE_g}
		\bm g^{(k+1)}=
		\left(\sum_{t=1}^T Z_t^{(k+1)\prime}(\Sigma^{(k)})^{-1}Z_t^{(k+1)}\right)^{-1}
		\left(\sum_{t=1}^T Z_t^{(k+1)\prime}(\Sigma^{(k)})^{-1}\bm y_t\right),
	\end{align}
	where $\bm{\sigma}^{(k)}= \text{vech}(\Sigma^{(k)})$. 
	\item[(iii)] \textbf{Update $\bm{\sigma}$.} 
	Given $\bm{\omega}^{(k+1)}$ and $\bm{g}^{(k+1)}$, $\bm \sigma^{(k+1)}$ is updated by  
	\begin{align}\label{BCD_MLE_sigma}
		\bm \sigma^{(k+1)} = \text{vech}\left(\frac{1}{T} \sum_{t=1}^{T} \widetilde{\bm \varepsilon}_t(\bm \alpha^{(k+1)}) \widetilde{\bm \varepsilon}_t^\prime (\bm \alpha^{(k+1)})\right).
	\end{align}
\end{itemize}
Finally, we have $\bm{\theta}^{(k+1)}=(\bm{\omega}^{(k+1)\prime},\bm{g}^{(k+1)\prime},\bm \sigma^{(k+1)\prime})^{\prime}$. The above optimization procedure is repeated until convergence; see Algorithm \ref{alg2}.  

In practice, we usually specify a maximum number of iteration denoted by $k_{\max}$ for $k$, and let $\max_{1\leq j\leq n}|(\theta^{(k+1)}_j-\theta^{(k)}_j)/\theta^{(k)}_j| \leq \delta$ with $\delta>0$ being the stopping criterion, where $\theta^{(k)}_j$ is the $j$th element of $\bm{\theta}^{(k)}$. 
As a result, the QMLE $\widehat{\bm\theta}$ is set to $\bm{\theta}^{(k^{\star}+1)}$, where $k^{\star}=k_{\max}$ or $\bm{\theta}^{(k^{\star}+1)}$ satisfies the stopping criterion. 
Additionally, for the initial point $\bm{\theta}^{(0)}=(\bm{\omega}^{(0)\prime},\bm g^{(0)\prime},\bm{\sigma}^{(0)\prime})^{\prime}$, note that $\bm g^{(0)}$ and $\bm{\sigma}^{(0)}$ can be obtained given $\bm{\omega}^{(0)}$ by \eqref{BCD_MLE_g} and \eqref{BCD_MLE_sigma}, respectively. We perform a dense grid search over the space of $\bm{\omega}$, and select the grid point $\bm{\omega}$ that minimizes the loss function of QMLE as the initial point $\bm{\omega}^{(0)}$. 

\begin{algorithm}[h!]
	\textsl{}\setstretch{1.8}
	\renewcommand{\algorithmicrequire}{\textbf{Input:}}
	\renewcommand{\algorithmicensure}{\textbf{Output:}}
	\caption{BCD algorithm for QMLE}
	\label{alg2}
	\begin{algorithmic}[1]
		\REQUIRE model order $(p,r,s)$, initialization $\bm{\omega}^{(0)}, \bm{g}^{(0)}$ and $\bm \sigma^{(0)}$.
		\STATE \textbf{repeat} $k=0,1,2,\ldots$
		\STATE \quad \textbf{for} $i=1,\ldots,r$
		\STATE \quad \quad $\lambda_i^{(k+1)}\leftarrow$ \eqref{BCD_MLE_lambda}
		\STATE \quad \textbf{for} $j=1,\ldots,s$
		\STATE \quad \quad $\bm{\eta}_j^{(k+1)}\leftarrow $ \eqref{BCD_MLE_eta}
		\STATE \quad $\bm{g}^{(k+1)} \leftarrow $ \eqref{BCD_MLE_g}
		\STATE \quad $\bm{\sigma}^{(k+1)} \leftarrow $ \eqref{BCD_MLE_sigma}
		\STATE  Until $k=k_{\max}$ or $\bm{\theta}^{(k+1)}$ satisfies the stopping criterion.
		\ENSURE $\bm{\theta}^{(k+1)}=(\bm{\omega}^{(k+1)\prime},\bm{g}^{(k+1)\prime},\bm \sigma^{(k+1)\prime})^{\prime}$ 
	\end{algorithmic}  
\end{algorithm}

\subsection{Model selection} \label{Sec-BIC}

Section \ref{subSec-MLE} introduces the QMLE for model \eqref{model_VARinf} under a correctly specified order $(p,r,s)$. 
However, the true order is unknown in practice.
Now we consider the selection of order $(p,r,s)$ for model \eqref{model_VARinf}. 

In the following, we use $\bm{\theta}_{\iota}$ to emphasize their dependence on the order $\iota = (p,r,s)$, 
and assume the parameter space $\Theta_\iota$ is a compact subset of $\Omega_\iota \times \mathbb{R}^{N^2d(\iota)} \times \mathbb{R}^{N(N+1)/2}$ with $\Omega_\iota=\{(-1,0)\cup(0,1)\}^r \times \{(0,1)\times (0,\pi )\}^s$ and $d(\iota) = p+r+2s$.
Let $p_{\rm max}, r_{\rm max}$ and $s_{\rm max}$ be the predetermined positive integers, and $\mathscr{M} = \{\iota \mid 0\le p \le p_{\rm max}, 0\le r\le r_{\rm max} ,0 \le s\le s_{\rm max}\}$. In practice, the maximum lags $p_{\rm max}, r_{\rm max}$ and $s_{\rm max}$ are typically specified as small integers (e.g., two or three). This is because, like traditional VARMA models, model \eqref{model_VARinf} is inherently parsimonious. 

We introduce the following BIC to select the order of model \eqref{model_VARinf}:
\begin{align}\label{BIC}
	&\text{BIC}(\iota) = 2T\widetilde{\mathcal{L}}(\widehat{\bm{\theta}}_\iota) +n(\iota)\ln T,
\end{align}
where $\widehat{\bm{\theta}}_\iota$ is the QMLE with the order set to $\iota$, $\widetilde{\mathcal{L}}(\widehat{\bm{\theta}}_\iota)$ is the negative Gaussian quasi-log-likelihood evaluated at $\widehat{\bm{\theta}}_\iota$, and $n(\iota) = r+2s+N^2d(\iota) + N(N+1)/2$ is the dimension of $\widehat{\bm{\theta}}_\iota$; see also \cite{schwarz1978estimating}.
Let $\widehat{\iota} =\argmin_{\iota \in \mathscr{M}}\text{BIC}(\iota)$ be the selected order for model \eqref{model_VARinf}. We prove the selection consistency of BIC under regularity conditions in Section \ref{Sec-AsymptoticResults-BIC}.
Simulation results in Section \ref{Sec-Simulation} indicate that BIC performs well in finite samples.


\section{Asymptotic results}\label{Sec-AsymptoticResults}

This section establishes the asymptotic properties of the QMLE $\widehat{\bm \theta}$ proposed in Section \ref{subSec-MLE} and the selection consistency of the BIC introduced in Section \ref{Sec-BIC}. 
Additionally, a comparative analysis is conducted to demonstrate the asymptotic efficiency gain of the proposed QMLE over the least squares estimator (LSE).

\subsection{Asymptotic properties for QMLE}\label{Sec-AsymptoticResults-QMLE}

Denote $\bm \theta_0 = (\bm \alpha_0^\prime, \bm \sigma_0^\prime)^\prime =(\bm{\omega}_0^{\prime},\bm g_0^\prime,\bm{\sigma}_0^{\prime})^\prime$ as the true value of $\bm \theta$, where $\bm{\omega}_0= (\bm{\lambda}_0^\prime, \bm{\eta}_0^\prime)^\prime$ with $\bm\lambda_0 = (\lambda_{10},\ldots,\lambda_{r0})^\prime$ and $\bm\eta_0 = (\bm{\eta}_{10}^\prime, \ldots ,\bm{\eta}_{s0}^\prime)^\prime$, $\bm g_0 = \text{vec}(G_{01},\ldots,G_{0d})$, and $\bm{\sigma}_0= \text{vech}(\Sigma_0)$. 
Let $\bm \varepsilon_{0t} = \bm \varepsilon_t(\bm \alpha_0)$.
To establish the asymptotic properties for the QMLE $\widehat{\bm \theta}$, we introduce the following assumptions. 

\begin{assum}\label{assum_identifiability}
	($i$) The elements of $\bm \omega_0$ are nonzero, where $\lambda_{01},\ldots,\lambda_{0r}$ are distinct, and $\gamma_{01},\ldots, \gamma_{0s} $ are distinct; 
	($ii$) there exists a positive constant $C$ such that $\|G_{0k}\|>C$ for $p+1\le k \le d$. 
\end{assum}

\begin{assum}\label{assum_stationarity_y}
	$\{\bm y_t, t\in Z\}$ is strictly stationary and ergodic. 
\end{assum}

\begin{assum}\label{assum_stationarity_epsilon}
	$\{\bm \varepsilon_{0t}, t\in Z\}$ is a sequence of $i.i.d.$ random vectors with $E(\bm \varepsilon_{0t})=0_{N\times 1}$ and $E(\bm \varepsilon_{0t}\bm \varepsilon_{0t}') = \Sigma_0$, where $\Sigma_{0}$ is a positive definite and finite matrix.
\end{assum}


\begin{assum}\label{assum_space}
	($i$) The parameter space $\Theta$ is compact and the true parameter vector $\bm \theta_0$ is an interior point in $\Theta$;
	($ii$) $|\lambda_1|,\ldots , |\lambda_r|, \gamma_1,\ldots, \gamma_s \in \Lambda$, where $\Lambda$ is a compact subset of $(0,{\rho})$, and $0 < {\rho} <1$ is a constant; 
	($iii$) $\Sigma(\bm \sigma)$ is a finite and positive definite symmetric matrix with its eigenvalue having a positive lower and upper bound denoted by $\underline{\sigma}$ and $\bar{\sigma}$, respectively.
\end{assum}


Assumption \ref{assum_identifiability} is required for model identification, ensuring the uniqueness of the true parameter vector $\bm\alpha_0$ for model \eqref{VARprs} with order ($p,r,s$); see Lemma \ref{lemma_identifiability} in the Appendix for details. Specifically, Assumption \ref{assum_identifiability}(i) requires $\lambda_{01},\ldots,\lambda_{0r}, \gamma_{01},\ldots, \gamma_{0s}$ to be nonzero and different for identification, and Assumption \ref{assum_identifiability}(ii) holds if and only if there exist nonzero elements in $G_{0k}$. Assumption \ref{assum_identifiability} is also necessary to guarantee the true order $(p,r,s)$ to be irreducible; see Remark \ref{order-condition} in Section \ref{Sec-AsymptoticResults-BIC} for details.
For the strict stationarity and ergodicity of $\{\bm y_t\}$ in Assumption \ref{assum_stationarity_y}, Theorem 2.3 in \cite{huang2025} provides a sufficient condition for model \eqref{model_VARinf} by assuming $\{\bm \varepsilon_{0t}\}$ is a strictly stationary and ergodic sequence with $E(\|\bm \varepsilon_{0t}\|^2_2)<\infty$, that is, there exists $0<\rho<1$ such that $\max\{|\lambda_{01}|,\ldots,|\lambda_{0r}|,\gamma_{01},\ldots, \gamma_{0s}\}\le \rho$ and $\sum_{k=1}^{p}\|G_{0k}\|_{\operatorname{op}} +\rho/(1-\rho) \sum_{k=p+1}^{d} \|G_{0k}\|_{\operatorname{op}}<1$. 
Assumption \ref{assum_stationarity_epsilon} is necessary for the consistency and asymptotic normality of QMLEs, which together with Assumption \ref{assum_identifiability} guarantees that $\bm\theta_0$ is the unique minimizer of the expected error loss function $E[\mathcal{L}(\bm{\theta})]$. Note that Assumption \ref{assum_stationarity_epsilon} only imposes moment conditions on the $i.i.d.$ errors $\{\bm \varepsilon_{0t}\}$, which is weaker than the $i.i.d.$ sub-Gaussian condition assumed in \cite{huang2025} and \cite{zheng2025interpretable}.
Finally, Assumption \ref{assum_space} includes basic conditions for the consistency and asymptotic normality of QMLEs. The compactness of the parameter space $\Theta$ is standard for ensuring consistency, while the requirement that $\bm \theta_0$ lies in the interior of $\Theta$ is necessary for asymptotic normality.

Let $n = r+2s+N^2d + N(N+1)/2$ be the dimension of $\bm{\theta}$. 
Define the $n \times n$ matrices 
\begin{align*}
	\mathcal{I} = E\left( 
	\frac{\partial l_{t}(\bm \theta_0)}{\partial \bm \theta} 
	\frac{\partial l_{t}(\bm \theta_0)}{\partial \bm \theta^\prime}
	\right)
	\quad \text{and}\quad
	\mathcal{J} = E\left( 
	\frac{\partial ^2 l_t(\bm{\theta}_0)}{\partial \bm{\theta}\partial \bm{\theta}^\prime}
	\right) .
\end{align*}

\begin{thm}\label{thmQMLE}
	Suppose that Assumptions \ref{assum_identifiability}--\ref{assum_space} hold and $E(\|\bm y_t\|_2^2)<\infty$. As $T \to \infty$, we have ($i$) $\widehat{\bm{\theta}} \rightarrow_{p} \bm{\theta}_0$;
	($ii$) furthermore, if $E(\|\bm \varepsilon_{0t}\|_2^4) <\infty $ and $\mathcal{I} $ is positive definite, then $\sqrt{T}(\widehat{\bm{\theta}} -\bm{\theta}_0 ) \rightarrow_d N(0_{n\times 1},\mathcal{J}^{-1} \mathcal{I} \mathcal{J}^{-1})$.
\end{thm}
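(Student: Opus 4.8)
The plan is to follow the standard two-step M-estimation argument for the QMLE, adapted to the VAR($\infty$) structure of the scalable ARMA model, while carefully handling the truncation induced by setting $\bm y_s = 0$ for $s\le 0$. For part (i), I would first show that the initial-value effect is asymptotically negligible, i.e. $\sup_{\bm\theta\in\Theta}|\widetilde{\mathcal L}_2(\bm\theta)-\mathcal L_2(\bm\theta)| = o_p(1)$. This uses Assumption \ref{assum_space}(ii): because $|\lambda_i|,\gamma_j$ are uniformly bounded by $\rho<1$ on the compact parameter space, the coefficient sequences $\ell_{hk}(\bm\omega)$ decay geometrically uniformly in $\bm\omega$, so $\|\bm\varepsilon_t(\bm\alpha)-\widetilde{\bm\varepsilon}_t(\bm\alpha)\|_2$ is bounded by a geometrically decaying (in $t$) factor times a term involving $\|\bm y_s\|_2$, which combined with $E\|\bm y_t\|_2^2<\infty$ gives the uniform bound after averaging. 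Next, by Assumptions \ref{assum_stationarity_y}--\ref{assum_stationarity_epsilon} and the ergodic theorem, $\mathcal L_2(\bm\theta)\to_p E[l_t(\bm\theta)]$ pointwise, which can be upgraded to uniform convergence on the compact $\Theta$ via a standard stochastic-equicontinuity / dominated-moment argument (again leaning on the geometric decay and $E\|\bm y_t\|_2^2<\infty$). Then I would verify that $\bm\theta_0$ is the unique minimizer of $E[l_t(\bm\theta)]$: for fixed $\bm\alpha$ the Gaussian likelihood is minimized over $\bm\sigma$ at $\Sigma=E[\bm\varepsilon_t(\bm\alpha)\bm\varepsilon_t(\bm\alpha)']$, reducing the problem to minimizing $\ln|E[\bm\varepsilon_t(\bm\alpha)\bm\varepsilon_t(\bm\alpha)']|$, and since $\bm\varepsilon_t(\bm\alpha)=\bm\varepsilon_{0t}+(\text{predictable term})$ the determinant is minimized exactly when the predictable term vanishes, which by Lemma \ref{lemma_identifiability} (invoked via Assumption \ref{assum_identifiability}) forces $\bm\alpha=\bm\alpha_0$, hence $\bm\theta=\bm\theta_0$. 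Consistency then follows from the standard argmin theorem.

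For part (ii), I would use the Taylor expansion of the score. Since $\bm\theta_0$ is interior (Assumption \ref{assum_space_ML}(i)) and $\widehat{\bm\theta}_2\to_p\bm\theta_0$, the first-order condition $\partial\widetilde{\mathcal L}_2(\widehat{\bm\theta}_2)/\partial\bm\theta = 0$ gives
\begin{align*}
	0 = \frac{\partial\widetilde{\mathcal L}_2(\bm\theta_0)}{\partial\bm\theta} + \left(\frac{\partial^2\widetilde{\mathcal L}_2(\bm\theta^*)}{\partial\bm\theta\partial\bm\theta'}\right)(\widehat{\bm\theta}_2-\bm\theta_0)
\end{align*}
for some $\bm\theta^*$ on the segment between $\widehat{\bm\theta}_2$ and $\bm\theta_0$. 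I would show (a) $\sqrt T\,\partial\widetilde{\mathcal L}_2(\bm\theta_0)/\partial\bm\theta = \sqrt T\,\partial\mathcal L_2(\bm\theta_0)/\partial\bm\theta + o_p(1)$ — again the truncation is negligible, now requiring the geometric decay together with $E\|\bm\varepsilon_{0t}\|_2^4<\infty$ to control the cross terms — and then apply a CLT for stationary ergodic martingale differences (the score $\partial l_t(\bm\theta_0)/\partial\bm\theta$ is a martingale difference with respect to $\mathcal F_{t}=\sigma(\bm\varepsilon_{0t},\bm\varepsilon_{0,t-1},\ldots)$ since $\bm\varepsilon_{0t}$ is i.i.d. and the derivative factors split into a predictable part times $\bm\varepsilon_{0t}$ or $\bm\varepsilon_{0t}\bm\varepsilon_{0t}'-\Sigma_0$), yielding $\sqrt T\,\partial\mathcal L_2(\bm\theta_0)/\partial\bm\theta\to_d N(0,\mathcal I_2)$ with $\mathcal I_2$ finite under the fourth-moment condition; (b) $\partial^2\widetilde{\mathcal L}_2(\bm\theta^*)/\partial\bm\theta\partial\bm\theta'\to_p\mathcal J_2$ by the uniform law of large numbers for the Hessian plus the consistency $\bm\theta^*\to_p\bm\theta_0$, with $\mathcal J_2$ positive definite (this follows from the identifiability argument of part (i) together with Assumption \ref{assum_space_ML}(ii), which guarantees $\Sigma^{-1}$ is well-behaved). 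Combining via Slutsky gives $\sqrt T(\widehat{\bm\theta}_2-\bm\theta_0)\to_d N(0,\mathcal J_2^{-1}\mathcal I_2\mathcal J_2^{-1})$, using the assumed positive-definiteness of $\mathcal I_2$ to ensure the sandwich variance is non-degenerate.

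The main obstacle will be the uniform control of the infinite-sum quantities $\bm f^{I},\bm f^{II,m}$ and their derivatives with respect to $\bm\omega$ over the whole compact parameter space — in particular, bounding the derivatives $\partial^2\bm\varepsilon_t(\bm\alpha)/\partial\bm\alpha\partial\bm\alpha'$ uniformly, since differentiating $\sum_h\lambda^{h-p}\bm y_{t-h}$ brings down polynomial-in-$h$ factors like $h(h-1)\lambda^{h-p-2}$, which still decay geometrically only because $\Lambda\subset(0,\rho)$ with $\rho<1$ strictly. Establishing the requisite dominating functions with finite expectation (needed for the uniform LLN for the Hessian and for the negligibility of the initial values in the second-order term) is where Assumption \ref{assum_space}(ii) is essential and where the bookkeeping is heaviest; everything else is a routine adaptation of the Klimko–Nelson / standard QMLE machinery for time series. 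A secondary technical point is verifying the martingale-difference structure of the score carefully given the truncation, but since the untruncated score is a genuine martingale difference and the truncation error is $o_p(T^{-1/2})$ in the relevant norm, this reduces to the decay estimates already needed.
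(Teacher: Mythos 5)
Your proposal is correct and follows essentially the same route as the paper: initial-value negligibility plus a uniform law of large numbers and a unique-minimum argument give consistency (the paper's Amemiya-style argmin theorem), and a Taylor expansion of the score with a stationary-ergodic martingale-difference CLT for $\partial l_t(\bm\theta_0)/\partial\bm\theta$, a uniform LLN for the Hessian, and positive definiteness of $\mathcal J_2$ give the sandwich normality. The only variation is your uniqueness step, where you concentrate out $\Sigma$ before invoking identifiability, whereas the paper splits $2E[l_t(\bm\theta)]$ additively into an $\bm\alpha$-part and a $\bm\sigma$-part and applies the inequality $-\ln|M|+\operatorname{tr}(M)\ge N$; the two are equivalent in substance.
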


Theorem \ref{thmQMLE} establishes the consistency and asymptotic normality for the QMLE $\widehat{\bm{\theta}}$. 
We can see that the consistency requires $E(\|\bm \varepsilon_{0t}\|_2^2)< \infty$, whereas the asymptotic normality further needs $E(\|\bm \varepsilon_{0t}\|_2^4)< \infty$. 
Moreover, the consistency established in Theorem \ref{thmQMLE} also relaxes the assumption of sub-Gaussian errors imposed in \cite{huang2025} and \cite{zheng2025interpretable}.
If $\bm \varepsilon_{0t}$ is multivariate normal, then the QMLE $\widehat{\bm{\theta}}$ reduces to the MLE and its asymptotic normality can be simplified to $\sqrt{T}(\widehat{\bm{\theta}} -\bm{\theta}_0 ) \rightarrow_d N(0_{n\times 1},\mathcal{J}^{-1})$ as $T \to \infty$. 

To calculate the asymptotic variances of $\widehat{\bm{\theta}}$, we can estimate the matrices $\mathcal{I}$ and $\mathcal{J}$ using their sample averages with $\bm{\theta}_0$ replaced by $\widehat{\bm{\theta}}$, leading to a consistent estimator of the covariance matrix denoted by $\widehat{\text{Var}}(\widehat{\bm{\theta}})=\widehat{\mathcal{J}}^{-1} \widehat{\mathcal{I}} \widehat{\mathcal{J}}^{-1}/T$. Then we can construct the significance test for each $\theta_j$ using the $z$ test statistic $z_j=\widehat{\theta}_j/\sqrt{[\widehat{\text{Var}}(\widehat{\bm{\theta}})]_{jj}}$, where $\theta_j$ (or $\widehat{\theta}_j$) is the $j$th element of $\bm{\theta}$ (or $\widehat{\bm{\theta}}$), and $[\widehat{\text{Var}}(\widehat{\bm{\theta}})]_{jj}$ is the $j$th diagonal entry of $\widehat{\text{Var}}(\widehat{\bm{\theta}})$. Since $z_j\to_d N(0,1)$ as $T\to\infty$, we can conclude that $\theta_j\neq 0$ if $|z_j|>z_{\alpha/2}$ at the significance level $\alpha$, where $z_{\alpha}$ is the $\alpha$th quantile of standard normal distribution.

\subsection{Selection consistency for BIC}\label{Sec-AsymptoticResults-BIC}

Let $\iota^* = (p^*,r^*,s^*)$ be the true order of model \eqref{model_VARinf}. Theorem \ref{thmBIC} proves that the selected order $\widehat{\iota}$ via BIC converges to the true order $\iota^*$ consistently under regularity conditions.

\begin{thm}\label{thmBIC}
	Suppose $\iota^*$ is irreducible. Under the conditions of Theorem \ref{thmQMLE}, if $\iota^* \in \mathscr{M} $, then $P(\widehat{\iota} = \iota^*) \to 1$ as $T \to \infty$.
\end{thm}

Note that assuming $\iota^*$ to be irreducible is necessary for proving the selection consistency of BIC in Theorem \ref{thmBIC}.
If $\iota^*$ is reducible, then there exists $\iota^{\dagger}\neq\iota^*$ such that model \eqref{model_VARinf} with order $\iota^{\dagger}$ generates the same data  process as model \eqref{model_VARinf} with order $\iota^*$.
To avoid this identification issue, we assume that $\iota^*$ is irreducible, and provide a sufficient condition for its irreducibility in Remark \ref{order-condition}. 

\begin{remark}[A sufficient condition for irreducible $\iota^*$]\label{order-condition}
	It can be verified that the order $(p,r,s)$ cannot be reduced to smaller ones if Assumption \ref{assum_identifiability} and $G_{0p} \neq \sum_{i=1}^{r}G_{0,p+i} +\sum_{j=1}^{s} G_{0,p+r+2j-1}$ hold. 
	Specifically, Assumption \ref{assum_identifiability} ensures the orders $r$ and $s$ non-degenerate, whereas the restriction on $G_{0p} \neq \sum_{i=1}^{r}G_{0,p+i} +\sum_{j=1}^{s} G_{0,p+r+2j-1}$ guarantees $p$ non-decreasing; see the proof of this remark in the Appendix for details. 	
\end{remark}

\subsection{Efficiency comparison between QMLE and LSE}\label{Sec-AsymptoticResults-comparison}

For scalable ARMA models in the high-dimensional setting with a large $N$, \cite{zheng2025interpretable} and \cite{huang2025} developed the regularized LSEs and derived their non-asymptotic error bounds under the sparsity and low-rank conditions on the parameter matrices $G_k$ for $1 \le k \le d$, respectively. 
For scalable ARMA models with a fixed $N$, we can also consider the LSE without any sparse or low-rank conditions imposed on $G_k$'s. 
This section compares the proposed QMLE in Section \ref{subSec-MLE} with the LSE theoretically. 

Let $\mathcal{A}$ be the parameter space of $\bm{\alpha}=(\bm{\omega}^\prime, \bm{g}^\prime)^\prime$, where $\mathcal{A}$ is a subset of $\{(-1,0)\cup (0,1)\}^r \times \{(0,1)\times (0,\pi )\}^s \times \mathbb{R}^{N^2d}$. 
Recall that the error function $\bm{\varepsilon}_t(\bm{\alpha}) = \bm y_t - \sum_{h=1}^{\infty}A_{h}(\bm{\omega}, \bm{g})\bm y_{t-h}$. 
Denote $\mathcal{L}_{LS}(\bm{\alpha})= T^{-1} \sum_{t=1}^{T}\| \bm{\varepsilon}_t(\bm{\alpha})\|_2^2$ as the squared error loss function. 
Similarly, we simply set $\bm{y}_s=0_{N\times 1}$ for $s\le 0$ and denote the resulting function of $\mathcal{L}_{LS}(\bm{\alpha})$ as $\widetilde{\mathcal{L}}_{LS}(\bm \alpha)$. 
Then the LSE for $\bm\alpha$ is defined as
\begin{align}\label{LSE}
	\widehat{\bm \alpha}_{LS} = \argmin_{\bm{\alpha}\in \mathcal{A}} \widetilde{\mathcal{L}}_{LS}(\bm{\alpha}),
\end{align}
and the LSE for $\Sigma$ is $\widehat{\Sigma}_{LS} = T^{-1}\sum_{t=1}^{T}\widetilde{\bm \varepsilon}_{t}(\widehat{\bm \alpha}_{LS}) \widetilde{\bm \varepsilon}_{t}^\prime (\widehat{\bm \alpha}_{LS})$. 
Let $n_{\alpha} = r+2s+N^2d$ be the dimension of $\bm{\alpha}$, and define the $n_{\alpha} \times n_{\alpha}$ matrices
\begin{align*}
	\mathcal{I}_{LS} = E\left( \frac{\partial \bm \varepsilon_{t}^\prime(\bm \alpha_0)}{\partial \bm \alpha} \Sigma_0
	\frac{\partial \bm \varepsilon_{t}(\bm \alpha_0)}{\partial \bm \alpha^\prime}\right)
	\quad \text{and}\quad
	\mathcal{J}_{LS} = E\left( \frac{\partial \bm \varepsilon_{t}^\prime(\bm \alpha_0)}{\partial \bm \alpha} 
	\frac{\partial \bm \varepsilon_{t}(\bm \alpha_0)}{\partial \bm \alpha^\prime}\right) .
\end{align*}

\begin{proposition}\label{thmLS}
	Suppose that Assumptions \ref{assum_identifiability}--\ref{assum_space} hold and $E(\|\bm y_t\|_2^2)<\infty$. As $T \to \infty$, we have ($i$) $\widehat{\bm{\alpha}}_{LS} \rightarrow_{p} \bm{\alpha}_0$ and $\widehat{\Sigma}_{LS}\rightarrow_p \Sigma_{0}$;
	($ii$) 
	$\sqrt{T}(\widehat{\bm{\alpha}}_{LS}  -\bm{\alpha}_0 ) \rightarrow_d N(0_{n_{\alpha}\times 1}, \Xi_{LS})$, where $\Xi_{LS} = \mathcal{J}_{LS}^{-1}\mathcal{I}_{LS}\mathcal{J}_{LS}^{-1}$;
	($iii$) 
	furthermore, if $E(\|\bm \varepsilon_{0t}\|_2^4)< \infty$, then $\sqrt{T} ({\rm vec}(\widehat{\Sigma}_{LS})- {\rm vec}(\Sigma_{0}))\rightarrow_{d}N(0_{N^2\times 1},\mathcal{K})$, where $\mathcal{K} = \var[{\rm vec}(\bm \varepsilon_{0t} \bm \varepsilon_{0t}^{\prime})]$.
\end{proposition}

Proposition \ref{thmLS} establishes the consistency and asymptotic normality for the LSEs $\widehat{\bm{\alpha}}_{LS}$ and $\widehat{\Sigma}_{LS}$. 
It is noteworthy that the consistency of $\widehat{\bm{\alpha}}_{LS}$ and $\widehat{\Sigma}_{LS}$ as well as the asymptotic normality of $\widehat{\bm{\alpha}}_{LS}$ only require $E(\|\bm \varepsilon_{0t}\|_2^2)< \infty$, whereas the asymptotic normality of $\widehat{\Sigma}_{LS}$ needs $E(\|\bm \varepsilon_{0t}\|_2^4)< \infty$. 
In comparison with the consistency result in \cite{huang2025} and \cite{zheng2025interpretable}, Proposition \ref{thmLS} relaxes the sub-Gaussian error assumption in their high-dimensional setting.

To compare the asymptotic efficiency of the proposed QMLE in \eqref{QMLE} and the LSE in \eqref{LSE}, we can compare the asymptotic covariance matrices of the estimators for $\bm\alpha$ and $\Sigma$ separately; see also \cite{poskitt1994asymptotic}.
Based on Theorem \ref{thmQMLE}, we can prove that $\sqrt{T}(\widehat{\bm \alpha}- \bm \alpha_0)\rightarrow_d N(0_{n_{\alpha}\times 1},\Xi_{QML})$ and $\sqrt{T} ({\rm vec}(\widehat{\Sigma})- {\rm vec}(\Sigma_{0}))\rightarrow_{d}N(0_{N^2\times 1},\mathcal{K})$ as $T \to \infty$, where $\widehat{\bm{\sigma}}= \text{vech}(\widehat{\Sigma})$, and $\Xi_{QML} =\left\{E \left[(\partial \bm \varepsilon_t^\prime (\bm \alpha_0)/\partial \bm \alpha) \Sigma_0^{-1} (\partial \bm \varepsilon_t (\bm \alpha_0)/\partial \bm \alpha^\prime) \right]\right\}^{-1}$; see details in the proof of Proposition \ref{thmcompare_efficient} in the Appendix. 
These, together with Proposition \ref{thmLS}, are used to verify the following proposition.

\begin{proposition}\label{thmcompare_efficient}
	Under the conditions of Theorem \ref{thmQMLE} and Proposition \ref{thmLS}, it holds that (i) the asymptotic covariance matrices of the LSE $\widehat{\bm \alpha}_{LS}$ and the QMLE $\widehat{\bm \alpha}$ satisfy $\Xi_{LS}- \Xi_{QML}\ge 0 $, where the equality holds if and only if $\Sigma_{0}=\sigma^2 I_{N}$ for some $\sigma^2>0$; (ii) the LSE ${\rm vec}(\widehat{\Sigma}_{LS})$ and the QMLE ${\rm vec}(\widehat{\Sigma})$ have the same asymptotic covariance matrix $\mathcal{K} = \var[{\rm vec}(\bm \varepsilon_{0t} \bm \varepsilon_{0t}^{\prime})]$. 
\end{proposition}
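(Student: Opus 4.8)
\medskip\noindent\emph{Proof plan.}\quad Write $D_t:=\partial\bm\varepsilon_t(\bm\alpha_0)/\partial\bm\alpha^\prime$, so $\mathcal J_1=E(D_t^\prime D_t)$, $\mathcal I_1=E(D_t^\prime\Sigma_0 D_t)$, and set $\mathcal H:=E(D_t^\prime\Sigma_0^{-1}D_t)$; all three are finite under $E(\|\bm y_t\|_2^2)<\infty$ with Assumption \ref{assum_space}, and $\mathcal J_1$, $\mathcal H$ are positive definite by the non-degeneracy underlying Assumption \ref{assum_identifiability}. A preliminary step is to deduce from Theorem \ref{thmQMLE} that $\sqrt T(\widehat{\bm\alpha}_2-\bm\alpha_0)\to_d N(0_{n_1\times1},\Xi_2)$ with $\Xi_2=\mathcal H^{-1}$: since $\partial l_t(\bm\theta_0)/\partial\bm\alpha=D_t^\prime\Sigma_0^{-1}\bm\varepsilon_{0t}$ with $D_t$ depending only on $\{\bm y_{t-h}\}_{h\ge1}$ and hence independent of $\bm\varepsilon_{0t}$, the $(\bm\alpha,\bm\sigma)$ cross-blocks of $\mathcal J_2$ vanish (they carry a factor $E(\bm\varepsilon_{0t})=0$) and $\mathcal I_{2,\bm\alpha\bm\alpha}=E(D_t^\prime\Sigma_0^{-1}\bm\varepsilon_{0t}\bm\varepsilon_{0t}^\prime\Sigma_0^{-1}D_t)=\mathcal H=\mathcal J_{2,\bm\alpha\bm\alpha}$, so the $\bm\alpha$-block of $\mathcal J_2^{-1}\mathcal I_2\mathcal J_2^{-1}$ equals $\mathcal H^{-1}$.

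For part (i), the observation that drives the inequality is that for every $t$ and all $\bm u,\bm v\in\mathbb R^{n_1}$,
\begin{align*}
0\le\bigl\|\Sigma_0^{-1/2}D_t\bm u+\Sigma_0^{1/2}D_t\bm v\bigr\|_2^2=\bm u^\prime D_t^\prime\Sigma_0^{-1}D_t\,\bm u+2\bm u^\prime D_t^\prime D_t\,\bm v+\bm v^\prime D_t^\prime\Sigma_0 D_t\,\bm v ,
\end{align*}
so the $2n_1\times2n_1$ block matrix with diagonal blocks $D_t^\prime\Sigma_0^{-1}D_t$ and $D_t^\prime\Sigma_0 D_t$ and off-diagonal block $D_t^\prime D_t$ is semi-positive definite; taking expectations, the block matrix with diagonal blocks $\mathcal H$, $\mathcal I_1$ and off-diagonal block $\mathcal J_1$ is semi-positive definite as well. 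The Schur complement with respect to the invertible leading block $\mathcal H$ gives $\mathcal I_1-\mathcal J_1\mathcal H^{-1}\mathcal J_1\ge0$, and conjugating by the symmetric positive definite $\mathcal J_1^{-1}$ yields $\Xi_1-\Xi_2=\mathcal J_1^{-1}(\mathcal I_1-\mathcal J_1\mathcal H^{-1}\mathcal J_1)\mathcal J_1^{-1}\ge0$. If $\Sigma_0=\sigma^2 I_N$ then $\mathcal I_1=\sigma^2\mathcal J_1$ and $\Xi_2=\sigma^2\mathcal J_1^{-1}$, hence $\Xi_1=\sigma^2\mathcal J_1^{-1}=\Xi_2$.

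The converse is the crux. Equality forces the (semi-positive definite) Schur complement to vanish; since it equals $E(\bm R_t^\prime\bm R_t)$ with $\bm R_t:=\Sigma_0^{1/2}D_t-\Sigma_0^{-1/2}D_t\mathcal H^{-1}\mathcal J_1$, we get $\bm R_t=0$ a.s., i.e.\ $\Sigma_0 D_t=D_t P$ a.s.\ with $P=\mathcal H^{-1}\mathcal J_1$. I would then exploit the explicit form of $D_t$ obtained by differentiating \eqref{VARprs}: the column of $D_t$ attached to the $(m,j)$ entry of $G_k$ is $-\bm e_m$ (the $m$th unit vector) times the $j$th component of $\bm y_{t-k}$ when $k\le p$, and of the corresponding $\bm f^{I}$ or $\bm f^{II,\cdot}$ when $k>p$; whereas the columns attached to $\bm\omega$ are $-G_{0,p+i}\sum_{h>p}(h-p)\lambda_{0i}^{h-p-1}\bm y_{t-h}$ and the analogous damped‑sinusoidal derivatives, which lie in the column spaces of the MA matrices $G_{0,p+1},\dots,G_{0,d}$. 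Inserting coordinate vectors — in particular those of the decay‑rate parameters — into $\Sigma_0 D_t=D_t P$ and matching the contributions of the various linear functionals of $\{\bm y_{t-h}\}_{h\ge1}$, which are almost surely linearly independent because positive definiteness of $\Sigma_0$ gives each innovation a non‑degenerate conditional law while the $\lambda_{0i},\gamma_{0j}$ are distinct and nonzero and the MA matrices are nonzero by Assumption \ref{assum_identifiability}, one is led to relations of the form $(\Sigma_0-cI_N)G_{0k}=0$ for the MA indices $k$, and finally to $\Sigma_0=\sigma^2 I_N$. The main obstacle is to make this coefficient matching rigorous, namely to establish that the relevant random vectors built from the infinite past admit no degenerate linear relation; this is where the structure of the scalable ARMA parametrization enters decisively, and I expect it to carry most of the work.

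For part (ii), both estimators are residual covariances: $\widehat\Sigma_1=T^{-1}\sum_{t=1}^T\widetilde{\bm\varepsilon}_t(\widehat{\bm\alpha}_1)\widetilde{\bm\varepsilon}_t^\prime(\widehat{\bm\alpha}_1)$ by definition, and $\widehat\Sigma_2=\Sigma(\widehat{\bm\sigma}_2)=T^{-1}\sum_{t=1}^T\widetilde{\bm\varepsilon}_t(\widehat{\bm\alpha}_2)\widetilde{\bm\varepsilon}_t^\prime(\widehat{\bm\alpha}_2)$ because, for fixed $\bm\alpha$, $\widetilde{\mathcal L}_2$ is minimized over $\bm\sigma$ at the residual covariance. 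For each $j\in\{1,2\}$ I would Taylor‑expand around $\bm\alpha_0$ and show that substituting $\bm\alpha_0$ for $\widehat{\bm\alpha}_j$ is first‑order negligible: the leading cross term is $[T^{-1}\sum_{t=1}^T(\bm\varepsilon_{0t}\otimes D_t)](\widehat{\bm\alpha}_j-\bm\alpha_0)$ plus its transpose, and $T^{-1}\sum_t(\bm\varepsilon_{0t}\otimes D_t)\to_p E(\bm\varepsilon_{0t}\otimes D_t)=0$ by the ergodic theorem and the independence of $\bm\varepsilon_{0t}$ from $D_t$, while $\widehat{\bm\alpha}_j-\bm\alpha_0=O_p(T^{-1/2})$ by Theorems \ref{thmLS}--\ref{thmQMLE}, so this term is $o_p(T^{-1/2})$; the quadratic remainder and the initial‑value correction are $O_p(T^{-1})$ using $E(\|D_t\|^2)<\infty$. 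Hence, for both $j$,
\begin{align*}
\sqrt T\bigl({\rm vec}(\widehat\Sigma_j)-{\rm vec}(\Sigma_0)\bigr)=\frac{1}{\sqrt T}\sum_{t=1}^T\bigl\{{\rm vec}(\bm\varepsilon_{0t}\bm\varepsilon_{0t}^\prime)-{\rm vec}(\Sigma_0)\bigr\}+o_p(1),
\end{align*}
and the central limit theorem for these i.i.d.\ terms (available since $E(\|\bm\varepsilon_{0t}\|_2^4)<\infty$) gives the common limit $N(0_{N^2\times1},\mathcal K)$ with $\mathcal K=\var[{\rm vec}(\bm\varepsilon_{0t}\bm\varepsilon_{0t}^\prime)]$, which is exactly Theorem \ref{thmLS}(iii) for $\widehat\Sigma_1$ and yields (ii).
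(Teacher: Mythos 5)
Your proposal follows essentially the same route as the paper. The preliminary step (the $(\bm\alpha,\bm\sigma)$ cross-block of $\mathcal{J}_{2}$ vanishes because $D_t$ is $\mathcal{F}_{t-1}$-measurable and $E(\bm\varepsilon_{0t})=0$, while $\mathcal{I}_{2,\bm\alpha\bm\alpha}=\mathcal{J}_{2,\bm\alpha\bm\alpha}=E(D_t^\prime\Sigma_0^{-1}D_t)$, so the $\bm\alpha$-block of the sandwich is $\Xi_2=\mathcal{H}^{-1}$) is exactly the paper's step (a); your $2n_1\times 2n_1$ block matrix with diagonal blocks $E(D_t^\prime\Sigma_0^{-1}D_t)$, $E(D_t^\prime\Sigma_0 D_t)$ and off-diagonal $E(D_t^\prime D_t)$, made semi-positive definite by the pointwise square $\|\Sigma_0^{-1/2}D_t\bm u\pm\Sigma_0^{1/2}D_t\bm v\|_2^2\ge 0$ and then reduced by a Schur complement, is (up to a sign convention on the off-diagonal block) the paper's matrix $\mathcal{R}_t$; and your part (ii) — $\widehat\Sigma_2$ is the residual covariance at $\widehat{\bm\alpha}_2$ by the $\bm\sigma$-score equation, and replacing $\widehat{\bm\alpha}_j$ by $\bm\alpha_0$ is first-order negligible because $E[(\bm\varepsilon_{0t}\otimes I_N)\partial\bm\varepsilon_t(\bm\alpha_0)/\partial\bm\alpha^\prime]=0$, followed by the i.i.d.\ CLT for ${\rm vec}(\bm\varepsilon_{0t}\bm\varepsilon_{0t}^\prime)$ — is the paper's step (b) combined with the proof of Theorem \ref{thmLS}(iii). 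All of these steps are correct.

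The only place you stop short of a complete argument is the necessity direction of the equality characterization in (i). Your reduction is sound: vanishing of the Schur complement $\mathcal{I}_1-\mathcal{J}_1\mathcal{H}^{-1}\mathcal{J}_1=E(\bm R_t^\prime\bm R_t)$ forces $\Sigma_0 D_t=D_tP$ a.s.\ with $P=\mathcal{H}^{-1}\mathcal{J}_1$, and the coefficient matching you envisage is made rigorous by exactly the almost-sure linear independence of the regressor functionals established in the proof of Lemma \ref{lemma_postive_J_LS}. Carried out, the matching gives: the $\bm g$-columns of $D_t$ impose no restriction on $\Sigma_0$ (they only identify the corresponding block of $P$ as a copy of $\Sigma_0$), while each $\bm\omega$-column yields a relation of the type $(\Sigma_0-c\,I_N)G_{0k}=0$ for the associated MA matrices. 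Note, however, that the final passage from these relations to $\Sigma_0=\sigma^2 I_N$ is not automatic: it requires the columns of the MA matrices $G_{0,p+1},\ldots,G_{0,d}$ to pin down all of $\mathbb{R}^N$ with a common eigenvalue (for instance, if a single rank-deficient $G_{0,p+1}$ has its column space contained in an eigenspace of a non-scalar $\Sigma_0$, the identity $\Sigma_0 D_t=D_tP$ — and hence $\Xi_1=\Xi_2$ — holds without $\Sigma_0\propto I_N$, and Assumption \ref{assum_identifiability} only requires the $G_{0k}$ to be nonzero). So the gap you flag is real, but you should not feel alone in it: the paper itself disposes of this direction with the single phrase that it ``is obvious,'' offering no argument; your reduction to $\Sigma_0 D_t=D_tP$ is in fact more explicit than the published proof, and closing it honestly would require either a full-rank-type condition on some MA coefficient matrix or a more careful statement of when equality can occur.
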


Proposition \ref{thmcompare_efficient} indicates that the LSE $\widehat{\bm\alpha}_{LS}$ is less efficient than the QMLE $\widehat{\bm \alpha}$ when the innovation $\bm{\varepsilon}_{0t}$ in model \eqref{VARprs} has a general covariance matrix, and they are asymptotically equivalent if and only if the innovation $\bm \varepsilon_{0t}$ is homogeneous and uncorrelated among its components. 
Moreover, the LSE $\widehat{\Sigma}_{LS}$ and the QMLE $\widehat{\Sigma}$ are asymptotically equivalent. 
Thus, we recommend the QMLE in practice owing to its efficiency gain for general situations.

To measure the relative efficiency of $\widehat{\bm\alpha}$ and $\widehat{\bm\alpha}_{LS}$ numerically, we can compute the asymptotic relative efficiency (ARE) of $\widehat{\bm \alpha}_{LS}$ to $\widehat{\bm \alpha}$, which is defined as $\text{ARE}(\widehat{\bm \alpha}_{LS}, \widehat{\bm \alpha}) 
= \left(|\Xi_{QML}|/|\Xi_{LS}|\right)^{1/n_{\alpha}}$; see \cite{serfling2009approximation}. 
Given a true parameter $\bm \alpha_0$, we can approximate the asymptotic covariance matrices $\Xi_{LS}$ and $\Xi_{QML}$ using their sample averages from a large generated sequence, and then calculate the ARE.
Simulation results in Section \ref{Sec-Simulation} further verify the theoretical findings that the QMLE $\widehat{\bm \alpha}$ is asymptotically more efficient than the LSE $\widehat{\bm \alpha}_{LS}$ while the QMLE $\widehat{\Sigma}$ is asymptotically equivalent to the LSE $\widehat{\Sigma}_{LS}$.

\section{Simulation studies}\label{Sec-Simulation}

This section conducts simulation experiments to examine the finite sample performance of the proposed estimator QMLE and BIC for order selection. 
Moreover, we compare the asymptotic efficiency between the QMLE and the LSE. 

The scalable ARMA model \eqref{model_VARinf} is proposed as a tractable alternative to the VARMA model \citep{huang2025,zheng2025interpretable}. 
Thus, we generate data from VARMA models that can be rewritten as scalable ARMA models in \eqref{model_VARinf}. 
Specifically, we consider the following VARMA$(1,1)$ model as the data generating process (DGP):
\begin{equation} \label{DGP-VARMA}
	\bm y_t= \Phi \bm y_{t-1}+ \bm \varepsilon_{0t} - \Theta \bm \varepsilon_{0,t-1},
\end{equation}
where $\{\bm \varepsilon_{0t}\}$ is a sequence of $i.i.d.$ random vectors with zero mean and covariance matrix $\Sigma_0$,
and the AR and MA coefficient matrices are specified via Jordan decompositions $\Phi = B_1J_{\rm AR}B_1^{-1} $ and $\Theta =B_2J_{\rm MA}B_2^{-1}$, respectively.  
Here $B_1$ and $B_2$ are $N\times N$ orthogonal matrices, $J_{\rm AR} $ is a diagonal matrix, and $J_{\rm MA} = \diag\{\lambda_{1},\ldots,\lambda_{r},C_{1},\ldots,C_{s},0_{(N-r-2s)\times 1}^{\prime}\}$ with $C_j$ for $1 \leq j \leq s$ being $2\times 2 $ block matrices defined as follows
\begin{align*}
	C_{j} = \gamma_{j}\left(
	\begin{matrix}
		\cos(\varphi_{j}) & \sin (\varphi_{j})\\
		-\sin (\varphi_{j}) & \cos(\varphi_{j})
	\end{matrix}
	\right).
\end{align*}
Then, this VARMA$(1,1)$ model can be rewritten into a scalable ARMA model in \eqref{model_VARinf} with order $(p,r,s)=(1,r,s)$. 
Particularly, we set $J_{\rm AR}$ with moduli less than one and generate a random orthogonal matrix $B_1$, yielding $\Phi$. For the MA component, we specify $\lambda_i$'s and $C_j$'s to obtain $J_{\rm MA}$, and get $\Theta$ by independently randomly generating orthogonal matrix $B_2$. Consequently, the corresponding $G_k$'s can be obtained as in \cite{zheng2025interpretable}; see details in Appendix Section~\ref{Sec-relatedVARMA}.

\subsection{QMLE} \label{sim-estimation}
The first experiment aims to examine the finite sample performance of the QMLE $\widehat{\bm \theta}=(\widehat{\bm \alpha}^\prime, \widehat{\bm \sigma}^\prime)^\prime$ proposed in Section \ref{subSec-MLE}. 

The data is generated from model \eqref{DGP-VARMA} under the following setting: 
\begin{align*}
	\text{DGP1}:\; &N=6, \;(p,r,s)= (1,1,1), \; (\lambda_{01}, \gamma_{01} ,\varphi_{01})=(-0.8,0.8,\pi/4),\\
	&J_{\rm AR} = \diag\{0.5,0.5,0.5,-0.5,-0.5,-0.5\}. 
\end{align*}
Moreover, $\{\bm \varepsilon_{0t}\}$ follow the multivariate normal or Student's $t_5$ distribution with zero mean and covariance matrix $\Sigma_0 =   a \bm 1_N \bm 1_N^\prime + (1-a)I_N $, where $a = 0$ or $0.5$. 
The sample size is set to $T=500$, $750$ or $1000$, and $500$ replications are generated for each sample size. 
We employ BCD algorithm to solve the QMLE, where an evenly spaced grid is used for initialization with the step size being 0.2 for $\lambda_i$'s (or $\gamma_j$'s) and $0.1 \pi$ for $\varphi_j$'s, and $k_{\max}=100$ and $\delta=10^{-6}$ are specified for algorithm convergence.

Table \ref{tab_DGP1} summarizes the biases, empirical standard deviations (ESDs) and asymptotic standard deviations (ASDs) of the QMLE $\widehat{\bm \theta}$ under DGP1 with two settings of $\Sigma_{0}$, respectively. To save space, we only present the estimation results for a subset of the elements from the matrices $G_k$ and $\Sigma_0$.
The findings can be summarized as follows. 
First, as the sample size increases, most of the biases, ESDs and ASDs become smaller, and the ESDs get closer to the corresponding ASDs. This confirms the asymptotic results established in Theorem \ref{thmQMLE}.
Second, heavier-tailed $\{\bm \varepsilon_{0t}\}$ lead to larger ESDs and ASDs, especially for the covariance parameters $\sigma_{ij}$, which is as expected since the heavier tail of $\bm \varepsilon_{0t}$ can make the QMLE less efficient.


\subsection{Model selection} \label{sim-BIC}
The second experiment verifies the selection consistency of the proposed BIC in Section \ref{Sec-BIC}. The data is generated from model \eqref{DGP-VARMA} under the following setting: 
\begin{align*}
	\text{DGP2}:\;& N=6, \;(p,r,s)= (1,1,0),\;\lambda_{01} \in \{0.7,0.8,0.9\},\\
	&J_{\rm AR} = \diag\{-0.8,0.8,-0.7,0.7,-0.6,0.6\},
\end{align*}
where $\{\bm \varepsilon_{0t}\}$ is specified as in the first experiment.
This design makes DGP2 correspond to the scalable ARMA model \eqref{model_VARinf} with the true order $\iota^* = (p^*,r^*,s^*)=(1,1,0)$. 
The BIC in \eqref{BIC} is employed to select the order $\iota=(p,r,s)$ with $p_{\rm max}=r_{\rm max}=s_{\rm max}=2$. 
The underfitted, correctly selected, and overfitted models by BIC correspond to $\widehat{\iota}\in \mathscr{M}_{\rm under}, \widehat{\iota}\in \mathscr{M}_{\rm true}$, and $\widehat{\iota}\in \mathscr{M}_{\rm over},$ respectively, where 
$\mathscr{M}_{\rm under} = \{\iota \in \mathscr{M} \mid p< p^* \text{ or } r< r^* \text{ or } s< s^*\}$, $\mathscr{M}_{\rm true}
= \{ \iota \in \mathscr{M} \mid p= p^*, r= r^*, s= s^*\}$, and 
$\mathscr{M}_{\rm over}
= \{ \iota \in \mathscr{M} \mid p\ge p^*, r\ge r^*, s\ge  s^*\} \setminus  \{\iota^*\}$.
We consider the sample sizes $T=500$, $750$ or $1000$, and generate $500$ replications for each sample size.

Table \ref{tab_BIC} provides the percentages of underfitting, correct selection, and overfitting cases by the BIC under DGP2 with two settings of $\Sigma_{0}$, respectively. It can be seen that BIC performs better when the sample size $T$ increases, which supports the selection consistency of the BIC in Theorem \ref{thmBIC}. 
Moreover, the selection accuracy of BIC improves with larger values of $\lambda$, which is expected since the larger $\lambda$ implies stronger signal in order selection.

\subsection{Asymptotic efficiency comparison} \label{sim-ARE}
In the third experiment, we investigate the efficiency gain of the proposed QMLE in \eqref{QMLE}, relative to the LSE in \eqref{LSE}. 
Specifically, we calculate the $\text{ARE}(\widehat{\bm \alpha}_{LS}, \widehat{\bm \alpha})$ defined in Section \ref{Sec-AsymptoticResults-comparison}, which measures the relative efficiency of $\widehat{\bm\alpha}_{LS}$ to $\widehat{\bm\alpha}$. 

We generate a sequence of sample size $T=20000$ from model \eqref{DGP-VARMA} under two settings as follows:
\begin{align*}
	&N=6 , J_{\rm AR} = \diag\{-0.8,0.8,-0.7,0.7,-0.6,0.6\}\\
	& \text{DGP3(a)}:\; (p,r,s)= (1,2,0), \;\lambda_{01} = -\lambda_{02} =\lambda \;\text{with}\; \lambda \in \{0.2,0.4,0.6,0.8\}; \\
	&\text{DGP3(b)}:\; (p,r,s)= (1,0,1), \;(\gamma_{01},\varphi_{01})= (\gamma, \pi /4) \;\text{with}\; \gamma \in \{0.2,0.4,0.6,0.8\}.
\end{align*}
Varying $\lambda$ and $\gamma$ allows us to examine how relative efficiency of $\widehat{\bm\alpha}_{LS}$ to $\widehat{\bm\alpha}$ changes with these parameters. 
Specifically, larger values of $\lambda$ and $\gamma$ imply the roots of the MA characteristic equation $|I_N-\Theta B|=0$ are nearer to the unit circle. 
To evaluate the impact of $\Sigma_0$ on ARE, we let $\{\bm \varepsilon_{0t}\}$ follow either a multivariate normal or Student's $t_5$ distribution with zero mean and different covariance matrices, i.e., $\Sigma_{0} = a\bm 1_N \bm 1_N^\prime+(1-a)I_N$, where $a = 0, 0.6$, or $0.9$.
This design allows $a$ to control the degree of cross-sectional correlation, with a larger value of $a$ indicating that $\Sigma_{0}$ deviates further from a diagonal matrix.

Table \ref{tab_ARE} presents $\text{ARE}(\widehat{\bm \alpha}_{LS}, \widehat{\bm \alpha})$ under DGP3(a) and DGP3(b) with different settings for $\Theta$ and $\Sigma_0$.
It can be observed that the ARE is almost one when $\Sigma_0 = I_N$, which confirms the asymptotic equivalence between the QMLE and LSE for this case established in Proposition \ref{thmcompare_efficient}($i$). 
By contrast, the ARE decreases as the off-diagonal elements of $\Sigma_0$ increase.
This implies that the concurrent correlation between components of $\bm \varepsilon_{0t}$ will lead to efficiency gain of the QMLE relative to the LSE. 
Moreover, when $\Sigma_0$ is not proportional to an identity matrix, the efficiency gain of the QMLE over the LSE becomes slightly more pronounced as $\lambda$ or $\gamma$ increases. This observation aligns with the finding for univariate ARMA models in \cite{brockwell2009time}, which states that the efficiency gain of the MLE over the LSE becomes more significant as the roots of the MA characteristic equation move closer to the unit circle.

In summary, the above simulation results confirm the asymptotic results in Section \ref{Sec-AsymptoticResults}. 
Moreover, the results indicate that the QMLE outperforms the LSE in efficiency for the scalable ARMA model, and thus we suggest to use the QMLE method in practice. 

\section{An empirical example}\label{Sec-Realdata}

This section analyzes six monthly U.S. macroeconomic indicators from the FRED-MD database \citep{mccracken2016fred}, covering the period from January 1959 to August 2025.
The indicators include industrial production (IP), the civilian unemployment rate (UR), the consumer price index (CPI), real wages (RW), real personal consumption expenditures (CONS), and the federal funds rate (FFR).
This set of indicators is chosen to provide a parsimonious while comprehensive profile of real economic activity, labor market conditions, inflation dynamics, household demand, and monetary policy.

Following \citet{mccracken2016fred}, each series is transformed to be stationary as summarized in Table~\ref{table_preproceed}, and then standardized to have zero mean and unit variance.
As a result, we obtain a time series $\{\bm y_t\}$ of dimension $N=6$ and length $T= 798$.
Figure \ref{MACM_realdate} reports the multivariate autocorrelation matrix (MACM) of the standardized series up to lag 15, which reveals persistent cross-lag dependence with slow decay and occasional oscillatory patterns.
Such behavior motivates us to analyze ${\bm y_t}$ using the scalable ARMA model \eqref{model_VARinf} and the proposed inference tools in Section \ref{Sec-Methodology}.

We first fit model \eqref{model_VARinf} to the full sample. 
Using the BIC in Section \ref{Sec-BIC} with $p_{\rm max}=r_{\rm max}=s_{\rm max}=2$, the selected order is $(p,r,s)=(1,1,0)$. 
This together with the representation in \eqref{model_VARinf-2} implies that the dynamics can be decomposed into a short-run autoregressive component and a geometrically decaying infinite-lag component,
\begin{align} \label{model-realdata}
	\bm y_t = G_1 \bm y_{t-1} + \sum_{h=1}^{\infty}\lambda_{1}^h G_2 \bm y_{t-1-h} + \bm \varepsilon_{t}.
\end{align}
The estimated decay parameter is $\widehat{\lambda}_1 = 0.84$ with an asymptotic standard error of $0.02$, which is highly statistically significant. 
This strongly supports the presence of a persistent long-run adjustment channel. 
Additionally, the positive value of $\lambda_1$ indicates a gradual and stable long-run response to past shocks, consistent with gradual and persistent adjustment in macroeconomic dynamics over time.
Figure~\ref{coef_G} shows heatmaps of the estimates for coefficient matrices $G_1$ and $G_2$, with statistical significance indicated by asymptotic standard errors. 
From model \eqref{model-realdata}, $G_1$ captures short-run dynamics, and its heatmap reveals significant immediate cross-variable transmission. The matrix $G_2$ governs the long-run dynamics, with its heatmap depicting the pattern of persistent influences.
Moreover, Figure~\ref{coef_G} shows that Real Wages (RW) and Real Personal Consumption Expenditures (CONS) have immediate effects on most macroeconomic variables. In contrast, Industrial Production (IP), Unemployment Rate (UR), and Federal Funds Rate (FFR) primarily exert long-run influences on the system. 
Overall, these findings highlight the complex interplay of short- and long-term factors within the macroeconomic system.

To evaluate the forecasting performance of the fitted models, we next conduct one-step-ahead predictions through a rolling forecasting procedure with a fixed moving window. 
Specifically, we begin with the forecast origin $t_0=n_0+1= 739$, and fit the scalable ARMA model \eqref{model_VARinf} with order $(1,1,0)$ by the QMLE method using the data from the beginning to $n_0=738$. We then compute the one-month-ahead forecast for $\bm y_{t_0+1}$, denoted by $\widehat{\bm y}_{t_0+1}=\widehat{G}_1 \bm y_{t_0} + \sum_{h=1}^{\infty}\widehat{\lambda}_{1}^h \widehat{G}_2 \bm y_{t_0-h}$, where $\widehat{\lambda}_{1}$, $\widehat{G}_{1}$ and $\widehat{G}_{2}$ are the QMLEs. After each prediction, we move the window forward by one, repeating the procedure until all data are used.
In total, we obtain $T_{\rm{test}} = 60$ one-month-ahead forecasts, using data from the last 5 years as the test set.
To assess the forecasting accuracy, we calculate the root mean squared forecast error (RMSFE) and mean absolute forecast error (MAFE) defined as $\text{RMSFE}=[(NT_{\rm{test}})^{-1}\sum_{t=t_0+1}^T\|\widehat{\bm y}_{t}-\bm y_t\|_2^2]^{1/2}$ and $\text{MAFE}=(NT_{\rm{test}})^{-1}\sum_{t=t_0+1}^T\|\widehat{\bm y}_{t}-\bm y_t\|_1$, respectively. 
To facilitate comparison with the VAR benchmark, we further report the relative improvement, defined for method $\rm M$ as
\(\operatorname{RI}_{\rm RMSFE}(\rm M)=\{1-\operatorname{RMSFE}(\rm M)/\operatorname{RMSFE}({\rm VAR})\}\times100\%\) 
and
\(\operatorname{RI}_{\rm MAFE}(\rm M)=\{1-\operatorname{MAFE}(\rm M)/\operatorname{MAFE}({\rm VAR})\}\times100\%\). 
A positive (or negative) value indicates that method $\rm M$ yields a lower (or higher) forecast error than the VAR benchmark. 

To compare the forecasting performance of the scalable ARMA model fitted by the proposed QMLE method (SARMA$_1$) with other counterparts, we also perform the rolling forecasting procedure for the following methods: 
\begin{itemize}
	\item SARMA$_2$: Fit a scalable ARMA model with order $(p,r,s)=(1,1,0)$ to $\{\bm y_t\}$ using the LSE method;
	\item VAR: Fit a VAR($2$) model to $\{\bm y_t\}$ using the LSE method, where the order $2$ is selected by BIC;
	\item VARMA$_1$: Fit a VARMA($1,1$) model to $\{\bm y_t\}$ using the QMLE method;  
	\item VARMA$_2$: Fit a VARMA($1,1$) model to $\{\bm y_t\}$ using the iterative ordinary least squares (IOLS) method \citep{dias2018estimation}.
\end{itemize}
Since the LSE and QMLE of coefficient matrices are equivalent for the VAR model, we only compare our methods with the VAR model fitted by the least squares method. 
For comparison of the forecasting performance, Table \ref{table_forecast} reports the relative improvements for each method over the VAR benchmark.
We conclude that the scalable ARMA model has the smallest RMSFEs and MAFEs, and thus outperforms the VAR and VARMA models in the out-of-sample forecast. 
Particularly, the scalable ARMA model fitted by the QMLE approach performs the best.   
These results collectively affirm the advantages of the scalable ARMA framework in modeling and forecasting multivariate time series. 
As a result, we suggest the scalable ARMA model fitted by the QMLE method as a competitive and reliable approach for multivariate time series analysis.

\section{Conclusion and discussion}\label{Sec-Conclusion}

This paper develops theoretically grounded inferential tools for the scalable ARMA model, including the quasi-maximum likelihood method for estimation and the BIC method for order selection. 
By establishing the asymptotic properties of the proposed QMLE, we enable formal significance testing for the scalable ARMA model to uncover causal relationships among time series components. 
The QMLE is proven to be asymptotically more efficient than the LSE in the presence of cross-sectionally correlated or heteroskedastic innovations.
For efficient computation, we also introduce a BCD algorithm. 
Moreover, the superior efficiency and forecasting accuracy of the QMLE over the LSE are demonstrated via simulations and an application to macroeconomic indicators. 

The future research may extend this work in two ways. 
First, more robust inference methods can be investigated for the scalable ARMA model to relax the moment conditions on the process and innovation. 
Second, the proposed inference framework can be extended to more general multivariate time series models such as the scalable ARMA model with factors. 
We leave these topics for future investigation.

\begin{table}[htbp]
	\centering
	\renewcommand{\arraystretch}{1.3}
	\caption{\label{tab_DGP1}Biases ($\times 10$), ESDs ($\times 10$) and ASDs ($\times 10$) of the QMLE for DGP1, where $\{\bm \varepsilon_{0t}\}$ follow the multivariate normal or Student's $t_5$ distribution with zero mean and covariance matrix $\Sigma_0 =   a \bm 1_N \bm 1_N^\prime + (1-a)I_N $.}
	\begin{adjustbox}{width=\textwidth}
		\begin{tabular}{crrrrrrrrrrrrr}
			\toprule
			&       & \multicolumn{3}{c}{Normal ($a=0$)} & \multicolumn{3}{c}{$t_5$ ($a=0$)} & \multicolumn{3}{c}{Normal ($a=0.5$)} & \multicolumn{3}{c}{$t_5$ ($a=0.5$)} \\
			\cmidrule(lr){3-5} \cmidrule(lr){6-8} \cmidrule(lr){9-11} \cmidrule(lr){12-14} 
			& \multicolumn{1}{c}{$T$} & \multicolumn{1}{c}{Bias} & \multicolumn{1}{c}{ESD} & \multicolumn{1}{c}{ASD} & \multicolumn{1}{c}{Bias} & \multicolumn{1}{c}{ESD} & \multicolumn{1}{c}{ASD} & \multicolumn{1}{c}{Bias} & \multicolumn{1}{c}{ESD} & \multicolumn{1}{c}{ASD} & \multicolumn{1}{c}{Bias} & \multicolumn{1}{c}{ESD} & \multicolumn{1}{c}{ASD} \\
			\midrule
 $\lambda$ & 500 & -0.119 & 0.728 & 0.362 & -0.111 & 0.722 & 0.366 & 0.071 & 1.279 & 0.528 & 0.109 & 1.237 & 0.538 \\
 & 750 & -0.063 & 0.506 & 0.323 & -0.055 & 0.541 & 0.326 & 0.057 & 0.855 & 0.444 & 0.027 & 0.900 & 0.442 \\
 & 1000 & -0.028 & 0.410 & 0.293 & -0.034 & 0.401 & 0.294 & 0.021 & 0.623 & 0.383 & 0.025 & 0.651 & 0.383 \\
$\gamma$ & 500 & 0.094 & 0.498 & 0.260 & 0.100 & 0.499 & 0.261 & 0.089 & 0.770 & 0.381 & 0.090 & 0.807 & 0.383 \\
 & 750 & 0.049 & 0.340 & 0.227 & 0.049 & 0.341 & 0.228 & 0.041 & 0.536 & 0.333 & 0.066 & 0.527 & 0.331 \\
 & 1000 & 0.027 & 0.283 & 0.203 & 0.029 & 0.264 & 0.203 & 0.021 & 0.405 & 0.298 & 0.036 & 0.424 & 0.298 \\
$\phi$ & 500 & -0.030 & 0.610 & 0.313 & -0.037 & 0.599 & 0.314 & -0.074 & 0.917 & 0.431 & -0.096 & 0.970 & 0.432 \\
 & 750 & -0.008 & 0.401 & 0.275 & -0.018 & 0.420 & 0.277 & -0.040 & 0.610 & 0.374 & -0.038 & 0.610 & 0.372 \\
 & 1000 & -0.002 & 0.354 & 0.248 & -0.012 & 0.352 & 0.249 & -0.037 & 0.498 & 0.335 & -0.038 & 0.513 & 0.335 \\
$g_{1,11}$ & 500 & -0.023 & 0.454 & 0.438 & -0.036 & 0.464 & 0.441 & -0.092 & 0.605 & 0.568 & -0.062 & 0.603 & 0.571 \\
 & 750 & -0.015 & 0.372 & 0.358 & -0.012 & 0.372 & 0.360 & -0.038 & 0.495 & 0.464 & -0.016 & 0.495 & 0.466 \\
 & 1000 & -0.020 & 0.329 & 0.310 & -0.024 & 0.329 & 0.311 & -0.043 & 0.423 & 0.402 & -0.038 & 0.424 & 0.403 \\
$g_{2,11}$ & 500 & 0.189 & 0.764 & 0.640 & 0.211 & 0.773 & 0.643 & 0.191 & 1.220 & 0.946 & 0.270 & 1.207 & 0.956 \\
 & 750 & 0.138 & 0.587 & 0.528 & 0.130 & 0.591 & 0.531 & 0.163 & 0.858 & 0.742 & 0.153 & 0.925 & 0.744 \\
 & 1000 & 0.094 & 0.484 & 0.461 & 0.093 & 0.484 & 0.461 & 0.094 & 0.664 & 0.630 & 0.099 & 0.679 & 0.632 \\
$g_{3,11}$ & 500 & -0.057 & 0.538 & 0.499 & -0.061 & 0.525 & 0.500 & -0.038 & 0.887 & 0.697 & -0.083 & 0.845 & 0.702 \\
 & 750 & -0.039 & 0.413 & 0.409 & -0.027 & 0.410 & 0.409 & 0.001 & 0.685 & 0.557 & -0.030 & 0.592 & 0.545 \\
 & 1000 & -0.034 & 0.371 & 0.356 & -0.031 & 0.358 & 0.356 & -0.018 & 0.519 & 0.475 & -0.038 & 0.493 & 0.472 \\
$g_{4,11}$ & 500 & 0.058 & 0.433 & 0.397 & 0.077 & 0.440 & 0.399 & 0.099 & 0.631 & 0.535 & 0.079 & 0.642 & 0.536 \\
 & 750 & 0.046 & 0.351 & 0.326 & 0.034 & 0.346 & 0.327 & 0.082 & 0.470 & 0.426 & 0.022 & 0.475 & 0.422 \\
 & 1000 & 0.039 & 0.294 & 0.284 & 0.033 & 0.299 & 0.284 & 0.071 & 0.387 & 0.364 & 0.035 & 0.394 & 0.364 \\
$\sigma_{11}$ & 500 & -0.495 & 0.580 & 0.597 & -0.487 & 1.090 & 0.973 & -0.523 & 0.595 & 0.596 & -0.522 & 1.088 & 0.966 \\
 & 750 & -0.327 & 0.486 & 0.497 & -0.294 & 0.956 & 0.834 & -0.331 & 0.521 & 0.497 & -0.309 & 0.952 & 0.831 \\
 & 1000 & -0.254 & 0.426 & 0.433 & -0.295 & 0.749 & 0.732 & -0.261 & 0.450 & 0.433 & -0.305 & 0.745 & 0.730 \\
$\sigma_{21}$ & 500 & 0.010 & 0.436 & 0.424 & 0.015 & 0.691 & 0.616 & -0.310 & 0.466 & 0.471 & -0.282 & 0.807 & 0.706 \\
 & 750 & -0.003 & 0.352 & 0.353 & 0.012 & 0.635 & 0.546 & -0.197 & 0.390 & 0.392 & -0.163 & 0.768 & 0.620 \\
 & 1000 & 0.003 & 0.315 & 0.308 & 0.033 & 0.491 & 0.478 & -0.155 & 0.333 & 0.343 & -0.135 & 0.540 & 0.550 \\
$\sigma_{31}$ & 500 & -0.048 & 0.450 & 0.600 & -0.087 & 0.700 & 0.955 & -0.317 & 0.466 & 0.596 & -0.371 & 0.798 & 0.952 \\
 & 750 & -0.027 & 0.358 & 0.500 & -0.000 & 0.610 & 0.849 & -0.195 & 0.389 & 0.496 & -0.175 & 0.773 & 0.856 \\
 & 1000 & -0.015 & 0.315 & 0.437 & -0.017 & 0.501 & 0.771 & -0.146 & 0.335 & 0.436 & -0.170 & 0.581 & 0.767 \\
			\bottomrule
		\end{tabular}%
	\end{adjustbox}
\end{table}%

\begin{table}[htbp]
	\centering
	\caption{\label{tab_BIC} Percentages of underfitted, correctly selected and overfitted cases under DGP2, where the BIC is based on the QMLE, and $\bm \varepsilon_{0t}$ follow the multivariate normal or Student's $t_5$ distribution with zero mean and covariance matrix $\Sigma_0 =   a \bm 1_N \bm 1_N^\prime + (1-a)I_N $. }
		\renewcommand{\arraystretch}{1.1}
	\begin{adjustbox}{width=\textwidth}
		\begin{tabular}{rrrrrrrrrrrrrr}
			\toprule
			&       & \multicolumn{3}{c}{Normal ($a=0$)} & \multicolumn{3}{c}{$t_5$ ($a=0$)} & \multicolumn{3}{c}{Normal ($a=0.5$)} & \multicolumn{3}{c}{$t_5$ ($a=0.5$)} \\
			\cmidrule(lr){3-5} \cmidrule(lr){6-8} \cmidrule(lr){9-11} \cmidrule(lr){12-14}
		\multicolumn{1}{c}{$\lambda$}	&  \multicolumn{1}{c}{$T$}  & \multicolumn{1}{l}{Under} & \multicolumn{1}{l}{Exact} & \multicolumn{1}{l}{Over} & \multicolumn{1}{l}{Under} & \multicolumn{1}{l}{Exact} & \multicolumn{1}{l}{Over} & \multicolumn{1}{l}{Under} & \multicolumn{1}{l}{Exact} & \multicolumn{1}{l}{Over} & \multicolumn{1}{l}{Under} & \multicolumn{1}{l}{Exact} & \multicolumn{1}{l}{Over} \\
			\midrule
0.7 & 500 & 95.0 & 5.0 & 0.0 & 93.0 & 7.0 & 0.0 & 92.0 & 8.0 & 0.0 & 89.6 & 10.4 & 0.0 \\
0.7 & 750 & 18.8 & 81.2 & 0.0 & 18.6 & 81.4 & 0.0 & 11.8 & 88.2 & 0.0 & 8.2 & 91.8 & 0.0 \\
0.7 & 1000 & 0.2 & 99.8 & 0.0 & 0.2 & 99.8 & 0.0 & 0.0 & 100.0 & 0.0 & 0.2 & 99.8 & 0.0 \\
\addlinespace[3pt]
0.8 & 500 & 45.2 & 54.8 & 0.0 & 43.2 & 56.8 & 0.0 & 33.6 & 66.4 & 0.0 & 31.8 & 68.2 & 0.0 \\
0.8 & 750 & 0.4 & 99.6 & 0.0 & 0.4 & 99.6 & 0.0 & 0.2 & 99.8 & 0.0 & 0.2 & 99.8 & 0.0 \\
0.8 & 1000 & 0.2 & 99.8 & 0.0 & 0.0 & 100.0 & 0.0 & 0.0 & 100.0 & 0.0 & 0.2 & 99.8 & 0.0 \\
\addlinespace[3pt]
0.9 & 500 & 7.2 & 92.8 & 0.0 & 8.0 & 92.0 & 0.0 & 5.8 & 94.2 & 0.0 & 4.4 & 95.6 & 0.0 \\
0.9 & 750 & 0.4 & 99.6 & 0.0 & 0.4 & 99.6 & 0.0 & 0.2 & 99.8 & 0.0 & 0.2 & 99.8 & 0.0 \\
0.9 & 1000 & 0.2 & 99.8 & 0.0 & 0.0 & 100.0 & 0.0 & 0.0 & 100.0 & 0.0 & 0.0 & 100.0 & 0.0 \\
\bottomrule
\end{tabular}
\end{adjustbox}
\end{table}

\begin{table}[htbp]
	\centering
	\renewcommand{\arraystretch}{0.85}
	\caption{\label{tab_ARE} $\text{ARE}(\widehat{\bm \alpha}_{LS}, \widehat{\bm \alpha})$ under DGP3(a) and DGP3(b), where $\{\bm \varepsilon_{0t}\}$ follow the multivariate normal or Student's $t_5$ distribution with zero mean and covariance matrix $\Sigma_{0} = a\bm 1_N \bm 1_N^\prime+(1-a)I_N$. }
	\begin{tabular}{crrrrrrrrr}
		\toprule
		&& \multicolumn{4}{c}{$\lambda$} & \multicolumn{4}{c}{$\gamma$}  \\
		\cmidrule(lr){3-6} \cmidrule(lr){7-10}
		&\multicolumn{1}{c}{$a$} & 0.2 & 0.4 & 0.6 & 0.8 & 0.2 & 0.4 & 0.6 & 0.8 \\
\midrule
Normal & 0.0 & 1.000 & 1.000 & 1.000 & 1.000 & 1.000 & 1.000 & 1.000 & 1.000 \\
 & 0.6 & 0.983 & 0.984 & 0.983 & 0.983 & 0.984 & 0.984 & 0.983 & 0.983 \\
 & 0.9 & 0.961 & 0.960 & 0.958 & 0.957 & 0.961 & 0.961 & 0.958 & 0.956 \\
\midrule
$t_5$ & 0.0 & 1.000 & 1.000 & 1.000 & 1.000 & 1.000 & 1.000 & 1.000 & 1.000 \\
 & 0.6 & 0.983 & 0.984 & 0.983 & 0.983 & 0.984 & 0.984 & 0.983 & 0.983 \\
 & 0.9 & 0.961 & 0.960 & 0.958 & 0.957 & 0.961 & 0.961 & 0.958 & 0.956 \\
		\bottomrule
	\end{tabular}%
\end{table}%


\begin{table}[htbp]
	\centering
	\caption{\label{table_preproceed} Description and transformation of the six macroeconomic indicators, where transformations are coded as : $1=$ first difference, $2=$ first difference of the log series, and $3=$ second difference of the log series.}
	\begin{tabular}{llr}
		\toprule
		\multicolumn{1}{l}{Variable}  & \multicolumn{1}{l}{Description} & \multicolumn{1}{c}{Transformation}\\
		\midrule
		IP & Industrial Production Index & 2\\
		UR & Civilian Unemployment Rate & 1\\
		CPI & Consumer Price Index & 3\\
		RW & Real wages  & 2\\
		CONS & Real Personal Consumption Expenditures & 2\\
		FFR & Federal Funds Rate & 1\\
		\bottomrule
	\end{tabular}%
\end{table}

\begin{figure}[htbp]
    \begin{center}
        \includegraphics[width=\textwidth]{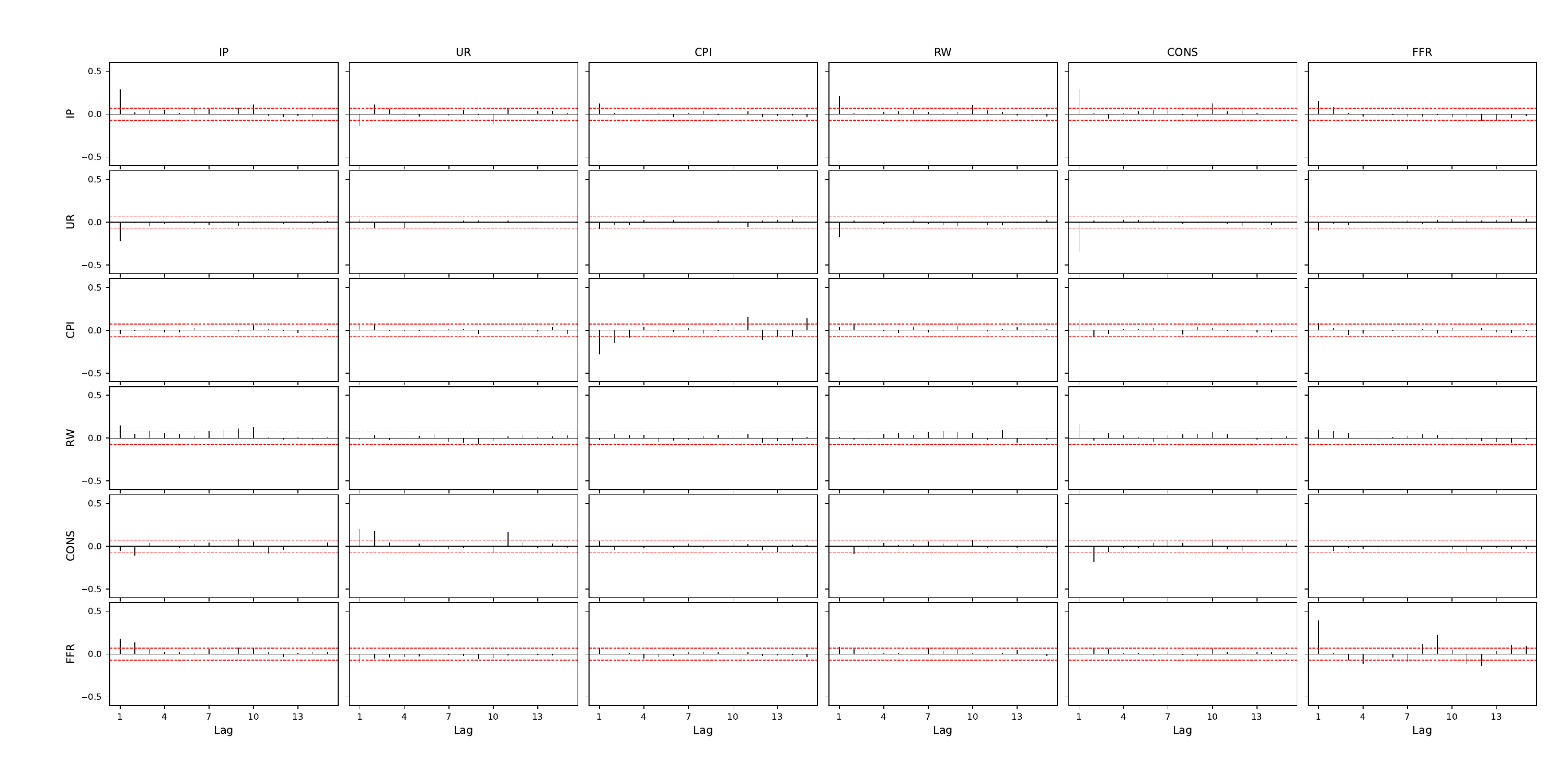}
    \end{center}
    \caption{\label{MACM_realdate} The MACM plots of $\{\bm y_t\}$ with the 95\% confidence intervals in dashed lines.}
\end{figure}
\begin{figure}[htbp]
	\centering
	\centerline{\includegraphics[width=\textwidth]{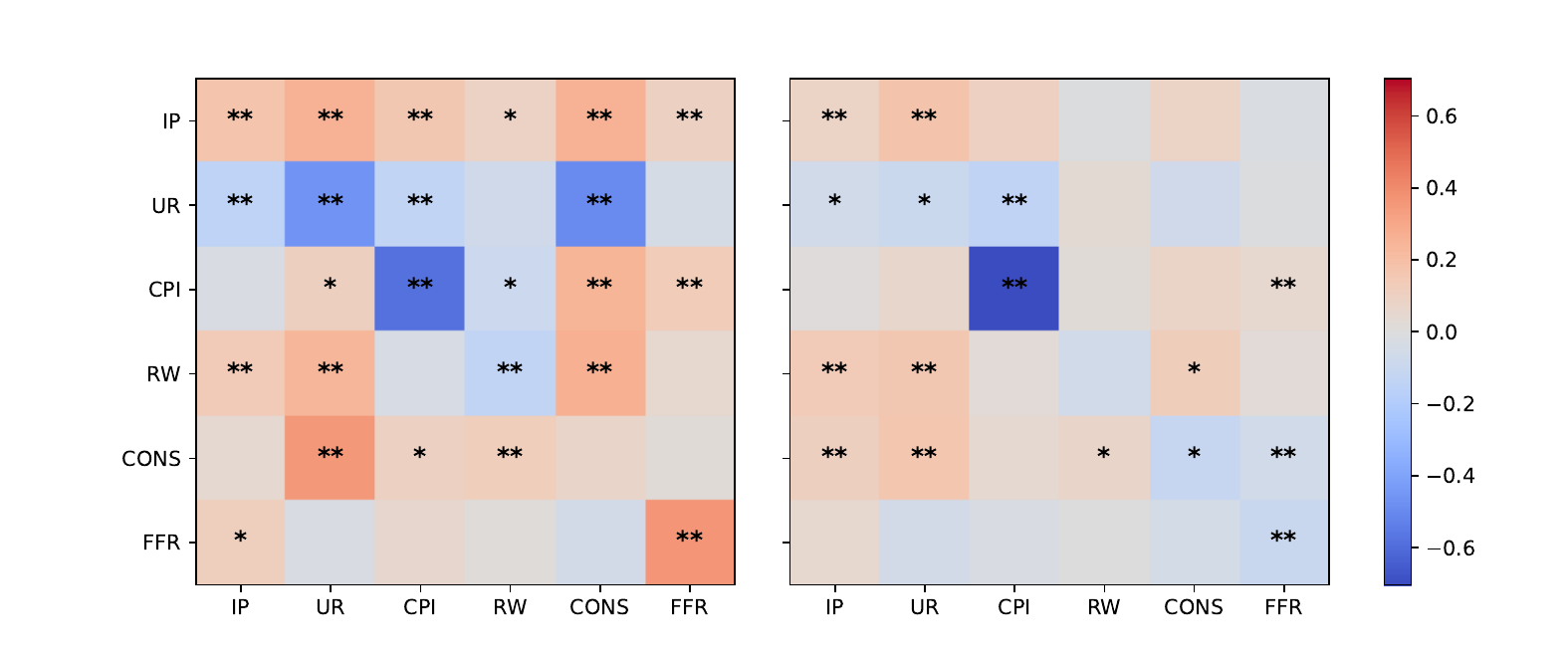}}
	\caption{\label{coef_G} Heatmap of the estimates for coefficient matrices $G_1$ (left) and $G_2$ (right) in model \eqref{model-realdata}, respectively. Statistically significant coefficients are marked with asterisks: $**$ for significance at the $1\%$ level and $*$ for significance at the $5\%$ level. The color scale indicates the magnitude of the coefficients.}
\end{figure}

\begin{table}[htbp]
	\centering
	\caption{\label{table_forecast} Relative improvement in out-of-sample prediction accuracy over the VAR benchmark for six U.S. macroeconomic indicators. SARMA$_1$ (or SARMA$_2$) represents the scalable ARMA model fitted by the QMLE (or LSE) method and VARMA$_1$ (or VARMA$_2$) represents the VARMA model fitted by the QMLE (or IOLS) method. The largest improvement in each row is marked in bold.}
	\begin{tabular}{lrrrr}
		\toprule
		& \multicolumn{1}{c}{SARMA$_1$} & \multicolumn{1}{c}{SARMA$_2$} & \multicolumn{1}{c}{VARMA$_1$} & \multicolumn{1}{c}{VARMA$_2$} \\
		\midrule
		RI-RMSFE (\%) & \textbf{7.24} & 6.72 & 5.94 & -6.98 \\
	RI-MAFE (\%) & \textbf{3.20} & 2.85 & 1.25 & -2.85 \\
		\bottomrule
	\end{tabular}%
	\label{tab:addlabel}%
\end{table}%

\newpage

\section*{Appendix}\label{appendix}
This Appendix provides technical details for Theorems \ref{thmQMLE}--\ref{thmBIC}, Propositions \ref{thmLS}--\ref{thmcompare_efficient} and Remark \ref{order-condition}, and introduces Lemmas \ref{lemma_boundness_partial}--\ref{lemma_score_asy_Norm_LS} which give some preliminary results for proving Theorem \ref{thmQMLE} and Proposition \ref{thmLS}. 
		It also includes the details on the connection between  scalable ARMA models with vector ARMA models. 
		Throughout the Appendix, 
		we denote vectors by small boldface letters $\bm{x},\bm{y},\ldots$ and matrices by capital letters $A,B,\ldots$.
		For a vector $\bm{x}$, 
		$\|\bm{x}\|_p$ denotes its $\ell_p$ norm (i.e., $\|\bm{x}\|_p=(\sum_{i}|x_i|^p)^{1/p}$). 
		For a matrix $A \in \mathbb{R}^{d_1 \times d_2}$, 
		denote by $\|A\|_{\operatorname{op}}$ its operator norm (i.e., $\|A\|_{\operatorname{op}}=\sup_{\|\bm x\|_2 =1}\|A \bm x\|$), 
		by $\|A\|$ its sub-multiplicative matrix norm (i.e., $\|AB\|\le \|A\| \|B\|$ for any matrices $A$ and $B$), 
		by $A^{\prime}$ its transpose, 
		and by $\text{vec}(A)$ its vectorization. 
		For a square matrix $B$, $|B|$, $\text{tr}(B)$ and $\text{vech}({B})$ are the determinant, trace, and the lower triangular-stacking vector of $B$, respectively. 
		For square matrices $B_j$'s, $\diag\{B_{1}, \ldots, B_{m}\}$ denotes the block diagonal matrix whose main diagonal consists of $B_{1}, \ldots, B_{m}$. 
		Denote $\otimes$ as the Kronecker product. 
		Moreover, $\bm 1_N$ is the $N$-dimensional vector with all elements being one, $0_{N\times T}$ is the $N\times T$ zero matrix and $I_N$ is the $N\times N$ identity matrix. 
		In addition, $O(1)$ denotes a bounded series of nonstochastic variables, 
		$O_p(1)$ denotes a sequence of random variables bounded in probability, 
		$o_p(1)$ denotes a sequence of random variables converging to zero in probability, 
		$\rightarrow_{p}$ denotes the convergence in probability 
		and $\rightarrow_{d}$ denotes the convergence in distribution.
\par

\setcounter{section}{0}
\setcounter{equation}{0}
\setcounter{figure}{0}
\setcounter{table}{0}
\setcounter{theorem}{0}
\setcounter{thm}{0}
\setcounter{lemma}{0}
\setcounter{corollary}{0}
\setcounter{cor}{0}
\setcounter{proposition}{0}
\setcounter{assum}{0}
\setcounter{definition}{0}
\setcounter{example}{0}
\setcounter{remark}{0}
\def\theequation{A\arabic{section}.\arabic{equation}}
\def\thesection{A\arabic{section}}
\renewcommand{\theHsection}{appendix.\arabic{section}}
\renewcommand{\theHsubsection}{appendix.\arabic{section}.\arabic{subsection}}
\renewcommand{\theHequation}{appendix.\arabic{section}.\arabic{equation}}
\renewcommand{\theHfigure}{appendix.\arabic{figure}}
\renewcommand{\theHtable}{appendix.\arabic{table}}
\renewcommand{\thetheorem}{A\arabic{theorem}}
\renewcommand{\thethm}{A\arabic{thm}}
\renewcommand{\thelemma}{A\arabic{lemma}}
\renewcommand{\thecorollary}{A\arabic{corollary}}
\renewcommand{\thecor}{A\arabic{cor}}
\renewcommand{\theproposition}{A\arabic{proposition}}
\renewcommand{\theassum}{A\arabic{assum}}
\renewcommand{\thedefinition}{A\arabic{definition}}
\renewcommand{\theexample}{A\arabic{example}}
\renewcommand{\theremark}{A\arabic{remark}}
\renewcommand{\thefigure}{A\arabic{figure}}
\renewcommand{\thetable}{A\arabic{table}}

\section{Lemmas}

\begin{lemma}\label{lemma_boundness_partial}
	Under Assumption \ref{assum_space}, there exists a constant $C>0$ such that
	\begin{align*}
		(i)& \sup_{\bm \alpha\in \mathcal{A}} \|\bm \varepsilon_t(\bm \alpha)\| \le C\xi_{{\rho},t-1} +\|\bm \varepsilon_{0t}\|;
		\quad \quad (ii) \sup_{\bm \alpha \in \mathcal{A}} \left\| \frac{\partial \bm \varepsilon_{t}^\prime(\bm \alpha)}{\partial \bm \alpha } \right\| \le C\xi_{{\rho},t-1};\\
		(iii)&\sup_{\bm \alpha\in \mathcal{A}}\left\|\frac{\partial(\operatorname{vec}(\partial \bm{\varepsilon}_t^\prime (\bm{\alpha}) /\partial \bm{\alpha}))}{\partial\bm{\alpha}^\prime}\right\| \le C \xi_{{\rho},t-p-1}, 
	\end{align*}
	where $\xi_{\rho t}= 1+\sum_{h=0}^{\infty}\rho^{h}\|\bm y_{t-h}\|$ with a constant $\rho\in(0,1)$. 
\end{lemma}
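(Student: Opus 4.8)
The plan is to dominate $\bm\varepsilon_t(\bm\alpha)$ and each of its first and second derivatives term by term in the lagged values $\bm y_{t-h}$, using the explicit expansions \eqref{de_epsilon_alpha}--\eqref{de_epsilon_alpha^2}, and then to show that, uniformly over $\mathcal A$, the scalar coefficient of $\bm y_{t-h}$ in each expression is bounded by a constant multiple of $\rho^{h-1}$ (for parts (i)--(ii)) or of $\rho^{h-p-1}$ (for part (iii)). Since none of the bounds produced will depend on $\bm\alpha$, taking the supremum over $\mathcal A$ is trivial once the termwise estimates are in place.

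First I would fix the uniform constants coming from compactness. Since $\mathcal A$ is compact (Assumption \ref{assum_space}(i)), $\sup_{\bm\alpha\in\mathcal A}\max_{1\le k\le d}\|G_k\|\le C$; and since the decay parameters lie in a compact $\Lambda\subset(0,\rho)$ with $\rho<1$ (Assumption \ref{assum_space}(ii)), one has $\bar\rho:=\sup\Lambda<\rho$. The workhorse estimate is then that for each fixed nonnegative integer $j$, $m^{j}\bar\rho^{m}=m^{j}(\bar\rho/\rho)^{m}\rho^{m}\le C_j\rho^{m}$ for all $m\ge0$, because the geometric factor $(\bar\rho/\rho)^m$ swallows the polynomial prefactor. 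Coupled with the bound $|\ell_{hk}(\bm\omega)|\le\bar\rho^{h-p}$ for $h\ge p+1$ (the nonzero entries of $L(\bm\omega)$ in \eqref{L_omega} are $\lambda_i^{h-p}$, $\gamma_j^{h-p}\cos((h-p)\varphi_j)$, or $\gamma_j^{h-p}\sin((h-p)\varphi_j)$, all of modulus at most $\bar\rho^{h-p}$, while $\ell_{hk}=\delta_{hk}$ for $k\le p$), this yields $\|A_h(\bm\alpha)\|\le C\bar\rho^{h-p}$ and shows that every series appearing in \eqref{de_epsilon_alpha}--\eqref{de_epsilon_alpha^2} has its $h$-th summand of norm at most $C\rho^{h-p-1}\|\bm y_{t-h}\|$.

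With this in hand, (ii) and (iii) follow quickly. Each block of $\partial\bm\varepsilon_t^\prime(\bm\alpha)/\partial\bm\alpha$ in \eqref{de_epsilon_alpha} is either a series over $h\ge p+1$ with summand bounded by $C\rho^{h-1}\|\bm y_{t-h}\|$, or the vector $\bm z_t$ whose components satisfy $\|\bm z_{kt}\|\le\sum_{h\ge1}|\ell_{hk}|\|\bm y_{t-h}\|\le C\sum_{h\ge1}\rho^{h-1}\|\bm y_{t-h}\|$; summing the finitely many blocks and comparing with $\xi_{\rho,t-1}=1+\sum_{h\ge1}\rho^{h-1}\|\bm y_{t-h}\|$ gives (ii). For (iii), every nonzero block listed in \eqref{de_epsilon_alpha^2} is a series over $h\ge p+1$ (or $h\ge p+2$) with summand bounded by $C\rho^{h-p-1}\|\bm y_{t-h}\|$, and $\xi_{\rho,t-p-1}=1+\sum_{h\ge p+1}\rho^{h-p-1}\|\bm y_{t-h}\|$, so summing the blocks gives (iii). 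For (i), write $\bm\varepsilon_t(\bm\alpha)=\bm y_t-\sum_{h=1}^{p}G_h\bm y_{t-h}-\sum_{h=p+1}^{\infty}A_h(\bm\alpha)\bm y_{t-h}$: the infinite tail is $\le C\xi_{\rho,t-1}$ by the above, the finite AR sum is $\le C\xi_{\rho,t-1}$ because $p$ is fixed so $\rho^{h-1}\ge\rho^{p-1}$ for $1\le h\le p$, and the remaining $\|\bm y_t\|$ is handled by the model at the truth, $\bm y_t=\bm\varepsilon_{0t}+\sum_{h\ge1}A_h(\bm\alpha_0)\bm y_{t-h}$, whence $\|\bm y_t\|\le\|\bm\varepsilon_{0t}\|+C\xi_{\rho,t-1}$; collecting terms yields $\sup_{\bm\alpha\in\mathcal A}\|\bm\varepsilon_t(\bm\alpha)\|\le C\xi_{\rho,t-1}+\|\bm\varepsilon_{0t}\|$.

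The only genuinely delicate part is the bookkeeping in the second step: stating the uniform geometric majorization $m^{j}\bar\rho^{m}\le C_j\rho^{m}$ cleanly and then, for each of the dozen or so derivative blocks in \eqref{de_epsilon_alpha^2}, keeping track of exactly which power of $(h-p)$ and which shift of the exponent arises, so that the induced geometric series match the stated $\xi_{\rho,t-1}$ or $\xi_{\rho,t-p-1}$; beyond this there is no conceptual obstacle, only termwise domination and the triangle inequality.
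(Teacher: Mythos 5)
Your proposal is correct and follows essentially the same route as the paper's proof: bound $\|A_h(\bm\alpha)\|$ and the entries $\ell_{hk}(\bm\omega)$ geometrically using compactness of $\mathcal A$ and $\Lambda\subset(0,\rho)$, absorb the polynomial factors $(h-p)^j$ via the strict gap $\bar\rho<\rho$ (the paper's $h(\rho_1/\rho)^{h-1}<C$ step), and dominate each derivative block in \eqref{de_epsilon_alpha}--\eqref{de_epsilon_alpha^2} termwise against $\xi_{\rho,t-1}$ or $\xi_{\rho,t-p-1}$. Your handling of (i) — substituting $\bm y_t=\bm\varepsilon_{0t}+\sum_h A_h(\bm\alpha_0)\bm y_{t-h}$ — is algebraically the same as the paper's bound on $\bm\varepsilon_t(\bm\alpha)-\bm\varepsilon_{0t}=\sum_h[A_{0h}-A_h(\bm\alpha)]\bm y_{t-h}$, so no substantive difference remains.
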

\begin{proof}
	Recall that $A_h(\bm \alpha) = G_h$ for $1\le h\le p$ and $A_h(\bm \alpha) = \sum_{k=p+1}^{d} \ell_{hk}(\bm \omega) G_k$ for $j \ge p+1$ by \eqref{model_VARinf}. 
	Then by Assumption \ref{assum_space}, there exists a positive
	constant $C$ such that $\|G_k\|<C$ for $1\le k\le d$, and then we can obtain that $\|A_h(\bm \alpha)\| = \|G_h\| \le C$ for $1\le h\le p$, and $\|A_h(\bm \alpha)\| = \sum_{k=p+1}^{d}|\ell_{hk}(\bm\omega)|\|G_k\| \le (r+2s)C {\rho}^{h-p}$ for $h \ge p+1$. 
	Let $C^\star =C  (r+2s){\rho}^{-p} $, it follows that 
	\begin{align}\label{||A_h||}
		\|A_h(\bm \alpha)\| \le C^\star {\rho}^h \;\; \text{for} \;\; h \ge 1.
	\end{align}
	
	We first show ($i$). 
	Recall that $\bm{\varepsilon}_t(\bm{\alpha}) = \bm y_t - \sum_{h=1}^{\infty}A_{h}(\bm{\alpha})\bm y_{t-h}$ and $\bm \varepsilon_{0t} = \bm y_t - \sum_{h=1}^{\infty}A_{0h}\bm y_{t-h}$. 
	Hence it holds that $\bm \varepsilon_{t}(\bm \alpha)- \bm \varepsilon_{0t} =\sum_{h=1}^{\infty} [A_{0h} - A_{h}(\bm{\alpha})] \bm y_{t-h}$. 
	Moreover, by Assumption \ref{assum_space}($i$) and \eqref{||A_h||}, it can be shown that 
	\begin{align*}
		\sup_{\bm \alpha \in \mathcal{A}}\|A_{0h} - A_{h}(\bm{\alpha})\|\le 2 \sup_{\bm \alpha\in \mathcal{A}}\|A_{h}(\bm{\alpha})\| = O({\rho}^h). 
	\end{align*}
	It then follows that 
	\begin{align*}
		\sup_{\bm \alpha\in \mathcal{A}} \|\bm \varepsilon_{t}(\bm \alpha)\|\le  
		\sup_{\bm \alpha\in \mathcal{A}}\|\bm \varepsilon_{t}(\bm \alpha)- \bm \varepsilon_{0t}\|+\|\bm \varepsilon_{0t}\| \le 
		\sum_{h=1}^{\infty } O({\rho}^h) \|\bm y_{t-h}\|  +\|\bm \varepsilon_{0t}\| \le C \xi_{{\rho},t-1}+ \|\bm \varepsilon_{0t}\|
	\end{align*}
	holds for some constant $C > 0$. Thus ($i$) holds.
	
	Next we show ($ii$). 
	Suppose that $\rho_1 < \rho <1$, 
	it can be easily shown that $h (\rho_1/\rho)^{h-1} <C$ holds for $h\ge1$ and some constant $C$. 
	It follows that 
	\begin{align*}
		1+\sum_{h=1}^{\infty} h \rho_1^{h-1}\| \bm y_{t-h}\| 
		\le 1+\sum_{h=1}^{\infty} h \left(\frac{\rho_1}{\rho}\right)^{h-1}\rho^{h-1} \| \bm y_{t-h}\|
		<C \xi_{{\rho},t-1}.
	\end{align*}
	This together with \eqref{de_epsilon_alpha} and Assumption \ref{assum_space}, implies that 
	\begin{align*}
		\sup_{\bm \alpha \in \mathcal{A}}\left\| \frac{\partial \bm \varepsilon_{t}(\bm \alpha)}{\partial \lambda_i } \right\|
		\le O(1)\xi_{{\rho},t-p-1},
		\sup_{\bm \alpha \in \mathcal{A}}\left\| \frac{\partial \bm \varepsilon_{t}(\bm \alpha)}{\partial \bm \eta_j^\prime } \right\| 
		\le O(1)\xi_{{\rho},t-p-1} \text{ and }
		\sup_{\bm \alpha \in \mathcal{A}}\left\| \frac{\partial \bm \varepsilon_{t}(\bm \alpha)}{\partial \bm g^\prime } \right\|
		\le O(1)\xi_{{\rho},t-1},
	\end{align*}
	where $1\le i \le r$ and $1\le j \le s$. 
	Then it can be shown that 
	\begin{align} \label{sup||de_epsilon_alpha||}
		\sup_{\bm \alpha \in \mathcal{A}} \left\| \frac{\partial \bm \varepsilon_{t}(\bm \alpha)}{\partial \bm \alpha^\prime } \right\|
		\le \sup_{\bm \alpha \in \mathcal{A}}\left(\sum_{i=1}^{r} \left\| \frac{\partial \bm \varepsilon_{t}(\bm \alpha)}{\partial \lambda_i } \right\|
		+\sum_{j=1}^{s} \left\| \frac{\partial \bm \varepsilon_{t}(\bm \alpha)}{\partial \bm \eta_j^\prime } \right\|
		+\left\| \frac{\partial \bm \varepsilon_{t}(\bm \alpha)}{\partial \bm g^\prime } \right\|\right)
		\le O(1)\xi_{{\rho},t-1}.
	\end{align}
	Thus ($ii$) holds.
	
	Similarly, for ($iii$), 
	suppose that $\rho_1 < \rho <1$, 
	it can be shown that $h(h-1)(\rho_1/\rho)^{h-2} <C$ holds for $h\ge1$ and some constant $C$, and it follows that 
	\begin{align*}
		1+\sum_{h=2}^{\infty} h(h-1) \rho_1^{h-2}\| \bm y_{t-h}\| 
		\le 1+\sum_{h=2}^{\infty} h (h-1)\left(\frac{\rho_1}{\rho}\right)^{h-2}\rho^{h-2} \| \bm y_{t-h}\|
		<C \xi_{{\rho},t-2}.
	\end{align*}
	This together with \eqref{de_epsilon_alpha^2} and Assumption \ref{assum_space}, implies that 
	\begin{align*}
		&\sup_{\bm \alpha \in \mathcal{A}}\left\| \frac{\partial^2 \bm \varepsilon_{t}(\bm \alpha)}{\partial \lambda_i^2 } \right\|
		\le O(1) \xi_{{\rho},t-p-2},
		\sup_{\bm \alpha \in \mathcal{A}}\left\| \frac{\partial^2 \bm \varepsilon_{t}(\bm \alpha)}{\partial \gamma_j^2 } \right\| 
		\le O(1) \xi_{{\rho},t-p-2},
		\sup_{\bm \alpha \in \mathcal{A}}\left\| \frac{\partial^2 \bm \varepsilon_{t}(\bm \alpha)}{\partial \varphi_j^2 } \right\|
		\le O(1) \xi_{{\rho},t-p-1}	,\\
		&\sup_{\bm \alpha \in \mathcal{A}}\left\| \frac{\partial^2 \bm \varepsilon_{t}(\bm \alpha)}{\partial \gamma_j \partial \varphi_j } \right\| 
		\le O(1) \xi_{{\rho},t-p-1},
		\sup_{\bm \alpha \in \mathcal{A}}\left\| \frac{\partial^2 \bm \varepsilon_{t}(\bm \alpha)}{\partial \lambda_i \partial \bm g_{p+i}^\prime } \right\|
		\le O(1) \xi_{{\rho},t-p-1},\\
		&\sup_{\bm \alpha \in \mathcal{A}}\left\| \frac{\partial^2 \bm \varepsilon_{t}(\bm \alpha)}{\partial \gamma_j \partial \bm g_{p+r+2j-1}^\prime } \right\|
		\le O(1) \xi_{{\rho},t-p-1},
		\sup_{\bm \alpha \in \mathcal{A}}\left\| \frac{\partial^2 \bm \varepsilon_{t}(\bm \alpha)}{\partial \gamma_j \partial \bm g_{p+r+2j}^\prime } \right\|
		\le O(1) \xi_{{\rho},t-p-1},\\
		&\sup_{\bm \alpha \in \mathcal{A}}\left\| \frac{\partial^2 \bm \varepsilon_{t}(\bm \alpha)}{\partial \varphi_j \partial \bm g_{p+r+2j-1}^\prime } \right\|
		\le O(1) \xi_{{\rho},t-p-1} \text{ and }
		\sup_{\bm \alpha \in \mathcal{A}}\left\| \frac{\partial^2 \bm \varepsilon_{t}(\bm \alpha)}{\partial \varphi_j \partial \bm g_{p+r+2j}^\prime } \right\|
		\le O(1) \xi_{{\rho},t-p-1},
	\end{align*}
	where $1\le i \le r$ and $1\le j \le s$. 
	Then with similar argument of \eqref{sup||de_epsilon_alpha||}, we can show that 
	\begin{align*}
		\sup_{\bm \alpha\in \mathcal{A}}\left\|\frac{\partial(\text{vec}(\partial \bm{\varepsilon}_t^\prime (\bm{\alpha}) /\partial \bm{\alpha}))}{\partial\bm{\alpha}^\prime}\right\| 
		\le O(1) \xi_{{\rho},t-p-1}.
	\end{align*}
	As a result, ($iii$) holds.
\end{proof}

\begin{lemma} \label{lemma_boundness_MLE}
	Under Assumptions \ref{assum_identifiability}--\ref{assum_space}, if $E(\|\bm{y}_t\|^2) < \infty$, it holds that 
	$$\text{ ($i$) } E\left( \sup _{\bm \theta \in \Theta}\left|l_{t}(\bm \theta)\right|\right)<\infty; 
	~~ \text{ ($ii$) } E\left( \sup_{\bm \theta \in \Theta} \left\| \dfrac{\partial l_{t}(\bm \theta)}{\partial \bm \theta } \right\|\right)<\infty;~~
	\text{ ($iii$) } E\left( \sup _{\bm \theta \in \Theta} \left\|\dfrac{\partial^{2} l_{t}(\bm \theta)}{\partial \bm \theta \partial \bm \theta^{\prime}}\right\|\right)<\infty.$$
\end{lemma}

\begin{proof}
	We first establish ($i$). 
	Recall that $l_t(\bm{\theta}) =1/2 \ln |\Sigma(\bm{\sigma})|+1/2 \bm{\varepsilon}_t^\prime (\bm{\alpha})\Sigma^{-1}(\bm{\sigma})\bm{\varepsilon}_t(\bm{\alpha})$. 
	It holds that
	\begin{align*}
		E\left(\sup_{\bm \theta\in \Theta}|l_t(\bm \theta)|\right) 
		\le \frac{1}{2} \sup_{\bm \theta\in \Theta} \left|\ln |\Sigma(\bm \sigma)| \right| + \frac{1}{2} E\left(\sup_{\bm \theta\in \Theta} \bm \varepsilon_t^\prime (\bm \alpha) \Sigma^{-1}(\bm \sigma) \bm \varepsilon_t(\bm \alpha)\right).
	\end{align*}
	Thus it suffices to show that 
	(a) $\sup_{\bm \theta\in \Theta} \left|\ln |\Sigma(\bm \sigma)| \right| < \infty$, and (b) $E(\sup_{\bm \theta\in \Theta}
	\bm \varepsilon_t^\prime (\bm \alpha) \Sigma^{-1}(\bm \sigma) \bm \varepsilon_t(\bm \alpha)) < \infty$. 
	For (a), 
	under Assumption \ref{assum_space}($ii$), it can be shown that 
	\begin{align*}
		\sup_{\bm \theta\in \Theta} \left|\ln |\Sigma(\bm \sigma)| \right| 
		\le & -\sup_{\bm \theta\in \Theta} \ln |\Sigma(\bm \sigma)|I(|\Sigma(\bm \sigma)|<1) +\sup_{\bm \theta\in \Theta} \ln |\Sigma(\bm \sigma)|I(|\Sigma(\bm \sigma)|>1)\nonumber\\
		\le & - N\ln (\min\{\underline{\sigma},1\})+N \ln (\max\{\bar{\sigma},1\}) < \infty.
	\end{align*}
	For (b), by Assumption \ref{assum_space}($ii$) and Lemma \ref{lemma_boundness_partial}($i$), we have that 
	\begin{align*}
		E\left(\sup_{\bm \theta\in \Theta} \bm \varepsilon_t^\prime (\bm \alpha) \Sigma^{-1}(\bm \sigma) \bm \varepsilon_t(\bm \alpha)\right)
		=& E\left(\sup_{\bm \theta\in \Theta} \|\bm \varepsilon_t^\prime (\bm \alpha) \Sigma^{-1}(\bm \sigma) \bm \varepsilon_t(\bm \alpha)\|\right)\\
		\le& E\left(\sup_{\bm \theta\in \Theta} \|\Sigma^{-1}(\bm \sigma)\| \|\bm \varepsilon_t(\bm \alpha)\|^2\right)\\
		=& O(1)E\left(\sup_{\bm \theta\in \Theta} \|\bm \varepsilon_t (\bm \alpha) \|^2 \right)
		\le O(1)E(\xi_{{\rho},t-1} +\|\bm \varepsilon_{0t}\|)^2,
	\end{align*}
	where $\xi_{\rho t}= 1+\sum_{j=0}^{\infty}\rho^{j}\|\bm y_{t-j}\|_2$. 
	This together with $E(\xi_{{\rho}t}^2) < \infty$ and $E(\|\bm \varepsilon_{0t}\|^2) < \infty$ by $E(\|\bm{y}_t\|^2) < \infty$ and Assumptions \ref{assum_stationarity_y}--\ref{assum_stationarity_epsilon}, implies that (b) holds. 
	Thus, ($i$) holds by (a) and (b).
	
	Next we show ($ii$). 
	Similarly, by \eqref{de_l_theta}, Lemma \ref{lemma_boundness_partial} and Assumption \ref{assum_space}, it holds that
	\begin{align*}
		\sup_{\bm \theta \in \Theta} \left\| \dfrac{\partial l_{t}(\bm \theta)}{\partial \bm \theta } \right\| 
		\le& O(1) \sup_{\bm \theta \in \Theta} \left(\left\| \frac{\partial \bm \varepsilon_{t}^\prime(\bm \alpha)}{\partial \bm \alpha } \right\| \|\Sigma^{-1}(\bm \sigma)\| \|\bm \varepsilon(\bm \alpha)\|+\|\Sigma^{-1}(\bm \sigma)\|+ \|\Sigma^{-1}(\bm \sigma)\|^2\|\bm \varepsilon_t(\bm \alpha)\|^2\right)\\
		\le & O(1)\xi_{{\rho},t-1} (\xi_{{\rho},t-1} +\|\bm \varepsilon_{0t}\|) + O(1)[1+(\xi_{{\rho},t-1}+\|\bm \varepsilon_{0t}\|)^2]\\
		\le & O(1) (\xi_{{\rho},t-1}^2 + \|\bm \varepsilon_{0t}\|^2),
	\end{align*}
	and then $E\left( \sup_{\bm \theta \in \Theta} \left\| \partial l_{t}(\bm \theta)/\partial \bm \theta  \right\|\right)<\infty$ by $E(\|\bm \varepsilon_{0t}\|^2) < \infty$ and $E(\xi_{{\rho}t}^2) < \infty$.
	Thus ($ii$) holds.
	
	Lastly we prove ($iii$). 
	With similar argument in the proof of ($i$) and ($ii$), 
	\eqref{de_l_theta^2} and Lemma \ref{lemma_boundness_partial} imply that 
	\begin{align*}
		\sup_{\bm \theta \in \Theta} \left\| \frac{\partial^2 l_t(\bm \theta)}{\partial \bm{\alpha}\partial \bm{\alpha}^\prime} \right\|
		\le & O(1)(\xi_{{\rho},t-1}^2 + \|\bm \varepsilon_{0t}\|^2),\\
		\sup_{\bm \theta \in \Theta} \left\| \frac{\partial^2 l_t(\bm \theta)}{\partial \bm{\alpha}\partial \bm{\sigma}^\prime} \right\|
		\le& O(1) (\xi_{{\rho},t-1}^2 + \|\bm \varepsilon_{0t}\|^2),\\
		\sup_{\bm \theta \in \Theta} \left\| \frac{\partial^2 l_t(\bm \theta)}{\partial \bm{\sigma}\partial \bm{\sigma}^\prime} \right\|
		\le& O(1) (\xi_{{\rho},t-1}^2 + \|\bm \varepsilon_{0t}\|^2),
	\end{align*}
	and then it holds that 
	\begin{align*}
		E\left(\sup_{\bm \theta \in \Theta} \left\|\dfrac{\partial^{2} l_{t}(\bm \theta)}{\partial \bm \theta \partial \bm \theta^{\prime}}\right\|\right)
		\le& E\left[\sup_{\bm \theta \in \Theta} \left(
						\left\| \frac{\partial^2 l_t(\bm \theta)}{\partial \bm{\alpha}\partial \bm{\alpha}^\prime} \right\|+
						\left\| \frac{\partial^2 l_t(\bm \theta)}{\partial \bm{\alpha}\partial \bm{\sigma}^\prime} \right\|+
						\left\| \frac{\partial^2 l_t(\bm \theta)}{\partial \bm{\sigma}\partial \bm{\sigma}^\prime} \right\|
						\right)\right]\\
		\le& O(1)E\left(\xi_{{\rho},t-1}^2+\|\bm \varepsilon_{0t}\|^2\right) < \infty.
	\end{align*}
	Thus ($iii$) holds.
\end{proof}

\begin{lemma}\label{lemma_three consistent_MLE}
	Under Assumptions \ref{assum_stationarity_y}--\ref{assum_space} , the following results hold:
	\begin{align*}
		&\text { ($i$) } \sup_{\bm \theta \in \Theta}\left|\dfrac{1}{T} \sum_{t=1}^T \left\{l_t(\bm \theta)-E \left[l_{t}(\bm \theta)\right]\right\}\right|=o_{p}(1); \\ 
		&\text{ ($ii$) } \sup_{\bm \theta \in \Theta}\left\|\dfrac{1}{T} \sum_{t=1}^T  \left\{\dfrac{\partial l_t(\bm \theta)}{\partial \bm \theta}-E \left[\dfrac{\partial l_{t}(\bm \theta)}{\partial \bm \theta}\right]\right\}\right\|=o_{p}(1); \\ 
		&\text { ($iii$) } \sup_{\bm \theta \in \Theta}\left\|\dfrac{1}{T} \sum_{t=1}^{T}\left\{\frac{\partial^{2} l_{t}(\bm \theta)}{\partial \bm \theta \partial \bm \theta^{\prime}}-E\left[\frac{\partial^{2} l_{t}(\bm \theta)}{\partial \bm \theta \partial \bm \theta^{\prime}}\right]\right\}\right\|=o_{p}(1).
	\end{align*}
\end{lemma}
\begin{proof}
	This lemma directly follows from Lemma \ref{lemma_boundness_MLE} and Theorem 3.1 of \cite{ling2003asymptotic}, so its proof is omitted here. 
\end{proof}

\begin{lemma}\label{lemma_initial_effect_MLE}
	Under Assumption \ref{assum_space}, the following results hold:
	\begin{align*}
		(i)& \quad \sup_{\bm \theta \in \Theta} |l_t(\bm \theta)- \widetilde{l}_t(\bm \theta)|\le O(\rho^t)\xi_{{\rho}0}(\xi_{{\rho},t-1} +\|\bm \varepsilon_{0t}\|);\\
		(ii)& \quad \sup_{\bm \theta \in \Theta} \left\|\frac{\partial l_t(\bm \theta)}{\partial \bm \theta}- \frac{\partial \widetilde{l}_t(\bm \theta)}{\partial \bm \theta}\right\|\le O(\rho^t)\xi_{{\rho}0}(\xi_{{\rho},t-1} +\|\bm \varepsilon_{0t}\|);\\
		(iii) & \quad \sup_{\bm \theta \in \Theta} \left\|\frac{\partial^2 l_t(\bm \theta)}{\partial \bm \theta \partial \bm \theta^\prime }- \frac{\partial^2 \widetilde{l}_t(\bm \theta)}{\partial \bm \theta  \partial \bm \theta^\prime}\right\|\le O(\rho^t)\xi_{{\rho}0}(\xi_{{\rho},t-1} +\|\bm \varepsilon_{0t}\|),
	\end{align*}
	where $\xi_{\rho t}= 1+\sum_{j=0}^{\infty}\rho^{j}\|\bm y_{t-j}\|_2$. 
\end{lemma}
\begin{proof}
	We first show ($i$). 
	Recall that $l_t(\bm{\theta}) =1/2\ln |\Sigma(\bm{\sigma})|+1/2 \bm{\varepsilon}_t^\prime (\bm{\alpha})\Sigma^{-1}(\bm{\sigma})\bm{\varepsilon}_t(\bm{\alpha})$. 
	Under Assumption \ref{assum_space}($ii$), it holds that 
	\begin{align*}
		\sup_{\bm \theta}|l_t(\bm \theta)- \widetilde{l}_t(\bm \theta)| 
		=&\sup_{\bm \theta\in \Theta}|\bm \varepsilon_{t}(\bm \alpha )^\prime \Sigma^{-1}(\bm \sigma ) \bm \varepsilon_{t}(\bm \alpha ) - \widetilde{\bm \varepsilon}_{t}(\bm \alpha )^\prime \Sigma^{-1}(\bm \sigma ) \widetilde{\bm \varepsilon}_{t}(\bm \alpha )|  \\
		\le&O(1)\sup_{\bm \theta\in \Theta}|\widetilde{\bm \varepsilon}_{t}(\bm \alpha )^\prime \Sigma^{-1}(\bm \sigma ) (\bm \varepsilon_t(\bm \alpha ) -\widetilde{\bm \varepsilon}_{t}(\bm \alpha ) ) |\\
		\le &O(1) \sup_{\bm \theta\in \Theta}\| \widetilde{\bm \varepsilon}_{t}(\bm \alpha )\| \sup_{\bm \theta\in \Theta}\|\bm \varepsilon_t(\bm \alpha ) -\widetilde{\bm \varepsilon}_{t}(\bm \alpha)\|.
	\end{align*}
	This together with Lemma \ref{lemma_boundness_partial}($i$) and \eqref{sup epsilon}, implies that 
	\begin{align*}
		\sup_{\bm \theta}|l_t(\bm \theta)- \widetilde{l}_t(\bm \theta)| 
		\le O(\rho^t)\xi_{{\rho}0}(\xi_{{\rho},t-1} +\|\bm \varepsilon_{0t}\|).
	\end{align*}
	Hence ($i$) holds. 

	Similarly, by \eqref{ddepsilon_initial}, \eqref{de_l_theta}--\eqref{de_l_theta^2} and Assumption \ref{assum_space}($ii$), we can obtain that ($ii$) and ($iii$) hold. 
\end{proof}

\begin{lemma}\label{lemma_identifiability}
	Under Assumptions \ref{assum_identifiability} and \ref{assum_stationarity_epsilon}, if $\bm \varepsilon_t(\bm \alpha)= \bm \varepsilon_{0t}$ a.s., then $\bm \alpha = \bm \alpha_0$. 
\end{lemma}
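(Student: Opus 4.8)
The plan is to first convert the a.s.\ identity $\bm\varepsilon_t(\bm\alpha)=\bm\varepsilon_{0t}$ into a deterministic identity among the VAR$(\infty)$ coefficient matrices, and then to read off $\bm\alpha_0$ from the rigid parametric form of $L(\bm\omega)$ in \eqref{L_omega}.

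\textbf{Step 1 (reduction to the coefficient matrices).} Set $D_h=A_h(\bm\alpha)-A_{0h}$. Since $\bm\varepsilon_t(\bm\alpha)-\bm\varepsilon_{0t}=\sum_{h\ge1}[A_{0h}-A_h(\bm\alpha)]\bm y_{t-h}$, the hypothesis gives $\sum_{h\ge1}D_h\bm y_{t-h}=0$ a.s.\ for every $t$, and I would show $D_h=0$ for all $h$ by peeling lags off one at a time. As the solution $\{\bm y_t\}$ of the causal VAR$(\infty)$ recursion is adapted, $\bm y_s$ is $\mathcal{G}_s:=\sigma(\bm\varepsilon_{0u}:u\le s)$-measurable while $\bm\varepsilon_{0,t-1}$ is independent of $\mathcal{G}_{t-2}$ with $E(\bm\varepsilon_{0,t-1})=0$. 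Using $\bm y_{t-1}=\bm\varepsilon_{0,t-1}+\sum_{h\ge1}A_{0h}\bm y_{t-1-h}$, the vector $D_1\bm\varepsilon_{0,t-1}=-\sum_{h\ge2}D_h\bm y_{t-h}-D_1\sum_{h\ge1}A_{0h}\bm y_{t-1-h}$ is $\mathcal{G}_{t-2}$-measurable and independent of $\mathcal{G}_{t-2}$, hence a.s.\ constant, and being mean zero it vanishes a.s. Then $D_1\Sigma_0 D_1^\prime=E(D_1\bm\varepsilon_{0,t-1}\bm\varepsilon_{0,t-1}^\prime D_1^\prime)=0$, so $D_1=0$ by positive definiteness of $\Sigma_0$ (Assumption \ref{assum_stationarity_epsilon}). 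The identity becomes $\sum_{h\ge2}D_h\bm y_{t-h}=0$ a.s.; repeating the argument gives $D_2=0$, and an induction on $h$ yields $A_h(\bm\alpha)=A_{0h}$ for all $h\ge1$. (This step uses no moment condition on $\bm y_t$, only Assumption \ref{assum_stationarity_epsilon}.)

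\textbf{Step 2 (recovering the parameters).} From \eqref{L_omega}, $A_h(\bm\alpha)=G_h$ for $1\le h\le p$, so $G_k=G_{0k}$ for $1\le k\le p$. For $h\ge p+1$, writing $m=h-p\ge1$,
\begin{equation*}
\sum_{i=1}^{r}\lambda_i^{m}G_{p+i}+\sum_{j=1}^{s}\gamma_j^{m}\{\cos(m\varphi_j)G_{p+r+2j-1}+\sin(m\varphi_j)G_{p+r+2j}\}
\end{equation*}
equals the same expression built from $\lambda_{0i},\gamma_{0j},\varphi_{0j},G_{0,\cdot}$, for every $m\ge1$. Reading this off entrywise and invoking the linear independence over $\mathbb{R}$ of the sequences $m\mapsto\mu^{m}$ (distinct real $\mu$) and $m\mapsto\rho^{m}\cos(m\theta),\ m\mapsto\rho^{m}\sin(m\theta)$ (distinct $(\rho,\theta)$ with $\theta\in(0,\pi)$)---equivalently, comparing the poles and residues of the associated rational generating functions---forces the two collections of modes to agree. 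The restrictions $|\lambda|<1$, $\gamma\in(0,1)$, $\varphi\in(0,\pi)$ keep real poles disjoint from complex ones and prevent a complex pole of one side from matching the conjugate of one on the other, so the real block $\{\lambda_i\}$ and the complex block $\{(\gamma_j,\varphi_j)\}$ are identified separately. Assumption \ref{assum_identifiability} then closes the gap: by (i) the true $\lambda_{0i}$ are distinct and nonzero and the $\gamma_{0j}$ are distinct and nonzero, and by (ii) each $G_{0k}$ with $k\ge p+1$ is nonzero, so the true model genuinely exhibits $r$ real modes and $s$ complex-pair modes; since $\bm\alpha$ lives in the order-$(p,r,s)$ space and supplies at most $r$ real and $s$ complex-pair modes, matching forces all of $\bm\alpha$'s modes to be nonzero, distinct, and equal (as sets, together with their matrix coefficients) to the true ones. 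The ordering convention $\lambda_1>\cdots>\lambda_r$, $\gamma_1>\cdots>\gamma_s$ built into the parameter space fixes the labelling, so $\lambda_i=\lambda_{0i}$, $(\gamma_j,\varphi_j)=(\gamma_{0j},\varphi_{0j})$, and the three matrix blocks $G_{p+i},G_{p+r+2j-1},G_{p+r+2j}$ match; together with $G_k=G_{0k}$ for $k\le p$ this gives $\bm\alpha=\bm\alpha_0$.

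\textbf{Main obstacle.} The delicate point is Step 2: ruling out that $\bm\alpha$ reproduces the true coefficient sequence through \emph{repeated}, \emph{zero}, or \emph{coincidental} values of $\lambda_i$ or $(\gamma_j,\varphi_j)$. This is precisely where the distinctness and non-vanishing of Assumption \ref{assum_identifiability}(i), the non-degeneracy $\|G_{0k}\|>C$ of Assumption \ref{assum_identifiability}(ii), and the restrictions $|\lambda|<1,\varphi\in(0,\pi)$ separating real from complex modes all come in; the underlying linear-independence-of-exponentials fact is classical but must be applied carefully in the matrix-valued, conjugate-paired setting. Step 1, by contrast, is routine once the causal VAR$(\infty)$ structure and the i.i.d.\ mean-zero, positive-definite-variance assumption on $\{\bm\varepsilon_{0t}\}$ are in hand.
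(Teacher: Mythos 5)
Your proposal is correct and follows essentially the same two-step route as the paper's own proof: first deduce $A_h(\bm \alpha)=A_{0h}$ for all $h\ge 1$ from the positive definiteness of $\Sigma_0$ (your lag-by-lag peeling via measurability/independence is just a cleaner rendition of the paper's degeneracy contradiction), then identify $\bm\omega$ and the $G_k$'s by matching modes, using the linear independence of $\{\lambda^m\}$, $\{\gamma^m\cos(m\varphi)\}$, $\{\gamma^m\sin(m\varphi)\}$ together with Assumption \ref{assum_identifiability} to rule out vanishing or coincident modes. The only difference in execution is that the paper proves the linear-independence fact explicitly by a largest-modulus case analysis and a phase–amplitude identity, whereas you invoke it as a classical fact via generating-function poles and residues; both are adequate.
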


\begin{proof}
	Recall that $\bm{\varepsilon}_t(\bm{\alpha}) = \bm y_t - \sum_{h=1}^{\infty}A_{h}(\bm{\alpha})\bm y_{t-h}$ and $\bm \varepsilon_{0t} = \bm y_t - \sum_{h=1}^{\infty}A_{0h}\bm y_{t-h}$. 
	It holds that $\bm \varepsilon_{t}(\bm \alpha)- \bm \varepsilon_{0t} =\sum_{h=1}^{\infty} [A_{0h} - A_{h}(\bm{\alpha})] \bm y_{t-h}$. 
	Thus $\bm \varepsilon_t(\bm \alpha)= \bm \varepsilon_{0t}$ a.s. implies that  
	\begin{align}\label{epsilon_t = epsilon_0t}
		\sum_{h=1}^{\infty} (A_h(\bm \alpha)-A_{0h}) \bm y_{t-h} =0 \;\; \text{a.s.}. 
	\end{align}
	
	We first show that $A_h(\bm{\alpha}) = A_{0h}$ for all $h \ge 1$ by contradiction. 
	Suppose that there exist $k$'s such that $A_k(\bm{\alpha}) \neq A_{0k}$. 
	Note that it is obviously impossible that there is only one $k$ such that $A_{k}(\bm{\alpha}) \neq A_{0k}$ when $\bm y_{t-k} \neq 0_{N \times 1}$. 
	Then \eqref{epsilon_t = epsilon_0t} implies that $\bm y_t$ can be represented as a linear combination of $\bm y_{t-k}$'s a.s. 
	And it follows that $\bm \varepsilon_{0t} = \bm y_t -E(\bm y_t \mid \mathcal{F}_{t-1})=0_{N \times 1} $ a.s., 
	which is in contradiction with that $\var (\bm \varepsilon_{0t})=\Sigma_0$ is positive definite under Assumption \ref{assum_stationarity_epsilon}. 
	Thus we have that $A_h(\bm{\alpha}) = A_{0h}$ for all $h \ge 1$. 
	
	Recall that $A_h(\bm{\alpha}) = G_h$ for $1\le h\le p$, and $A_h(\bm{\alpha}) = \sum_{k=p+1}^{d} \ell_{hk}(\bm \omega) G_k$ for $h \ge p+1$ with $d = p + r + 2s$. 
	Since $A_h(\bm{\alpha}) = A_{0h}$ for all $h \ge 1$, we have that $G_h = G_{0h}$ for $1 \le h\le p$ and 
	\begin{align}
		&\sum_{i=1}^{r} \lambda_i^m G_{p+i} + 
		\sum_{k=1}^{s} \gamma_k^m \left[\cos(m \varphi_k) G_{p+r+2k-1} + \sin(m \varphi_k) G_{p+r+2k}\right] \nonumber\\
		=&\sum_{i=1}^{r} \lambda_{0i}^m G_{0,p+i} +
		\sum_{k=1}^{s} \gamma_{0k}^m \left[\cos(m \varphi_{0k}) G_{0,p+r+2k-1}+ \sin(m \varphi_{0k}) G_{0,p+r+2k} \right]  \label{identifi_w_g} 
	\end{align}
	for $m = h-p \ge 1$. 
	Then to prove $\bm \alpha = \bm \alpha_0$ under Assumption \ref{assum_identifiability}, it suffices to show that 
	\begin{enumerate}[($i$)]
		\item Equation \eqref{linearindependence} below holds for all $m\ge 1$ if and only if $\bm c= (c_1,\ldots,c_{r} ,c_{11},\\ c_{12}, \ldots c_{s1} ,c_{s2})^\prime= 0_{(r+2s) \times 1}$: 
		\begin{align}\label{linearindependence}
			\sum_{i=1}^{r} c_i\lambda_i^m  + 
			\sum_{k=1}^{s}[ c_{k1}\gamma_k^m \cos(m \varphi_k) + c_{k2}\gamma_k^m\sin(m \varphi_k)] =0,
		\end{align}
		where $0<|\lambda_i|<1$ and $0<\gamma_k<1$, and $\{\lambda_i\}$ and $\{\gamma_k\}$ are distinct, respectively; 
		\item Equation \eqref{identifi_phi} below holds for all $m\ge 1$ if and only if $\varphi_k = \varphi_{0k}$, $G_{p+r+2k-1} =G_{0,p+r+2k-1}$ and $G_{p+r+2k} =G_{0,p+r+2k}$: 
		\begin{align}\label{identifi_phi}
			\cos(m \varphi_k) G_{p+r+2k-1} + \sin(m \varphi_k) G_{p+r+2k}  = \cos(m \varphi_{0k}) G_{0,p+r+2k-1} + \sin(m \varphi_{0k}) G_{0,p+r+2k}
		\end{align} 
		where $\varphi_k, \varphi_{0k} \in (0, \pi)$. 
	\end{enumerate}
	Since that \eqref{identifi_w_g} and ($i$) imply that $ \lambda_i = \lambda_{0i}$ and $G_{p+i} =  G_{0,p+i}$ for $1\le i \le r$ as well as $\gamma_k = \gamma_{0k}$ for $1 \le k \le s$, then \eqref{identifi_phi} holds which together with ($ii$) implies that $\varphi_k = \varphi_{0k}$, $G_{p+r+2k-1} =G_{0,p+r+2k-1}$ and $G_{p+r+2k} =G_{0,p+r+2k}$ for $1 \le k \le s$.  

	For ($i$), obviously we only need to show that \eqref{linearindependence} holds for all $m\ge 1$ implies that $\bm c = 0$. 
	Without loss of generality, we assume that $|\lambda_1| > \cdots > |\lambda_{r}|$ and $\gamma_1 > \cdots > \gamma_{s}$. 
	Let $\tau = \max\{|\lambda_1|, \gamma_1\}$. Dividing \eqref{linearindependence} by $\tau^m$ and considering the dominant terms, we obtain
	\begin{align}\label{lambda1 + gamma1}
		c_1 \left(\frac{\lambda_1}{\tau}\right)^m 
		+ c_{11} \left(\frac{\gamma_1}{\tau}\right)^m \cos(m \varphi_1) +c_{12} \left(\frac{\gamma_1}{\tau}\right)^m  \sin(m \varphi_1)= 0.
	\end{align}
	Then we discuss \eqref{lambda1 + gamma1} in situations below: 
	\begin{enumerate}[(a)]
		\item If $\tau = |\lambda_1| = \gamma_1$, then $|c_1|= |c_{11} \cos(m \varphi_1) +c_{12}   \sin(m \varphi_1)| = \Big|\sqrt{c_{11}^2+c_{12}^2} \\\cos \left(m \varphi_1 -\arccos(\iota)\right)\Big|$ with $\iota =c_{11}/(\sqrt{c_{11}^2+c_{12}^2})$,
		and it follows that $c_1 =c_{11} =c_{12} =0$; 
		\item if $\tau = |\lambda_1| > \gamma_1$ , then $c_1 = 0$, and it follows that $c_{11} =c_{12} =0$ by $c_{11} \cos(m \varphi_1) +c_{12} \sin(m \varphi_1)= 0$; 
		\item if $\tau = \gamma_1 > |\lambda_1|$ , then $c_{11} =c_{12} =0$, and it follows that $c_1 = 0$. 
	\end{enumerate}
	Thus \eqref{lambda1 + gamma1} implies that $c_1 =c_{11} =c_{12} =0$. 
	Similarly it can be shown that the other elements of $\bm c$ are zeros. 
	It follows that ($i$) holds. 
	
	For ($ii$), using $a \cos(x) + b \sin(x) = \sqrt{a^2 + b^2} \cos(x - y)$ with $y = \arccos(a / \sqrt{a^2 + b^2})$, it can be easily shown that \eqref{identifi_phi} holds for all $m\ge 1$ if and only if $\varphi_k = \varphi_{0k}$, $G_{p+r+2k-1} =G_{0,p+r+2k-1}$ and $G_{p+r+2k} =G_{0,p+r+2k}$. 
	As a result, $\bm \alpha = \bm \alpha_0$ holds by ($i$) and ($ii$). 
\end{proof}

\begin{lemma}\label{lemma_uni_max_MLE}
	Under Assumptions \ref{assum_identifiability}, \ref{assum_stationarity_epsilon} and \ref{assum_space}, $E[l_t(\bm \theta)]$ has a unique minimum at $\bm \theta_0$.
\end{lemma}

\begin{proof}
	Recall that $l_t(\bm{\theta}) =1/2\ln |\Sigma(\bm{\sigma})|+1/2 \bm{\varepsilon}_t^\prime (\bm{\alpha})\Sigma^{-1}(\bm{\sigma})\bm{\varepsilon}_t(\bm{\alpha})$ with $\bm{\varepsilon}_t(\bm{\alpha}) = \bm y_t - \sum_{h=1}^{\infty}A_{h}(\bm{\alpha})\bm y_{t-h}$, and $\bm \varepsilon_{0t} = \bm y_t - \sum_{h=1}^{\infty}A_{0h}\bm y_{t-h}$. 
	It holds that 
	\begin{align*}
		2E[l_{t}(\bm{\theta})]
		=&\ln|\Sigma(\bm \sigma)|+E [\bm \varepsilon_t(\bm \alpha)^\prime \Sigma^{-1}(\bm \sigma) \bm \varepsilon_t(\bm \alpha)]\\
		=&\ln|\Sigma(\bm \sigma)|+E[(\bm \varepsilon_t(\bm \alpha)-\bm \varepsilon_{0t})^\prime \Sigma^{-1}(\bm \sigma) (\bm \varepsilon_t(\bm \alpha)-\bm \varepsilon_{0t})]\\
		&+E[\bm \varepsilon_{0t}^\prime \Sigma^{-1}(\bm \sigma) \bm\varepsilon_{0t}]
		+2E[(\bm \varepsilon_t(\bm \alpha)-\bm \varepsilon_{0t})^\prime \Sigma^{-1}(\bm \sigma) \bm\varepsilon_{0t}]. 
	\end{align*}
	Note that $\bm \varepsilon_t(\bm \alpha)- \bm \varepsilon_{0t}= - \sum_{h=1}^{\infty} (A_h(\bm{\alpha})-A_{0h}) \bm y_{t-h} \in \mathcal{F}_{t-1}$ with $\mathcal{F}_{t}$ being the $\sigma$-field generated by $\{\bm \varepsilon_{0s}, s \le t\}$, and $E(\bm \varepsilon_{0t}\mid \mathcal{F}_{t-1}) = 0$ under Assumption \ref{assum_stationarity_epsilon}. 
	Then using the law of iterated expectation, we have that 
	\begin{align*}
		2E[l_{t}(\bm{\theta})]
		=&\left\{E[(\bm \varepsilon_t(\bm \alpha)-\bm \varepsilon_{0t})^\prime\Sigma^{-1}(\bm \sigma)(\bm \varepsilon_t(\bm \alpha)-\bm \varepsilon_{0t})]\right\}
		+\left\{\ln|\Sigma(\bm \sigma)|+E[\bm \varepsilon_{0t}^\prime \Sigma^{-1}(\bm \sigma) \bm \varepsilon_{0t}]\right\} \\
		:=& Q_1(\bm \theta)+Q_2(\bm \sigma).
	\end{align*}
	By Lemma \ref{lemma_identifiability}, it can be easily shown that $Q_1(\bm \theta)$ has a unique minimum at $\bm \theta_0$. 
	Thus we are left to show that $Q_2(\bm \sigma)$ has a unique minimum at $\bm \sigma_0$. 
	Rewrite $Q_2(\bm \sigma)$ as follows, 
	\begin{align*}
		Q_2(\bm \sigma) = -\ln|\Sigma_0 \Sigma^{-1}(\bm \sigma)|+\text{tr}(\Sigma_0 \Sigma^{-1}(\bm \sigma) ) + \ln|\Sigma_0|.
	\end{align*}
	By Lemma A.6 of \cite{johansen1995likelihood}, $-\ln|M|+\text{tr}M\ge N$ holds for any $N$-dimensional positive definite matrix $M$, and the equality holds if and only if $M$ is an identity matrix. 
	Note that under Assumptions \ref{assum_stationarity_epsilon} and \ref{assum_space}, $\Sigma_0 \Sigma^{-1}(\bm \sigma)$ is positive definite. 
	Thus $Q_2(\bm \sigma) \ge N+\ln|\Sigma_0|$ and the equality holds if and only if $\Sigma_0 \Sigma^{-1}(\bm \sigma) = I_N$, 
	which implies that $Q_2(\bm \sigma)$ has a unique minimum at $\bm \sigma_0$.
	As a result, $E[l_t(\bm \theta)]$ is uniquely minimized at $\bm \theta_0$. 
\end{proof}

\begin{lemma} \label{lemma_postive_J_MLE}
	Under Assumptions \ref{assum_identifiability} and \ref{assum_stationarity_epsilon}, $\mathcal{J}$ is positive definite.
\end{lemma}

\begin{proof}
	To show $\mathcal{J}$ is positive definite, it suffices to verify that $\bm c' \mathcal{J} \bm c>0$ holds for any nonzero constant vector $\bm c \in \mathbb{R}^{n}$.
	Recall that $\mathcal{J} = E(P_{0t}^\prime S_2 P_{0t})$ in \eqref{I_2 and J_2} with 
	\begin{align*}
		S_2 = \left(
		\begin{matrix}
			\Sigma_0^{-1} & 0_{N \times N^2}\\
			* & \frac{1}{2} \mathcal{P}
		\end{matrix}
		\right),
		\;\;
		P_{0t} = \left(
		\begin{matrix}
			\frac{\partial \bm \varepsilon_t (\bm \alpha_0)}{\partial \bm{\alpha}^\prime } & 0_{N \times N(N+1)/2  }\\
			0_{N^2 \times n_\alpha} & D
		\end{matrix}
		\right), 
	\end{align*}
	$\mathcal{P} = \Sigma_0^{-1} \otimes \Sigma_0^{-1}$ and $D$ being a duplication matrix. 
	Note that $\Sigma_0^{-1}$ is positive definite under Assumptions \ref{assum_stationarity_epsilon}. 
	Hence $S_2$ is positive definite, which implies that $\bm c' S_2 \bm c>0$ holds for any nonzero constant vector $\bm c \in \mathbb{R}^{n}$. 
	It follows that $\bm c' \mathcal{J} \bm c>0$ holds for any $\bm c$ that satisfies $P_{0t} \bm c\neq 0_{(N+N^2) \times 1}$ a.s. 
	Then we only need to show that $P_{0t} \bm c = 0_{(N+N^2) \times 1}$ holds a.s. if and only if $\bm c = 0_{n \times 1}$. 

	Suppose that $P_{0t}\bm c =0_{(N+N^2) \times 1}$ a.s. with $\bm c= (\bm c_1^\prime, \bm c_2^\prime)^\prime$, $\bm c_1 \in \mathbb{R}^{n_\alpha}$ and $\bm c_2 \in \mathbb{R}^{N(N+1)/2}$.
	It follows that 
	\begin{align*}
		\frac{\partial \bm \varepsilon_{0t}}{\partial \bm \alpha^\prime} \bm c_1 =0_{N \times 1} \;\; \text{a.s.} 
		\quad \text{and} \quad 
		D \bm c_2 =0_{N^2 \times 1}.
	\end{align*}
	Recall that $D$ is a duplication matrix. 
	Since $D$ has full column rank, $D \bm c_2 =0_{N^2 \times 1}$ implies that $\bm c_2 =0_{N(N+1)/2 \times 1}$.
	Moreover, by the proof of Lemma \ref{lemma_postive_J_LS}, $(\partial \bm \varepsilon_{0t}/ \partial \bm \alpha') \bm c_1 = 0_{N \times 1}$ a.s. implies that $\bm c_1 = 0_{n_\alpha \times 1}$. 
	Thus $P_{0t} \bm c = 0_{(N+N^2) \times 1}$ holds a.s. if and only if $\bm c = 0_{n \times 1}$. 
	As a result, $\mathcal{J}$ is positive definite.
\end{proof}

\begin{lemma} \label{lemma_score_asy_Norm_MLE}
	Suppose that $\{\bm y_t\}$ is generated by model \eqref{model_VARinf} and Assumptions \ref{assum_identifiability}--\ref{assum_space} hold. 
	If $E(\|\bm \varepsilon_{0t}\|^4)<\infty$, $E(\|\bm y_t\|^2)<\infty$ and  $\mathcal{I}=E[\partial l_{t}(\bm \theta_0)/ \partial \bm \theta\partial l_{t}(\bm \theta_0)/ \partial \bm \theta']$ is positive definite, then it holds that  
	\begin{align*}
		\frac{1}{\sqrt{T}} \sum_{t=1}^{T} \frac{\partial \widetilde{l}_{t}(\bm \theta_0)}{\partial \bm \theta} \rightarrow_{d} N(0_{n\times 1},\mathcal{I}) \quad \text{as} \quad T\to \infty.
	\end{align*}
\end{lemma}

\begin{proof}
	By \eqref{de_l_theta}, Assumption \ref{assum_space} and Lemma \ref{lemma_boundness_partial}($ii$), it can be shown that 
	\begin{align*}
		E\left[\left\|\frac{\partial l_t (\bm{\theta}_0)}{\partial \bm{\theta} } \frac{\partial l_t (\bm{\theta}_0)}{\partial \bm{\theta}^\prime}\right\|\right]
		\le& E\left[\left\| \frac{\partial l_{t}(\bm \theta_0)}{\partial \bm \alpha} \right\|^2 
		+\left\| \frac{\partial l_{t}(\bm \theta_0)}{\partial \bm \sigma} \right\|^2 \right]\\
		\le& O(1)E\left[\left\| \frac{\partial \bm \varepsilon_{t}(\bm \alpha_0)}{\partial \bm \alpha} \right\|^2\|\bm \varepsilon_{0t}\|^2 
						+ \left(1+\|\bm \varepsilon_{0t}\|^2 \right)^2\right]\\
		\le& O(1)E\left[\xi_{{\rho},t-1}^2\|\bm \varepsilon_{0t}\|^2 
						+ \left(1+\|\bm \varepsilon_{0t}\|^2 \right)^2\right],
	\end{align*}
	where $\xi_{\rho t}= 1+\sum_{h=0}^{\infty}\rho^{h}\|\bm y_{t-h}\|_2$. 
	This together with $E(\|\bm \varepsilon_{0t}\|^4)<\infty$ and $E\left(\xi_{{\rho}t}^2\right) < \infty$ by $E(\|\bm y_t\|^2)<\infty$, Assumptions \ref{assum_stationarity_y} and \ref{assum_space}($ii$), implies that $\mathcal{I}$ is finite.
    
	Let $S_{  2} = \sum_{t=1}^T S_{2t}$ with $S_{2t} =\bm c'\partial l_{t}(\bm \theta_0)/\partial \bm \theta$, where $\bm c$ is a constant vector with the same dimension as $\bm \theta$. 
	Note that by Assumptions \ref{assum_identifiability}--\ref{assum_stationarity_epsilon}, $E(\bm \varepsilon_{0t} \mid \mathcal{F}_{t-1})=\bm 0$, $E(\bm \varepsilon_{0t}\bm \varepsilon_{0t}^\prime \mid \mathcal{F}_{t-1})=\Sigma_0$ \eqref{de_epsilon_alpha} and \eqref{de_l_theta}, 
	$\{S_{2t}\}$ is ergodic and stationary, 
	$S_{2t}$ is measurable with respect to $\mathcal{F}_{t}$ that is the $\sigma$-field generated by $\{\bm \varepsilon_{0s}, s \le t\}$, and 
	$E(S_{2t}\mid \mathcal{F}_{t-1})=0$. 
	Moreover, since $\mathcal{I}$ is finite, we can obtain that 
	\begin{align*}
		\var(S_{2t})= \bm c' E[\partial l_{t}(\bm \theta_0)/\partial \bm \theta \partial l_{t}(\bm \theta_0)/\partial \bm \theta']\bm c = \bm c'\mathcal{I} \bm c < \infty.
	\end{align*}
	Thus $\{S_{2t}, \mathcal{F}_{t}\}_t$ is an ergodic, stationary and square integrable martingale difference. 
	By the martingale central limit theorem for a stationary and ergodic sequence of square integrable martingale increments\citep{Billingsley1961lindeberg}, we have that 
	\begin{align*}
		\frac{1}{\sqrt{T}} S_{  2}\rightarrow_{d} N(0,\bm c' \mathcal{I} \bm c) \;\; \text{as} \;\; T \to \infty. 
	\end{align*}
	This together with the Cram\'{e}r-Wold device, implies that 
	\begin{align}\label{sum dlt convergence}
		\frac{1}{\sqrt{T}} \sum_{t=1}^{T} \frac{\partial l_{t}(\bm \theta_0)}{\partial \bm \theta} \rightarrow_d N(0_{n\times 1}, \mathcal{I}) \;\; \text{as} \;\; T \to \infty. 
	\end{align}
	Furthermore, by Lemma \ref{lemma_initial_effect_MLE}($ii$), it holds that 
	\begin{align*}
		\left\|\frac{1}{\sqrt{T}}\sum_{t=1}^{T} \left[\frac{\partial l_t(\bm \theta_0)}{\partial \bm \theta}- \frac{\partial \widetilde{l}_t(\bm \theta_0)}{\partial \bm \theta}\right]\right\| 
		\leq \frac{1}{\sqrt{T}}\sum_{t=1}^{T} O(\rho^t)\xi_{{\rho}0}(\xi_{{\rho},t-1} +\|\bm \varepsilon_{0t}\|), 
	\end{align*}
	where $\xi_{\rho t}= 1+\sum_{j=0}^{\infty}\rho^{j}\|\bm y_{t-j}\|_2$. 
	Then by $0 < {\rho} <1$ under Assumption \ref{assum_space}($ii$), $E\left(\xi_{{\rho}t}^2\right) < \infty$ and Markov's inequality, it can be shown that 
	\begin{align}\label{o_1_de_l}
		\left\|\frac{1}{\sqrt{T}}\sum_{t=1}^{T} \left[\frac{\partial l_t(\bm \theta_0)}{\partial \bm \theta}- \frac{\partial \widetilde{l}_t(\bm \theta_0)}{\partial \bm \theta}\right]\right\|= O_p\left(\frac{1}{\sqrt{T}}\right) .
	\end{align}
	Thus by \eqref{sum dlt convergence}--\eqref{o_1_de_l}, we can obtain that 
	\begin{align*}
		\frac{1}{\sqrt{T}} \sum_{t=1}^{T} \frac{\partial \widetilde{l}_{t}(\bm \theta_0)}{\partial \bm \theta} \rightarrow_d N(0_{n\times 1}, \mathcal{I}) \;\; \text{as} \;\; T \to \infty.
	\end{align*}
\end{proof}

\begin{lemma} \label{lemma_boundness_LS}
	Under Assumptions \ref{assum_stationarity_y}--\ref{assum_space}, it holds that 
	\begin{align*}
		(i)& E\left( \sup _{\bm \alpha \in \mathcal{A}} |\bm \varepsilon_{t}^\prime (\bm\alpha)\bm \varepsilon_{t}(\bm\alpha)|\right)<\infty; 
		\quad
		(ii) E\left( \sup_{\bm \alpha \in \mathcal{A}} \left\| \dfrac{\partial \bm \varepsilon_{t}^\prime(\bm \alpha)}{\partial \bm \alpha } \bm \varepsilon_t(\bm \alpha) \right\|\right)<\infty;\\
		(iii)& E\left( \sup _{\bm \alpha \in \mathcal{A}} \left\| (\bm \varepsilon_t^\prime (\bm \alpha)\otimes I_{n_\alpha})\dfrac{\partial \rm{vec}(\partial \bm \varepsilon_{t}^\prime (\bm \alpha)/ \partial \bm \alpha)}{\partial \bm \alpha^\prime}
		+ \frac{\partial \bm \varepsilon_t(\bm \alpha)}{\partial \bm \alpha} \frac{\partial \bm \varepsilon_t(\bm \alpha)}{\partial \bm \alpha^\prime}  \right\|\right)<\infty.
	\end{align*}
\end{lemma}

\begin{proof}
	We first show ($i$). 
	Under Assumption \ref{assum_space}, together with Lemma \ref{lemma_boundness_partial}($i$), it holds that 
	\begin{align*}
		\sup _{\bm \alpha \in \mathcal{A}} |\bm \varepsilon_{t}^\prime (\bm\alpha)\bm \varepsilon_{t}(\bm\alpha)| 
		\leq \sup_{\bm \alpha\in \mathcal{A}} \|\bm \varepsilon_t^\prime (\bm \alpha) \|^2 
		\leq O(1)\left(\xi_{{\rho},t-1} +\|\bm \varepsilon_{0t}\|\right)^2. 
	\end{align*}
	Note that by Assumption \ref{assum_stationarity_y} that $E(\|\bm y_t\|_2^2) < \infty$, we have $E(\xi_{{\rho}t}^2) < \infty$. 
	These together with $E(\|\bm \varepsilon_{0t}\|^2) < \infty$ by Assumption \ref{assum_stationarity_epsilon} imply that ($i$) holds. 
	
	For ($ii$), by Lemmas \ref{lemma_boundness_partial}($i$)--($ii$), we have that
	\begin{align*}
		&\sup_{\bm \alpha \in \mathcal{A}} \left(\left\| \frac{\partial \bm \varepsilon_{t}^\prime(\bm \alpha)}{\partial \bm \alpha } \bm \varepsilon_t(\bm \alpha)\right\|\right)
		<\sup_{\bm \alpha \in \mathcal{A}} \left(\left\| \frac{\partial \bm \varepsilon_{t}^\prime(\bm \alpha)}{\partial \bm \alpha } \right\|  \|\bm \varepsilon_t(\bm \alpha)\|\right)\\
		\le & O(1)\xi_{{\rho},t-1} (\xi_{{\rho},t-1} +\|\bm \varepsilon_{0t}\|)
		\le O(1) (\xi_{{\rho},t-1}^2 + \|\bm \varepsilon_{0t}\|^2).
	\end{align*}
	Similarly, ($ii$) holds by $E(\xi_{{\rho}t}^2) < \infty$ and $E(\|\bm \varepsilon_{0t}\|^2) < \infty$. 
	
	For ($iii$), under Assumption \ref{assum_space}, by Lemma \ref{lemma_boundness_partial}, it can be shown that 
	\begin{align*}
		&\sup _{\bm \alpha \in \mathcal{A}} \left\| (\bm \varepsilon_t^\prime (\bm \alpha)\otimes I_{n_\alpha}) \dfrac{\partial \text{vec}(\partial \bm \varepsilon_{t}^\prime (\bm \alpha)/ \partial \bm \alpha)}{\partial \bm \alpha^\prime}
		+ \frac{\partial \bm \varepsilon_t(\bm \alpha)}{\partial \bm \alpha} \frac{\partial \bm \varepsilon_t(\bm \alpha)}{\partial \bm \alpha^\prime}  \right\| \\
		\le & O(1) \sup_{\bm \alpha \in \mathcal{A}} \left(\|\bm \varepsilon_{t}(\bm \alpha)\| \left\| \frac{\partial(\text{vec}(\partial \bm{\varepsilon}_t^\prime (\bm{\alpha}) /\partial \bm{\alpha}))}{\partial\bm{\alpha}^\prime} \right\|
		+\left\|\frac{\partial \bm \varepsilon_t^\prime (\bm \alpha)}{\partial \bm \alpha}\right\| \left\|\frac{\partial \bm \varepsilon_t(\bm \alpha)}{\partial \bm \alpha^\prime} \right\| \right)\\
		\le & O(1)\left[ (\xi_{{\rho},t-1} +\|\bm \varepsilon_{0t}\|)\xi_{{\rho},t-p-1}  + \xi_{{\rho},t-1}^2\right]\\
		\le & O(1)(\xi_{{\rho},t-1}^2 + \|\bm \varepsilon_{0t}\|^2).
	\end{align*}
	Then ($iii$) holds by $E(\xi_{{\rho}t}^2) < \infty$ and $E(\|\bm \varepsilon_{0t}\|^2) < \infty$. 
\end{proof}

\begin{lemma}\label{lemma_three consistent_LS}
	Under Assumptions \ref{assum_stationarity_y}--\ref{assum_space}, the following results hold:
	\begin{align*}
		(i)& \sup_{\bm \alpha \in \mathcal{A}}\left|\dfrac{1}{T} \sum_{t=1}^T \left\{\bm \varepsilon_{t}^\prime (\bm\alpha)\bm \varepsilon_{t}(\bm\alpha)-E \left[\bm \varepsilon_{t}^\prime (\bm\alpha)\bm \varepsilon_{t}(\bm\alpha)\right]\right\}\right|=o_{p}(1); \\ 
		(ii)& \sup_{\bm \alpha \in \mathcal{A}}\left\|\dfrac{1}{T} \sum_{t=1}^T \left\{\frac{\partial \bm \varepsilon_{t}^\prime(\bm \alpha)}{\partial \bm \alpha }\bm \varepsilon_{t}(\bm\alpha) -E \left[\frac{\partial \bm \varepsilon_{t}^\prime(\bm \alpha)}{\partial \bm \alpha }\bm \varepsilon_{t}(\bm\alpha)\right]\right\}\right\|=o_{p}(1); \\ 
		(iii)& \sup_{\bm \alpha \in \mathcal{A}}\left\|\frac{1}{T}\sum_{t=1}^T  \left\{\left[
		(\bm \varepsilon_t^\prime (\bm \alpha)\otimes I_{n_\alpha})\frac{\partial {\rm vec}(\partial \bm \varepsilon^\prime_{t} (\bm \alpha)/ \partial \bm \alpha)}{\partial \bm \alpha^\prime}
		+ \frac{\partial \bm \varepsilon_t(\bm \alpha)}{\partial \bm \alpha} \frac{\partial \bm \varepsilon_t(\bm \alpha)}{\partial \bm \alpha^\prime}  
		 \right] \right.\right.\\
		&\quad \quad \quad\quad \quad \left.\left.-E\left[(\bm \varepsilon_t^\prime (\bm \alpha)\otimes I_{n_\alpha})\dfrac{\partial {\rm vec}(\partial \bm \varepsilon_{t}^\prime (\bm \alpha)/ \partial \bm \alpha)}{\partial \bm \alpha^\prime}
		+ \frac{\partial \bm \varepsilon_t(\bm \alpha)}{\partial \bm \alpha} \frac{\partial \bm \varepsilon_t(\bm \alpha)}{\partial \bm \alpha^\prime}  \right]\right\}\right\|=o_{p}(1).
	\end{align*}
\end{lemma}
\begin{proof}
	This lemma directly follows from Lemma \ref{lemma_boundness_LS} and Theorem 3.1 of \cite{ling2003asymptotic}, so its proof is omitted here. 
\end{proof}

\begin{lemma}\label{lemma_initial_effect_LS}
	Under Assumption \ref{assum_space}, the following results hold:
	\begin{align*}
		(i)& \quad \sup_{\bm \alpha \in \mathcal{A}} |\bm \varepsilon_{t}^\prime (\bm\alpha)\bm \varepsilon_{t}(\bm\alpha)- \widetilde{\bm \varepsilon}_{t}^\prime (\bm\alpha )\widetilde{\bm \varepsilon}_{t}(\bm\alpha)|\le O(\rho^t)\xi_{{\rho}0}(\xi_{{\rho},t-1} +\|\bm \varepsilon_{0t}\|);\\
		(ii)& \quad \sup_{\bm \alpha \in \mathcal{A}} \left\|\frac{\partial \bm \varepsilon_{t}^\prime(\bm \alpha)}{\partial \bm \alpha }\bm \varepsilon_{t}(\bm\alpha)-\frac{\partial \widetilde{\bm \varepsilon}_{t}^\prime(\bm \alpha)}{\partial \bm \alpha }\widetilde{\bm \varepsilon}_{t}(\bm\alpha)\right\|\le O(\rho^t)\xi_{{\rho}0}(\xi_{{\rho},t-1} +\|\bm \varepsilon_{0t}\|);\\
		(iii) & \quad \sup_{\bm \alpha \in \mathcal{A}} \left\|\left[ (\bm \varepsilon_t^\prime (\bm \alpha)\otimes I_{n_\alpha})
		\dfrac{\partial  {\rm  vec}(\partial \bm \varepsilon_{t}^\prime (\bm \alpha)/ \partial \bm \alpha)}{\partial \bm \alpha^\prime}\bm \varepsilon_t(\bm \alpha)
		+ \frac{\partial \bm \varepsilon_t(\bm \alpha)}{\partial \bm \alpha} \frac{\partial \bm \varepsilon_t(\bm \alpha)}{\partial \bm \alpha^\prime}  
		\right]
		\right. \\
		&\left. \quad \quad \quad-\left[
		(\widetilde{\bm \varepsilon}_t^\prime (\bm \alpha)\otimes I_{n_\alpha}) \frac{\partial {\rm vec} (\partial \widetilde{\bm \varepsilon}_{t}^\prime (\bm \alpha)/ \partial \bm \alpha)}{\partial \bm \alpha^\prime} 
		+ \frac{\partial \widetilde{\bm \varepsilon}_t(\bm \alpha)}{\partial \bm \alpha} \frac{\partial \widetilde{\bm \varepsilon}_t(\bm \alpha)}{\partial \bm \alpha^\prime}  
		\right] \right\|\le O(\rho^t)\xi_{{\rho}0}(\xi_{{\rho},t-1} +\|\bm \varepsilon_{0t}\|),
	\end{align*}
	where $\xi_{\rho t}= 1+\sum_{j=0}^{\infty}\rho^{j}\|\bm y_{t-j}\|_2$. 
\end{lemma}
\begin{proof}
	We first show ($i$). 
	Recall that $\bm{\varepsilon}_t(\bm{\alpha}) = \bm y_t - \sum_{h=1}^{\infty}A_{h}(\bm{\alpha})\bm y_{t-h}$ and $\widetilde{\bm{\varepsilon}}_t(\bm{\alpha}) =\bm{y}_t-\sum_{h=1}^{t-1}A_{h}(\bm{\alpha}) \bm{y}_{t-h}$. 
	By \eqref{||A_h||} under Assumptions \ref{assum_space}($ii$)--($iii$), it can be shown that 
	\begin{equation}\label{sup epsilon}
		\begin{aligned}
			&\sup_{\bm \alpha\in \mathcal{A}} \|\bm \varepsilon_{t}(\bm \alpha) - \widetilde{\bm \varepsilon}_{t}(\bm \alpha) \| \le 
			\sup_{\bm \alpha\in \mathcal{A}} \sum_{h=t}^{\infty}\|A_h(\bm{\alpha})\|\|\bm y_{t-h}\|_2 \le
			O(\rho^t) \xi_{{\rho}0}. 
		\end{aligned}
	\end{equation}
	This together with Lemma \ref{lemma_boundness_partial}($i$) implies that 
	\begin{align*}
		&\sup_{\bm \alpha \in \mathcal{A}} |\bm \varepsilon_{t}^\prime (\bm\alpha)\bm \varepsilon_{t}(\bm\alpha)- \widetilde{\bm \varepsilon}_{t}^\prime (\bm\alpha )\widetilde{\bm \varepsilon}_{t}(\bm\alpha)| \\
		\le& \sup_{\bm \alpha \in \mathcal{A}} |[\bm \varepsilon_{t}(\bm\alpha) -\widetilde{\bm \varepsilon}_{t}(\bm\alpha)]^\prime \bm \varepsilon_{t}(\bm\alpha)|
		+\sup_{\bm \alpha \in \mathcal{A}} | \widetilde{\bm \varepsilon}_{t}^\prime (\bm\alpha )[\bm \varepsilon_{t}(\bm\alpha)-\widetilde{\bm \varepsilon}_{t}(\bm\alpha)]|
		\le O(\rho^t)\xi_{{\rho}0}(\xi_{{\rho},t-1} +\|\bm \varepsilon_{0t}\|). 
	\end{align*}
	Hence ($i$) holds. 
	
	Similarly, by \eqref{de_epsilon_alpha}--\eqref{de_epsilon_alpha^2} and Assumptions \ref{assum_space}($ii$)--($iii$), it can be shown that 
	\begin{align}
		&\sup_{\bm \alpha \in \mathcal{A}} \left\| \frac{\partial \bm \varepsilon_{t}(\bm \alpha)}{\partial \bm{\alpha}^\prime} 
		-\frac{\partial \widetilde{\bm \varepsilon}_{t}(\bm \alpha )}{\partial \bm{\alpha}^\prime}\right\| 
		\le O(\rho^{t-p-1})\xi_{{\rho}0} \quad \text{and} \quad\nonumber\\
		&\sup_{\bm \alpha \in \mathcal{A}} \left\|\frac{\partial(\text{vec}(\partial \bm{\varepsilon}_t^\prime (\bm{\alpha}) /\partial \bm{\alpha}))}{\partial\bm{\alpha}^\prime}
		-\frac{\partial(\text{vec}(\partial \widetilde{\bm \varepsilon}_t^\prime (\bm{\alpha}) /\partial \bm{\alpha}))}{\partial\bm{\alpha}^\prime}
		\right\| 
		\le O(\rho^{t-p-2}) \xi_{{\rho}0}.\label{ddepsilon_initial}
	\end{align}
	These together with Lemma \ref{lemma_boundness_partial}, imply that ($ii$)-($iii$) hold. 
\end{proof}

\begin{lemma}\label{lemma_uni_max}
	Under Assumptions \ref{assum_identifiability} and \ref{assum_stationarity_epsilon}, $E[\bm \varepsilon_t^\prime (\bm \alpha) \bm \varepsilon_{t}(\bm \alpha)]$ has a unique minimum at $\bm \alpha_0$.
\end{lemma}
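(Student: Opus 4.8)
The plan is to reduce the claim to a Pythagorean-type decomposition of $\bm{\varepsilon}_t(\bm{\alpha})$ followed by an appeal to Lemma \ref{lemma_identifiability}. Define $\bm{d}_t(\bm{\alpha}) := \bm{\varepsilon}_t(\bm{\alpha}) - \bm{\varepsilon}_{0t} = \sum_{h=1}^{\infty}[A_{0h} - A_h(\bm{\alpha})]\bm{y}_{t-h}$, which, as a limit of functions of $\bm{y}_{t-1}, \bm{y}_{t-2}, \ldots$, is measurable with respect to $\mathcal{F}_{t-1} := \sigma(\bm{y}_{t-1}, \bm{y}_{t-2}, \ldots)$. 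Under Assumption \ref{assum_stationarity_epsilon}, $\{\bm{\varepsilon}_{0t}\}$ is $i.i.d.$ with zero mean, so $\bm{\varepsilon}_{0t}$ is the model innovation satisfying $E(\bm{\varepsilon}_{0t} \mid \mathcal{F}_{t-1}) = 0$, and $E\|\bm{\varepsilon}_{0t}\|_2^2 = \text{tr}(\Sigma_0) < \infty$. First I would expand
\[
E\big[\bm{\varepsilon}_t^\prime(\bm{\alpha})\bm{\varepsilon}_t(\bm{\alpha})\big] = E\|\bm{\varepsilon}_{0t}\|_2^2 + 2\,E\big[\bm{\varepsilon}_{0t}^\prime \bm{d}_t(\bm{\alpha})\big] + E\|\bm{d}_t(\bm{\alpha})\|_2^2,
\]
and, in the case $E\|\bm{d}_t(\bm{\alpha})\|_2^2 < \infty$ (so that the cross term is absolutely integrable by Cauchy--Schwarz), kill the middle term via the law of iterated expectations, $E[\bm{\varepsilon}_{0t}^\prime \bm{d}_t(\bm{\alpha})] = E[E(\bm{\varepsilon}_{0t}^\prime \mid \mathcal{F}_{t-1})\bm{d}_t(\bm{\alpha})] = 0$. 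This yields $E[\bm{\varepsilon}_t^\prime(\bm{\alpha})\bm{\varepsilon}_t(\bm{\alpha})] = \text{tr}(\Sigma_0) + E\|\bm{d}_t(\bm{\alpha})\|_2^2 \ge \text{tr}(\Sigma_0)$; since $\bm{d}_t(\bm{\alpha}_0) = 0$, this lower bound is attained at $\bm{\alpha}_0$.

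For uniqueness, I would note that equality in $E[\bm{\varepsilon}_t^\prime(\bm{\alpha})\bm{\varepsilon}_t(\bm{\alpha})] \ge \text{tr}(\Sigma_0)$ forces $E\|\bm{d}_t(\bm{\alpha})\|_2^2 = 0$, i.e.\ $\bm{\varepsilon}_t(\bm{\alpha}) = \bm{\varepsilon}_{0t}$ almost surely; Lemma \ref{lemma_identifiability} (which relies only on Assumptions \ref{assum_identifiability} and \ref{assum_stationarity_epsilon}) then gives $\bm{\alpha} = \bm{\alpha}_0$, so $\bm{\alpha}_0$ is the unique minimizer. To dispose of the degenerate case $E\|\bm{d}_t(\bm{\alpha})\|_2^2 = +\infty$, I would use $\|\bm{\varepsilon}_t(\bm{\alpha})\|_2^2 \ge \tfrac12\|\bm{d}_t(\bm{\alpha})\|_2^2 - \|\bm{\varepsilon}_{0t}\|_2^2$ together with $E\|\bm{\varepsilon}_{0t}\|_2^2 < \infty$ to conclude $E[\bm{\varepsilon}_t^\prime(\bm{\alpha})\bm{\varepsilon}_t(\bm{\alpha})] = +\infty > \text{tr}(\Sigma_0)$, so such an $\bm{\alpha}$ cannot be a minimizer; consequently no moment condition on $\bm{y}_t$ beyond Assumptions \ref{assum_identifiability} and \ref{assum_stationarity_epsilon} is needed.

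I expect the point requiring most care to be the vanishing of the cross term: specifically, establishing the $\mathcal{F}_{t-1}$-measurability of the infinite series $\bm{d}_t(\bm{\alpha})$ and keeping track of integrability so that the conditional-expectation argument is legitimate. The genuinely substantive ingredient, the identification argument that $\bm{\varepsilon}_t(\bm{\alpha}) = \bm{\varepsilon}_{0t}$ a.s.\ implies $\bm{\alpha} = \bm{\alpha}_0$, is already supplied by Lemma \ref{lemma_identifiability}, so once the decomposition above is in place the remainder is routine.
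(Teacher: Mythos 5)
Your proposal is correct and follows essentially the same route as the paper's proof: the same decomposition $E\|\bm \varepsilon_t(\bm\alpha)\|_2^2 = E\|\bm\varepsilon_{0t}\|_2^2 + E\|\bm\varepsilon_t(\bm\alpha)-\bm\varepsilon_{0t}\|_2^2$ after killing the cross term by conditioning on the past, followed by Lemma \ref{lemma_identifiability} to get uniqueness from $\bm\varepsilon_t(\bm\alpha)=\bm\varepsilon_{0t}$ a.s. Your extra care about integrability of the cross term and the case $E\|\bm d_t(\bm\alpha)\|_2^2=\infty$ is a harmless refinement of the same argument (the paper implicitly relies on $\bm y_{t-h}$ being measurable with respect to past innovations, just as you do).
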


\begin{proof}
	Recall that $\bm{\varepsilon}_t(\bm{\alpha}) = \bm y_t - \sum_{h=1}^{\infty}A_{h}(\bm{\alpha})\bm y_{t-h}$ and $\bm \varepsilon_{0t} = \bm y_t - \sum_{h=1}^{\infty}A_{0h}\bm y_{t-h}$. 
	It holds that 
	\begin{align*}
		E[\bm \varepsilon_t^\prime (\bm \alpha) \bm \varepsilon_{t}(\bm \alpha)]
		=&E[(\bm \varepsilon_t(\bm \alpha)-\bm \varepsilon_{0t})^\prime  (\bm \varepsilon_t(\bm \alpha)-\bm \varepsilon_{0t})]+E[\bm \varepsilon_{0t}^\prime \bm\varepsilon_{0t}]
		+2E[(\bm \varepsilon_t(\bm \alpha)-\bm \varepsilon_{0t})^\prime \bm\varepsilon_{0t}].
	\end{align*}
	Note that $\bm \varepsilon_t(\bm \alpha)- \bm \varepsilon_{0t}= - \sum_{h=1}^{\infty} (A_h(\bm{\alpha})-A_{0h}) \bm y_{t-h} \in \mathcal{F}_{t-1}$ with $\mathcal{F}_{t}$ being the $\sigma$-field generated by $\{\bm \varepsilon_{0s}, s \le t\}$, and $E(\bm \varepsilon_{0t}\mid \mathcal{F}_{t-1}) = 0$ under Assumption \ref{assum_stationarity_epsilon}. 
	Then using the law of iterated expectation, we have that 
	\begin{align*}
		E[\bm \varepsilon_t^\prime (\bm \alpha) \bm \varepsilon_{t}(\bm \alpha)]
		=&E[(\bm \varepsilon_t(\bm \alpha)-\bm \varepsilon_{0t})^\prime(\bm \varepsilon_t(\bm \alpha)-\bm \varepsilon_{0t})]
		+E[\bm \varepsilon_{0t}^\prime  \bm \varepsilon_{0t}].
	\end{align*}
	By Lemma \ref{lemma_identifiability}, it can be easily shown that $E[(\bm \varepsilon_t(\bm \alpha)-\bm \varepsilon_{0t})^\prime(\bm \varepsilon_t(\bm \alpha)-\bm \varepsilon_{0t})]$ has a unique minimum at $\bm \alpha_0$.
	It then follows that $E[\bm \varepsilon_t^\prime (\bm \alpha) \bm \varepsilon_{t}(\bm \alpha)]$ has a unique minimum at $\bm \alpha_0$. 
\end{proof}

\begin{lemma} \label{lemma_postive_J_LS}
	Under Assumptions \ref{assum_identifiability} and \ref{assum_stationarity_epsilon}, $\mathcal{I}_{LS}$ and $\mathcal{J}_{LS}$ are positive definite.
\end{lemma}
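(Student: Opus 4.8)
The plan is to reduce both assertions to a single nondegeneracy statement about the derivative vector $\partial\bm\varepsilon_t(\bm\alpha_0)/\partial\bm\alpha^\prime$ and then prove that statement by combining the explicit derivative formulas in \eqref{de_epsilon_alpha} with the identifiability reasoning already used for Lemma~\ref{lemma_identifiability}. For every $\bm c\in\mathbb{R}^{n_1}$ we have
$$
\bm c^\prime\mathcal{J}_1\bm c=E\Big\|\tfrac{\partial\bm\varepsilon_t(\bm\alpha_0)}{\partial\bm\alpha^\prime}\bm c\Big\|_2^2\ge0
\quad\text{and}\quad
\bm c^\prime\mathcal{I}_1\bm c=E\Big\|\Sigma_0^{1/2}\tfrac{\partial\bm\varepsilon_t(\bm\alpha_0)}{\partial\bm\alpha^\prime}\bm c\Big\|_2^2\ge0,
$$
so both matrices are positive semi-definite; since $\Sigma_0$ is positive definite by Assumption~\ref{assum_stationarity_epsilon}, $\bm c^\prime\mathcal{I}_1\bm c=0$ or $\bm c^\prime\mathcal{J}_1\bm c=0$ each forces $\partial\bm\varepsilon_t(\bm\alpha_0)/\partial\bm\alpha^\prime\,\bm c=0$ a.s., and it therefore suffices to show this identity implies $\bm c=0$. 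Writing $\bm c$ in blocks, with scalar entries $c_{\lambda_i},c_{\gamma_j},c_{\varphi_j}$ and $N\times N$ blocks $C_1,\dots,C_d$ matching $G_1,\dots,G_d$, the formulas \eqref{de_epsilon_alpha} give $\partial\bm\varepsilon_t(\bm\alpha_0)/\partial\bm\alpha^\prime\,\bm c=\sum_{h\ge1}B_h\bm y_{t-h}$ for explicit coefficient matrices $B_h$ built from the $c$'s, the true $G_{0k}$'s, the scalar sequences $\lambda_{0i}^{h-p}$, $\gamma_{0j}^{h-p}\cos((h-p)\varphi_{0j})$, $\gamma_{0j}^{h-p}\sin((h-p)\varphi_{0j})$, and their $m=h-p$ derivatives.

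Next I would argue, exactly as in the proof of Lemma~\ref{lemma_identifiability}, that $\sum_{h\ge1}B_h\bm y_{t-h}=0$ a.s.\ forces $B_h=0$ for every $h$: if $k_0$ is the smallest index with $B_{k_0}\neq0$, then substituting $\bm y_{t-k_0}=\sum_{h\ge1}A_{0h}\bm y_{t-k_0-h}+\bm\varepsilon_{0,t-k_0}$ from \eqref{model_VARinf} shows that $B_{k_0}\bm\varepsilon_{0,t-k_0}$ is measurable with respect to the $\sigma$-field generated by $\{\bm\varepsilon_{0s}:s\le t-k_0-1\}$, hence equals its conditional mean $0$, contradicting $B_{k_0}\Sigma_0B_{k_0}^\prime\neq0$ (Assumption~\ref{assum_stationarity_epsilon}). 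For $1\le h\le p$ only the $\bm g$-block contributes and $B_h=-C_h$, so $C_1=\dots=C_p=0$. For $h=p+m$, $m\ge1$, collecting terms shows $B_{p+m}=0$ is equivalent to a vanishing matrix-coefficient combination
$$
\sum_{i=1}^{r}\big(\lambda_{0i}^{m}\,C_{p+i}+m\lambda_{0i}^{m-1}\,c_{\lambda_i}G_{0,p+i}\big)+\sum_{j=1}^{s}\sum_{\ell=1}^{4}s^{(\ell)}_{m,j}\,P^{(\ell)}_j=0\qquad\text{for all }m\ge1,
$$
where $s^{(1)}_{m,j}=\gamma_{0j}^m\cos(m\varphi_{0j})$, $s^{(2)}_{m,j}=\gamma_{0j}^m\sin(m\varphi_{0j})$, $s^{(3)}_{m,j}=m\gamma_{0j}^{m-1}\cos(m\varphi_{0j})$, $s^{(4)}_{m,j}=m\gamma_{0j}^{m-1}\sin(m\varphi_{0j})$ (the $\varphi_j$-derivative term $m\gamma_{0j}^m$ is absorbed as $\gamma_{0j}\cdot m\gamma_{0j}^{m-1}$, so no new sequence arises), and, up to overall signs, $P^{(1)}_j=C_{p+r+2j-1}$, $P^{(2)}_j=C_{p+r+2j}$, $P^{(3)}_j=c_{\gamma_j}G_{0,p+r+2j-1}+\gamma_{0j}c_{\varphi_j}G_{0,p+r+2j}$, $P^{(4)}_j=c_{\gamma_j}G_{0,p+r+2j}-\gamma_{0j}c_{\varphi_j}G_{0,p+r+2j-1}$.

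The decisive step is then that, because the $\lambda_{0i}$ are distinct and nonzero, the $\gamma_{0j}$ are distinct and nonzero, and each $\varphi_{0j}\in(0,\pi)$ by Assumption~\ref{assum_identifiability}(i), the scalar sequences $\{\lambda_{0i}^m\}_i$, $\{m\lambda_{0i}^{m-1}\}_i$ and $\{s^{(\ell)}_{m,j}\}_{j,\ell}$ are linearly independent over $m\ge1$ --- for instance by the peeling argument in part~(i) of the proof of Lemma~\ref{lemma_identifiability} extended to the first-derivative sequences, or equivalently because their generating functions have poles at the distinct points $1/\lambda_{0i}$ and $1/(\gamma_{0j}e^{\pm\mathrm{i}\varphi_{0j}})$ of order at most two. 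Applied entrywise, this forces every matrix coefficient above to vanish: $C_{p+i}=0$ and $c_{\lambda_i}G_{0,p+i}=0$, whence $c_{\lambda_i}=0$ since $\|G_{0,p+i}\|>C>0$ by Assumption~\ref{assum_identifiability}(ii); $C_{p+r+2j-1}=C_{p+r+2j}=0$; and $P^{(3)}_j=P^{(4)}_j=0$. Reading the last pair at an entry $(a,b)$ with $u=(G_{0,p+r+2j-1})_{ab}\neq0$ (such an entry exists since $\|G_{0,p+r+2j-1}\|>C$) and $v=(G_{0,p+r+2j})_{ab}$ gives a $2\times2$ linear system for $(c_{\gamma_j},c_{\varphi_j})$ with determinant $-\gamma_{0j}(u^2+v^2)\neq0$, so $c_{\gamma_j}=c_{\varphi_j}=0$. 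Hence $\bm c=0$, and both $\mathcal{I}_1$ and $\mathcal{J}_1$ are positive definite. I expect the main obstacle to be precisely this linear-independence step together with the bookkeeping that cleanly separates the $\gamma_j$- and $\varphi_j$-directions through the coupled $2\times2$ system; the reduction to positive semi-definiteness and the step $B_h=0$ are routine adaptations of earlier arguments.
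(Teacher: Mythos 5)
Your proposal is correct, and its overall strategy coincides with the paper's: both reduce positive definiteness of $\mathcal{I}_{1}$ and $\mathcal{J}_{1}$ to the single nondegeneracy claim that $(\partial \bm \varepsilon_{0t}/\partial \bm \alpha^\prime)\bm c=0$ a.s.\ implies $\bm c=0$, exploiting the positive definiteness of $\Sigma_0$ (Assumption \ref{assum_stationarity_epsilon}), the linear-independence machinery behind Lemma \ref{lemma_identifiability}, and Assumption \ref{assum_identifiability}(ii) to rule out degenerate $G_{0k}$'s. The execution of that nondegeneracy step differs in organization: the paper asserts directly that a list of composite random series (the columns of the score, i.e.\ infinite weighted sums of lagged components of $\bm y_t$) are linearly independent, citing its preliminary facts (a)--(b) and part (i) of the proof of Lemma \ref{lemma_identifiability}, whereas you first peel off lags to get coefficient matrices $B_h=0$ for every $h$ via the innovation/conditional-expectation argument, and only then invoke linear independence over $m$ of the scalar families $\lambda_{0i}^m$, $m\lambda_{0i}^{m-1}$, $\gamma_{0j}^m\cos(m\varphi_{0j})$, $\gamma_{0j}^m\sin(m\varphi_{0j})$, $m\gamma_{0j}^{m-1}\cos(m\varphi_{0j})$, $m\gamma_{0j}^{m-1}\sin(m\varphi_{0j})$, before solving the small systems for $c_{\lambda_i}$ and $(c_{\gamma_j},c_{\varphi_j})$. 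Your write-up is in fact more explicit precisely where the paper is terse: it flags that the derivative sequences $m z^{m-1}$ must be added to the family handled in Lemma \ref{lemma_identifiability}(i) (the paper cites that part although it only treats the undifferentiated sequences), and it disentangles $c_{\gamma_j}$ from $c_{\varphi_j}$ through the $2\times 2$ system with determinant $-\gamma_{0j}(u^2+v^2)\neq 0$; the only implicit ingredient you share with the paper is that $\bm y_{t-h}$ is measurable with respect to the error filtration, which the paper also uses without comment. No gap.
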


\begin{proof}
	We first show $\mathcal{J}_{LS}$ is positive definite, and it suffices to verify that $\bm c' \mathcal{J}_{LS} \bm c>0$ holds for any nonzero constant vector $\bm c \in \mathbb{R}^{n_\alpha}$.
	Recall that $\mathcal{J}_{LS} = E\left[\partial \bm \varepsilon_{t}^\prime(\bm \alpha_0) / \partial \bm \alpha \partial \bm \varepsilon_{t}(\bm \alpha_0)/ \partial \bm \alpha'\right]$, 
	hence $\bm c' \mathcal{J}_{LS} \bm c>0$ holds for any $\bm c$ that satisfies $(\partial \bm \varepsilon_{0t}/ \partial \bm \alpha') \bm c\neq 0_{N \times 1}$ a.s. 
	Then we only need to show that $(\partial \bm \varepsilon_{0t}/ \partial \bm \alpha') \bm c = 0_{N \times 1}$ holds a.s. if and only if $\bm c = 0_{n_\alpha \times 1}$. 	
	Suppose that $(\partial \bm \varepsilon_{0t}/ \partial \bm \alpha') \bm c = 0_{N \times 1}$. 
	Denote $\bm y_t = (y_{1t}, \ldots, y_{Nt})^\prime$. 
	Note that
	(a) $\{y_{kt}, 1\le k \le N\}$ are linearly independent, since if $y_{kt}$ can be represented as a linear combination of $\{y_{k^\prime t}, k^\prime \neq k\}$ then $\var(\bm y_t \mid \mathcal{F}_{t-1}) =\Sigma_0$ is not full rank, which contradicts with the positive definiteness of $\Sigma_0$ under Assumption \ref{assum_stationarity_epsilon}; 
	(b) $\{\bm y_{t-j}, j\ge 0\}$ are linearly independent by the proof of Lemma \ref{lemma_identifiability}. 
	By (a)--(b) and ($i$) in the proof of Lemma \ref{lemma_identifiability}, we can easily show that the following terms are linearly independent: 
	\begin{align*}
		\bigg \{ &\sum_{m=1}^{\infty} m \lambda^{m-1} G_{p+i} y_{k,t-p-m}, \quad
		\sum_{m=1}^{\infty} m \gamma_j^{m-1}[ \cos(m \varphi_j)G_{p+r+2j-1} + \sin(m \varphi_j) G_{p+r+2j}]y_{k,t-p-m},\\
		&\sum_{m=1}^{\infty} m \gamma_j^{m}[ \sin(m \varphi_j)G_{p+r+2j-1} - \cos(m \varphi_j) G_{p+r+2j}]y_{k,t-p-m}, \quad 
		y_{k, t-h},\\
		&\sum_{m=1}^{\infty}  \lambda^{m} G_{p+i} y_{k,t-p-m}, \quad \text{and}\quad 
		\sum_{m=1}^{\infty} \gamma_j^{m}[ \cos(m \varphi_j)G_{p+r+2j-1} + \sin(m \varphi_j) G_{p+r+2j}]y_{k,t-p-m},\\
		&1\le h\le p, 1\le i \le r, 1\le k \le N  \bigg\}.
	\end{align*}
	This together with \eqref{de_epsilon_alpha} implies that $\bm c=0_{n_\alpha\times 1}$.
	Thus $(\partial \bm \varepsilon_{0t}/ \partial \bm \alpha') \bm c = 0_{N \times 1}$ holds a.s. if and only if $\bm c = 0_{n_\alpha \times 1}$. 
	As a result, $\mathcal{J}_{LS}$ is positive definite. 

	Recall that $\mathcal{I}_{LS} = E\left[(\partial \bm \varepsilon_{t}^\prime(\bm \alpha_0) / \partial \bm \alpha) \Sigma_{0} (\partial \bm \varepsilon_{t}(\bm \alpha_0)/ \partial \bm \alpha')\right]$. 
	Since $\Sigma_{0}$ is positive definite under Assumption \ref{assum_stationarity_epsilon}, it can be similarly shown that $\mathcal{I}_{LS}$ is also positive definite. 
\end{proof}

\begin{lemma} \label{lemma_score_asy_Norm_LS}
	Suppose that $\{\bm y_t\}$ is generated by model \eqref{model_VARinf} and Assumptions \ref{assum_identifiability}--\ref{assum_space} hold, 
	then 
	\begin{align*}
		\frac{1}{\sqrt{T}} \sum_{t=1}^{T} \frac{\partial \widetilde{\bm \varepsilon}_{t}^\prime(\bm \alpha_0) }{\partial \bm \alpha} \widetilde{\bm \varepsilon}_{0t} \rightarrow_{d} N(0_{n_\alpha\times 1},\mathcal{I}_{LS}) \quad \text{as} \quad T\to \infty.
	\end{align*}
\end{lemma}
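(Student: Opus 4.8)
The plan is to prove this martingale-type central limit theorem in three stages: first remove the effect of the truncated (zero) initial values, then recognize the remaining sum as a normalized partial sum of a strictly stationary ergodic martingale difference sequence, and finally invoke the martingale CLT together with the Cram\'er--Wold device.

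First I would replace the truncated score $\partial\widetilde{\bm\varepsilon}_t^\prime(\bm\alpha_0)/\partial\bm\alpha\,\widetilde{\bm\varepsilon}_{0t}$ by the untruncated one $\partial\bm\varepsilon_t^\prime(\bm\alpha_0)/\partial\bm\alpha\,\bm\varepsilon_{0t}$. Evaluating Lemma~\ref{lemma_initial_effect_LS}(ii) at $\bm\alpha=\bm\alpha_0$, where $\bm\varepsilon_t(\bm\alpha_0)=\bm\varepsilon_{0t}$ and $\widetilde{\bm\varepsilon}_t(\bm\alpha_0)=\widetilde{\bm\varepsilon}_{0t}$, bounds $\|\partial\widetilde{\bm\varepsilon}_t^\prime(\bm\alpha_0)/\partial\bm\alpha\,\widetilde{\bm\varepsilon}_{0t}-\partial\bm\varepsilon_t^\prime(\bm\alpha_0)/\partial\bm\alpha\,\bm\varepsilon_{0t}\|$ by $O(\rho^t)\xi_{\rho 0}\xi_{\rho t}$. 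Summing over $t$ and dividing by $\sqrt T$, and using $E(\xi_{\rho t})<\infty$ under Assumption~\ref{assum_stationarity_y} together with $E(\|\bm y_t\|_2^2)<\infty$ (so $\sum_{t\ge1}\rho^t\xi_{\rho t}=O_p(1)$ and $\xi_{\rho 0}=O_p(1)$), this discrepancy is $O_p(T^{-1/2})=o_p(1)$. It then suffices to establish the CLT for $T^{-1/2}\sum_{t=1}^{T}\bm m_t$, where $\bm m_t:=\partial\bm\varepsilon_t^\prime(\bm\alpha_0)/\partial\bm\alpha\,\bm\varepsilon_{0t}$.

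Next I would set up the martingale structure. Let $\mathcal{F}_t=\sigma(\bm\varepsilon_{0s}:s\le t)$. By the causal stationary solution of the model under Assumption~\ref{assum_stationarity_y}, each $\bm y_{t-h}$ with $h\ge1$ is $\mathcal{F}_{t-1}$-measurable, so the explicit derivative formulas in \eqref{de_epsilon_alpha} show $\partial\bm\varepsilon_t^\prime(\bm\alpha_0)/\partial\bm\alpha$ is $\mathcal{F}_{t-1}$-measurable; since $\bm\varepsilon_{0t}$ is independent of $\mathcal{F}_{t-1}$ with $E(\bm\varepsilon_{0t})=0$ by Assumption~\ref{assum_stationarity_epsilon}, $\{\bm m_t,\mathcal{F}_t\}$ is a strictly stationary, ergodic martingale difference sequence, and $E\|\bm m_t\|_2^2\le E(\|\partial\bm\varepsilon_t^\prime(\bm\alpha_0)/\partial\bm\alpha\|^2)\,E\|\bm\varepsilon_{0t}\|_2^2<\infty$ by independence, Lemma~\ref{lemma_boundness_partial}(ii) with $E(\xi_{\rho t}^2)<\infty$, and Assumption~\ref{assum_stationarity_epsilon}. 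Then, fixing $\bm c\in\mathbb{R}^{n_1}$ and setting $u_t=\bm c^\prime\bm m_t$, I would use $E(\bm\varepsilon_{0t}\bm\varepsilon_{0t}^\prime\mid\mathcal{F}_{t-1})=\Sigma_0$ to get $E(u_t^2\mid\mathcal{F}_{t-1})=\bm c^\prime(\partial\bm\varepsilon_t^\prime(\bm\alpha_0)/\partial\bm\alpha)\Sigma_0(\partial\bm\varepsilon_t(\bm\alpha_0)/\partial\bm\alpha^\prime)\bm c$, a stationary ergodic integrable sequence by Lemma~\ref{lemma_boundness_partial}(ii); Birkhoff's ergodic theorem then gives $T^{-1}\sum_{t=1}^T E(u_t^2\mid\mathcal{F}_{t-1})\to\bm c^\prime\mathcal{I}_{1}\bm c$ almost surely. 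The conditional Lindeberg condition follows from strict stationarity, as all $u_t^2$ share the common integrable law of $u_1^2$. The martingale central limit theorem for stationary ergodic sequences then yields $T^{-1/2}\sum_{t=1}^T u_t\rightarrow_{d}N(0,\bm c^\prime\mathcal{I}_{1}\bm c)$, and Cram\'er--Wold upgrades this to $T^{-1/2}\sum_{t=1}^T\bm m_t\rightarrow_{d}N(0_{n_1\times1},\mathcal{I}_{1})$; combined with Step~1 this gives the claim.

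The main obstacle I anticipate is not any single calculation but making the martingale CLT go through under only two finite moments on $\bm\varepsilon_{0t}$: a moment-type martingale CLT would require fourth moments, so one must instead use the stationary--ergodic version, which trades moment conditions for ergodicity in verifying both the conditional-variance convergence (via the ergodic theorem) and the conditional Lindeberg condition (via stationarity). A secondary point requiring care is the $\mathcal{F}_{t-1}$-measurability of $\partial\bm\varepsilon_t^\prime(\bm\alpha_0)/\partial\bm\alpha$, i.e.\ that the stationary solution of the VAR$(\infty)$ model is causal, which is implicit in Assumption~\ref{assum_stationarity_y}; positive-definiteness of $\mathcal{I}_1$ (Lemma~\ref{lemma_postive_J_LS}) is not needed for the CLT itself but ensures the limiting law is nondegenerate.
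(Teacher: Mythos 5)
Your proposal is correct and follows essentially the same route as the paper: the initial-value discrepancy is removed via Lemma \ref{lemma_initial_effect_LS}($ii$) and Markov's inequality, and the remaining sum is treated as a strictly stationary ergodic square-integrable martingale difference array, to which the Billingsley-type martingale CLT and the Cram\'er--Wold device are applied, with $\var(\bm c^\prime \bm m_t)=\bm c^\prime\mathcal{I}_1\bm c$ finite by Lemma \ref{lemma_boundness_partial}($ii$) and $E(\xi_{\rho t}^2)<\infty$. Your explicit verification of the conditional-variance convergence and Lindeberg condition is simply an unpacking of the hypotheses already guaranteed by that stationary--ergodic CLT, so no substantive difference remains.
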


\begin{proof}
	By Lemma \ref{lemma_boundness_partial}($ii$), it can be shown that 
	\begin{align*}
		E\left(\left\|\frac{\partial\bm \varepsilon_{t}^\prime(\bm \alpha_0)}{\partial \bm{\alpha} } \Sigma_0 \frac{\partial \bm \varepsilon_{t}(\bm \alpha_0) }{\partial \bm{\alpha}^\prime}\right\|\right)
		\le& E\left(\left\| \frac{\partial \bm \varepsilon_{t}^\prime(\bm \alpha_0)}{\partial \bm \alpha} \right\|^2\|\Sigma_0\| \right)
		\le O(1) E\left(\xi_{{\rho},t-1}^2\right)\|\Sigma_0\|, 
	\end{align*}
	where $\xi_{\rho t}= 1+\sum_{h=0}^{\infty}\rho^{h}\|\bm y_{t-h}\|_2$. 
	This together with $\|\Sigma_0\| < \infty$ under Assumption \ref{assum_stationarity_epsilon}, and $E\left(\xi_{{\rho}t}^2\right) < \infty$ by Assumptions \ref{assum_stationarity_y} and \ref{assum_space}($ii$), implies that $\mathcal{I}_{LS}$ is finite.
	
	Let $S_{  1} = \sum_{t=1}^T S_{1t}$ with $S_{1t} =\bm c'\partial \bm \varepsilon_{t}^\prime(\bm \alpha_0) /\partial \bm \alpha \bm \varepsilon_{0t}$, where $\bm c$ is a constant vector with the same dimension as $\bm \alpha$. 
	Note that by Assumptions \ref{assum_identifiability}--\ref{assum_stationarity_epsilon} and \eqref{de_epsilon_alpha}, 
	$\{S_{1t}\}$ is ergodic and stationary, 
	$S_{1t}$ is measurable with respect to $\mathcal{F}_{t}$ that is the $\sigma$-field generated by $\{\bm \varepsilon_{0s}, s \le t\}$, and 
	$E(S_{1t}\mid \mathcal{F}_{t-1})=0$. 
	Moreover, since $\mathcal{I}_{LS}$ is finite, 
	we can obtain that 
	\begin{align*}
		\var(S_{1t})= \bm c' E\left[\frac{\partial \bm \varepsilon_{t}^\prime(\bm \alpha_0)}{\bm \alpha}\Sigma_0\frac{\partial \bm \varepsilon_{t}(\bm \alpha_0)}{\partial \bm \alpha^\prime } \right]\bm c = \bm c'\mathcal{I}_{LS} \bm c < \infty.
	\end{align*}
	Thus $\{S_{1t}, \mathcal{F}_{t}\}_t$ is an ergodic, stationary and square integrable martingale difference. 
	By the martingale central limit theorem (CLT) for a stationary and ergodic sequence of square integrable martingale increments \citep{Billingsley1961lindeberg}, we have that 
	\begin{align*}
		\frac{1}{\sqrt{T}} S_{  1}\rightarrow_{d} N(0,\bm c' \mathcal{I}_{LS}\bm c) \;\; \text{as} \;\; T \to \infty. 
	\end{align*}
	This together with the Cram\'{e}r-Wold device, implies that 
	\begin{align}\label{sum dLS convergence}
		\frac{1}{\sqrt{T}} \sum_{t=1}^{T} \frac{\partial \bm \varepsilon_{t}^\prime(\bm \alpha_0)}{\partial \bm \alpha} \bm \varepsilon_{0t} \rightarrow_d N(0_{n_\alpha\times 1}, \mathcal{I}_{LS}) \;\; \text{as} \;\; T \to \infty. 
	\end{align}
	Furthermore, by Lemma \ref{lemma_initial_effect_LS}($ii$), it holds that 
	\begin{align}
		\left\| \frac{1}{\sqrt{T}}\sum_{t=1}^{T} \left[\frac{\partial \bm \varepsilon_{t}^\prime(\bm \alpha_0)}{\partial \bm \alpha} \bm \varepsilon_{0t}
		- \frac{\partial \widetilde{\bm \varepsilon}_{t}^\prime(\bm \alpha_0)}{\partial \bm \alpha} \widetilde{\bm \varepsilon}_{0t} \right]\right\|
		&\leq \frac{1}{\sqrt{T}} \sum_{t=1}^{T} \left\| \frac{\partial \bm \varepsilon_{t}^\prime(\bm \alpha_0)}{\partial \bm \alpha} \bm \varepsilon_{0t}
		- \frac{\partial \widetilde{\bm \varepsilon}_{t}^\prime(\bm \alpha_0) }{\partial \bm \alpha} \widetilde{\bm \varepsilon}_{0t}\right\|\nonumber\\
		&\leq \frac{1}{\sqrt{T}} \sum_{t=1}^{T}O(\rho^t)\xi_{{\rho}0}(\xi_{{\rho},t-1} +\|\bm \varepsilon_{0t}\|). \label{o_1_dLS}
	\end{align}
	Then by $0 < {\rho} <1$ under Assumption \ref{assum_space}($ii$), $E\left(\xi_{{\rho}t}^2\right) < \infty$ and Markov's inequality, 
	it can be shown that
	\begin{align}\label{initial_depsilon}
		\left\| \frac{1}{\sqrt{T}}\sum_{t=1}^{T} \left[\frac{\partial \bm \varepsilon_{t}^\prime(\bm \alpha_0)}{\partial \bm \alpha} \bm \varepsilon_{0t}
		- \frac{\partial \widetilde{\bm \varepsilon}_{t}^\prime(\bm \alpha_0)}{\partial \bm \alpha} \widetilde{\bm \varepsilon}_{0t} \right]\right\|
		=O_p(T^{-1/2}) .
	\end{align}
	Thus by \eqref{sum dLS convergence},\eqref{initial_depsilon} and Slutsky's theorem, we can obtain that 
	\begin{align*}
		\frac{1}{\sqrt{T}} \sum_{t=1}^{T} \frac{\partial \widetilde{\bm \varepsilon}_{t}^\prime(\bm \alpha_0)}{\partial \bm \alpha} \widetilde{\bm \varepsilon}_{0t}\rightarrow_d N(0_{n_\alpha\times 1}, \mathcal{I}_{LS}) \;\; \text{as} \;\; T \to \infty.
	\end{align*}
\end{proof}

\section{Technical Proofs} \label{appendix_proof_theorems}

\subsection{Proof of Theorem \ref{thmQMLE}}\label{proof_QMLE}

\begin{proof}
	We first show the consistency of the QMLE $\widehat{\bm \theta}$. 
	By Theorem 4.1.1 in \cite{Amemiya1985}, it suffices to show the following results hold: 
	\begin{enumerate}
		\item[(A1)] the space $\Theta$ is compact and $\bm \theta_{0}$ is an interior point in $\Theta$;
		\item[(A2)] $\widetilde{\mathcal{L}}(\bm \theta)$ is continuous in $\bm \theta \in \Theta$ and is a measurable function of $\{\bm y_t,~t\in Z\}$ for all $\bm \theta \in \Theta$;
		\item[(A3)] $\sup_{\bm \theta \in \Theta}|\widetilde{\mathcal{L}}(\bm \theta)-E[l_t(\bm \theta)]|\rightarrow_p 0$ as $T \to \infty$;
		\item[(A4)] $E[l_t(\bm \theta)]$ has a unique minimum at $\bm \theta_0$. 
	\end{enumerate}
	(A1) and (A2) hold under Assumptions \ref{assum_stationarity_y} and \ref{assum_space}($i$), 
	and (A4) holds by Lemma \ref{lemma_uni_max_MLE}. 
	Hence we only need to verify (A3). 
	For (A3), 
	recall that $\widetilde{\mathcal{L}}(\bm{\theta})=T^{-1} \sum_{t=1}^{T} \widetilde{l}_t (\bm{\theta})$. 
	By Lemma \ref{lemma_initial_effect_MLE}($i$), we obtain that 
	\begin{align} \label{lt - lttilde}
		\sup_{\bm \theta \in \Theta} |l_t(\bm \theta)- \widetilde{l}_t(\bm \theta)|\le O(\rho^t)\xi_{{\rho}0}\xi_{{\rho}t}, 
	\end{align}
	where $\xi_{\rho t}= 1+\sum_{j=0}^{\infty}\rho^{j}\|\bm y_{t-j}\|_2$. 
	Note that $0 < {\rho} <1$ under Assumption \ref{assum_space}($ii$), and $E\left(\xi_{{\rho}t}^2\right) < \infty$ by $0 < {\rho} <1$, $E(\|\bm y_t\|_2^2)<\infty$ and Assumption \ref{assum_stationarity_y}. 
	Then using Markov's inequality, \eqref{lt - lttilde} implies that 
	\begin{align}\label{o_l_initial}
		\sup_{\bm \theta \in \Theta} \left|\frac{1}{T} \sum_{t=1}^{T} \left[ l_t(\bm \theta)- \widetilde{l}_t(\bm \theta)\right] \right| 
		\leq \frac{1}{T} \sum_{t=1}^{T} \sup_{\bm \theta \in \Theta} \left| l_t(\bm \theta)- \widetilde{l}_t(\bm \theta) \right| 
		= O_p\left(\frac{1}{T}\right). 
	\end{align}
	Moreover, 
	by Lemma \ref{lemma_three consistent_MLE}($i$), it holds that 
	\begin{align}\label{o_l}
		\sup_{\bm \theta \in \Theta}\left|\dfrac{1}{T} \sum_{t=1}^T \left\{l_t(\bm \theta)-E \left[l_{t}(\bm \theta)\right]\right\}\right|=o_{p}(1)
		\;\; \text{as} \;\; T \to \infty.  
	\end{align}
	Thus by \eqref{o_l_initial} and \eqref{o_l}, we have that 
	\begin{align*}
		\sup_{\bm \theta \in \Theta}\left|\widetilde{\mathcal{L}}(\bm \theta)-E[l_t(\bm \theta)]\right|\rightarrow_p 0 \;\; \text{as} \;\; T \to \infty, 
	\end{align*}
	and then (A3) holds. 
	As a result, Theorem \ref{thmQMLE}($i$) holds by (A1)--(A4). 
	
	Next we prove the asymptotic normality of $\widehat{\bm \theta}$. 
	By Theorem \ref{thmQMLE}($i$), it holds that $\widehat{\bm{\theta}} \rightarrow_{p} \bm{\theta}_0$ as $T \to \infty$. 
	To utilize Theorem 4.1.3 of \cite{Amemiya1985}, we then show that the following results hold:
	\begin{enumerate}
		\item[(B1)] $\partial^2 \widetilde{\mathcal{L}}(\bm{\theta})/(\partial \bm \theta \partial \bm \theta') $ exists and is continuous on $\Theta$;
		\item[(B2)] $ \partial^{2} \widetilde{\mathcal{L}}(\widehat{\bm \theta}) / (\partial \bm \theta \partial \bm \theta^{\prime})\rightarrow_{p} \mathcal{J}$ as $T\to \infty$, and $\mathcal{J}$ is positive definite;
		\item[(B3)] $T^{1/2} \partial \widetilde{\mathcal{L}}(\bm \theta_0) / \partial \bm \theta \rightarrow_d N\left(0_{n\times 1}, \mathcal{I}\right)$ as $T \to \infty$. 
  	\end{enumerate}
  (B1) holds under Assumption \ref{assum_stationarity_y}, 
	and (B3) holds by Lemma \ref{lemma_score_asy_Norm_MLE}. 
	Hence we only need to show that (B2) holds. 
  Similar to the proof of \eqref{o_l_initial}, using Lemma \ref{lemma_initial_effect_MLE}($iii$) and Markov's inequality, it can be shown that 
  \begin{align*}
  	\sup_{\bm \theta \in \Theta} \left\|\frac{1}{T} \sum_{t=1}^{T}  \left(\frac{\partial^2 l_t(\bm \theta)}{\partial \bm \theta \partial \bm \theta^\prime }- \frac{\partial^2 \widetilde{l}_t(\bm \theta)}{\partial \bm \theta  \partial \bm \theta^\prime}\right)\right\| =O_p\left(\frac{1}{T}\right). 
  \end{align*}
  This together with Lemma \ref{lemma_three consistent_MLE}($iii$) implies that 
  \begin{align}\label{ddL_ML convergence}
  	\frac{\partial^{2} \widetilde{\mathcal{L}}(\widehat{\bm \theta})} {\partial \bm \theta \partial \bm \theta^{\prime}}\rightarrow_{p} E\left[\frac{\partial^{2} l_t(\widehat{\bm \theta})}{\partial \bm \theta \partial \bm \theta^{\prime}}\right] \;\; \text{as} \;\; T \to \infty. 
  \end{align}
  Moreover, since $\widehat{\bm{\theta}} \rightarrow_{p} \bm{\theta}_0$ as $T \to \infty$ and $\mathcal{J} = E(\partial ^2 l_t(\bm{\theta}_0)/\partial \bm{\theta}\partial \bm{\theta}^\prime)$ with $E(\partial ^2 l_t(\bm{\theta})/\partial \bm{\theta}\partial \bm{\theta}^\prime)$ being continuous on $\Theta$, we can obtain that 
  \begin{align}\label{ddl_t convergence}
  	E\left[\frac{\partial^{2} l_t(\widehat{\bm \theta})} {\partial \bm \theta \partial \bm \theta^{\prime}}\right]\rightarrow_{p} \mathcal{J} \;\; \text{as} \;\; T \to \infty. 
  \end{align}
  Then by \eqref{ddL_ML convergence} and \eqref{ddl_t convergence}, it holds that 
  \begin{align*}
  	 \frac{\partial^{2} \widetilde{\mathcal{L}}(\widehat{\bm \theta})} {\partial \bm \theta \partial \bm \theta^{\prime}} \rightarrow_{p} \mathcal{J} \;\; as \;\; T \to \infty. 
  \end{align*}
	Furthermore, $\mathcal{J}$ is positive definite by Lemma \ref{lemma_postive_J_MLE}. 
	Thus (B2) holds.
	As a result, by the consistency of $\widehat{\bm \theta}$, (B1)--(B3) and Theorem 4.1.3 of \cite{Amemiya1985}, it can be shown that 
	\begin{align*} 
		\sqrt{T}(\widehat{\bm \theta}-\bm \theta_0) \rightarrow_{d} N(0_{n\times 1}, \mathcal{J}^{-1} \mathcal{I} \mathcal{J}^{-1}) \;\;
		\text{as} \;\; T \to \infty. 
	\end{align*}
	This completes the proof of Theorem \ref{thmQMLE}.
\end{proof}

\subsection{Proof of Proposition \ref{thmLS}}
\begin{proof}
	We first show the consistency of the LSE $\widehat{\bm \alpha}_{LS}$. 
	By Theorem 4.1.1 in \cite{Amemiya1985}, it suffices to show the following results hold: 
	\begin{enumerate}
		\item[(a1)] the space $\mathcal{A}$ is compact and $\bm \alpha_{0}$ is an interior point in $\mathcal{A}$;
		\item[(a2)] $\widetilde{\mathcal{L}}_{LS}(\bm \alpha)$ is continuous in $\bm \alpha \in \mathcal{A}$ and is a measurable function of $\{\bm y_t,~t\in Z\}$ for all $\bm \alpha \in \mathcal{A}$;
		\item[(a3)] $\sup_{\bm \alpha \in \mathcal{A}}|\widetilde{\mathcal{L}}_{LS}(\bm \alpha)-E[\bm \varepsilon_{t}^\prime (\bm\alpha)\bm \varepsilon_{t}(\bm\alpha)]|\rightarrow_p 0$ as $T \to \infty$;
		\item[(a4)] $E[\bm \varepsilon_{t}^\prime(\bm \alpha)  \bm \varepsilon_{t}(\bm \alpha)]$ has a unique minimum at $\bm \alpha_0$. 
	\end{enumerate}
	(a1) and (a2) hold under Assumptions \ref{assum_stationarity_y}, \ref{assum_space}($i$) and Lemma \ref{lemma_boundness_LS}($i$), 
	and (a4) holds by Lemma \ref{lemma_uni_max}. 
	Hence we only need to verify (a3). 
	For (a3), recall that $\widetilde{\mathcal{L}}_{LS}(\bm{\alpha})=T^{-1} \sum_{t=1}^{T} \widetilde{\bm \varepsilon}_{t}^\prime (\bm\alpha) \widetilde{\bm \varepsilon}_{t}(\bm\alpha)$. 
	By Lemma \ref{lemma_initial_effect_LS}($i$), we obtain that 
	\begin{align} \label{epsilont2 - epsilont2tilde}
		\sup_{\bm \alpha \in \mathcal{A}} |\bm \varepsilon_{t}^\prime (\bm\alpha)\bm \varepsilon_{t}(\bm\alpha)- \widetilde{\bm \varepsilon}_{t}^\prime (\bm\alpha )\widetilde{\bm \varepsilon}_{t}(\bm\alpha)|\le O(\rho^t)\xi_{{\rho}0}(\xi_{{\rho},t-1} +\|\bm \varepsilon_{0t}\|), 
	\end{align}
	where $\xi_{\rho t}= 1+\sum_{j=0}^{\infty}\rho^{j}\|\bm y_{t-j}\|_2$. 
	Note that $0 < {\rho} <1$ under Assumption \ref{assum_space}($ii$), and $E\left(\xi_{{\rho}t}^2\right) < \infty$ by $0 < {\rho} <1$, $E(\|\bm y_t\|_2^2)<\infty$ and Assumption \ref{assum_stationarity_y}. 
	Then using Markov's inequality, \eqref{epsilont2 - epsilont2tilde} implies that
	\begin{align}\label{o_epsilon_initial}
		\sup_{\bm \alpha \in \mathcal{A}}  \left|\frac{1}{T} \sum_{t=1}^{T}[\bm \varepsilon_{t}^\prime (\bm\alpha)\bm \varepsilon_{t}(\bm\alpha)- \widetilde{\bm \varepsilon}_{t}^\prime (\bm\alpha) \widetilde{\bm \varepsilon}_{t}(\bm\alpha)]\right| 
		\leq \frac{1}{T} \sum_{t=1}^{T} \sup_{\bm \alpha \in \mathcal{A}}  \left|\bm \varepsilon_{t}^\prime (\bm\alpha)\bm \varepsilon_{t}(\bm\alpha)- \widetilde{\bm \varepsilon}_{t}^\prime (\bm\alpha) \widetilde{\bm \varepsilon}_{t}(\bm\alpha)\right| 
		= O_p\left(\frac{1}{T}\right).
	\end{align}
	Moreover, by Lemma \ref{lemma_three consistent_LS}($i$), it holds that 
	\begin{align}\label{o_epsilon}
		\sup_{\bm \alpha \in \mathcal{A}}\left|\dfrac{1}{T} \sum_{t=1}^T \left\{\bm \varepsilon_{t}^\prime (\bm\alpha)\bm \varepsilon_{t}(\bm\alpha)-E \left[\bm \varepsilon_{t}^\prime (\bm\alpha)\bm \varepsilon_{t}(\bm\alpha)\right]\right\}\right|=o_{p}(1)
		\;\; \text{as} \;\; T \to \infty.  
	\end{align}
	Thus by \eqref{o_epsilon_initial} and \eqref{o_epsilon}, we have that 
	\begin{align*}
		\sup_{\bm \alpha \in \mathcal{A}}\left|\widetilde{\mathcal{L}}_{LS}(\bm \alpha)-E[\bm \varepsilon_{t}^\prime (\bm\alpha)\bm \varepsilon_{t}(\bm\alpha)]\right|\rightarrow_p 0 \;\; \text{as} \;\; T \to \infty, 
	\end{align*}
	and then (a3) holds. 
	As a result, the consistency of $\widehat{\bm \alpha}_{LS}$ is established by (a1)--(a4). 
	
	Next we show the consistency of $\widehat{\Sigma}_{LS}$. 
	Recall that $\widehat{\Sigma}_{LS} = T^{-1} \sum_{t=1}^{T} \widetilde{\bm \varepsilon}_t(\widehat{\bm \alpha}_{LS}) \widetilde{\bm \varepsilon}_t^\prime(\widehat{\bm \alpha}_{LS})$. 
	With similar arguments in the proof of Lemma \ref{lemma_initial_effect_LS} and \eqref{o_epsilon_initial}, we can show that 
	\begin{align*}
		\sup_{\bm \alpha \in \mathcal{A}} \left\| \frac{1}{T} \sum_{t=1}^{T}[\bm \varepsilon_{t} (\bm\alpha)\bm \varepsilon_{t}^\prime(\bm\alpha)- \widetilde{\bm \varepsilon}_{t} (\bm\alpha) \widetilde{\bm \varepsilon}^\prime_{t}(\bm\alpha)]\right\| = O_p\left(\frac{1}{T}\right). 
	\end{align*}
	This together with $\widehat{\bm{\alpha}}_{LS} \rightarrow_{p} \bm{\alpha}_0$ as $T \to \infty$ and $\bm \alpha_0$ is an interior point in $\mathcal{A}$ under Assumption \ref{assum_space}($i$), implies that 
	\begin{align*}
		\left\|\widehat{\Sigma}_{LS} - \frac{1}{T} \sum_{t=1}^{T}\bm \varepsilon_{t} (\widehat{\bm \alpha}_{LS})\bm \varepsilon_{t}^\prime(\widehat{\bm \alpha}_{LS})\right\| = O_p\left(\frac{1}{T}\right).  
	\end{align*}
	And it follows that 
	\begin{align} \label{SigmaLS eq1}
		\widehat{\Sigma}_{LS} - \frac{1}{T} \sum_{t=1}^{T}\bm \varepsilon_{t} (\widehat{\bm \alpha}_{LS})\bm \varepsilon_{t}^\prime(\widehat{\bm \alpha}_{LS}) = O_p\left(\frac{1}{T}\right).  
	\end{align}
	Moreover, by Taylor's expansion and $\bm \varepsilon_{0t} = \bm \varepsilon_{t}(\bm \alpha_0)$, we have that 
	\begin{align}
		&\frac{1}{T} \sum_{t=1}^{T} {\rm vec} [\bm \varepsilon_{t}(\widehat{\bm \alpha}_{LS}) \bm \varepsilon_{t}^{\prime}(\widehat{\bm \alpha}_{LS})] \nonumber \\
		=&\frac{1}{T} \sum_{t=1}^{T}\Bigg\{  {\rm vec} (\bm \varepsilon_{0t} \bm \varepsilon_{0t}^{\prime}) 
		+ \left[(\bm \varepsilon_{t}(\bm \alpha^\star) \otimes I_N) \frac{\partial \bm \varepsilon_t(\bm \alpha^\star)}{\partial \bm \alpha^\prime} +(I_N \otimes \bm \varepsilon_{t}(\bm \alpha^\star)) \frac{\partial \bm \varepsilon_t(\bm \alpha^\star)}{\partial \bm \alpha^\prime}  \right] (\widehat{\bm \alpha}_{LS} - \bm \alpha_0)\Bigg\}, \label{taylor sigma}
	\end{align}
	where $\bm \alpha^\star$ lies between $\widehat{\bm \alpha}_{LS}$ and $\bm \alpha_{0}$. 
	With analogous arguments in the proof of Lemma \ref{lemma_three consistent_LS}, it can be shown that 
	\begin{align} \label{d - Ed}
		\sup_{\bm \alpha \in \mathcal{A}} \bigg\|
		&\frac{1}{T} \sum_{t=1}^{T} \left\{\left[(\bm \varepsilon_{t}(\bm \alpha) \otimes I_N) \frac{\partial \bm \varepsilon_t(\bm \alpha)}{\partial \bm \alpha^\prime} +(I_N \otimes \bm \varepsilon_{t}(\bm \alpha)) \frac{\partial \bm \varepsilon_t(\bm \alpha)}{\partial \bm \alpha^\prime}  \right]\right. \nonumber \\
		&-E\left.\left[(\bm \varepsilon_{t}(\bm \alpha) \otimes I_N) \frac{\partial \bm \varepsilon_t(\bm \alpha)}{\partial \bm \alpha^\prime} +(I_N \otimes \bm \varepsilon_{t}(\bm \alpha)) \frac{\partial \bm \varepsilon_t(\bm \alpha)}{\partial \bm \alpha^\prime}  \right]\right\} \bigg\| = o_p(1), 
	\end{align}
	as $T \to \infty$. 
	Note that by \eqref{de_epsilon_alpha}, $\bm{\varepsilon}_t(\bm{\alpha}) = \bm y_t - \sum_{h=1}^{\infty}A_{h}(\bm{\alpha})\bm y_{t-h}$ and $E[\bm \varepsilon_{t}(\bm \alpha_0)]=0$ under Assumption \ref{assum_stationarity_epsilon} and using the law of iterated expectation, we have that 
	\begin{align} \label{Ed at alpha0}
		E\left[(\bm \varepsilon_{t}(\bm \alpha_0) \otimes I_N) \frac{\partial \bm \varepsilon_t(\bm \alpha_0)}{\partial \bm \alpha^\prime} +(I_N \otimes \bm \varepsilon_{t}(\bm \alpha_0)) \frac{\partial \bm \varepsilon_t(\bm \alpha_0)}{\partial \bm \alpha^\prime}\right] = 0. 
	\end{align}
	Then by \eqref{d - Ed}--\eqref{Ed at alpha0}, $\widehat{\bm{\alpha}}_{LS} \rightarrow_{p} \bm{\alpha}_0$ as $T \to \infty$ and $\bm \alpha_0$ is an interior point in $\mathcal{A}$, it can be obtained that 
	\begin{align}\label{o_1d_sigma}
		\frac{1}{T} \sum_{t=1}^{T} \left[(\bm \varepsilon_{t}(\bm \alpha^\star) \otimes I_N) \frac{\partial \bm \varepsilon_t(\bm \alpha^\star)}{\partial \bm \alpha^\prime} +(I_N \otimes \bm \varepsilon_{t}(\bm \alpha^\star)) \frac{\partial \bm \varepsilon_t(\bm \alpha^\star)}{\partial \bm \alpha^\prime}  \right] = o_p(1) 
		\;\; \text{as} \;\; T \to \infty,  
	\end{align} 
	and thus \eqref{taylor sigma} implies that 
	\begin{align} \label{SigmaLS eq2}
		\frac{1}{T} \sum_{t=1}^{T} {\rm vec} [\bm \varepsilon_{t}(\widehat{\bm \alpha}_{LS}) \bm \varepsilon_{t}^{\prime}(\widehat{\bm \alpha}_{LS})]
		=\frac{1}{T} \sum_{t=1}^{T}  {\rm vec} (\bm \varepsilon_{0t} \bm \varepsilon_{0t}^{\prime}) + o_p(1) 
		\;\; \text{as} \;\; T \to \infty.  
	\end{align}
	Furthermore, using the law of large number under Assumption \ref{assum_stationarity_epsilon}, it holds that 
	\begin{align} \label{SigmaLS eq3}
		\frac{1}{T} \sum_{t=1}^{T}  \bm \varepsilon_{0t} \bm \varepsilon_{0t}^{\prime} - \Sigma_0 
		= \frac{1}{T} \sum_{t=1}^{T}  \bm \varepsilon_{0t} \bm \varepsilon_{0t}^{\prime} - \frac{1}{T} \sum_{t=1}^{T} E\left(\bm \varepsilon_{0t} \bm \varepsilon_{0t}^{\prime}\right) 
		= o_p(1) 
		\;\; \text{as} \;\; T \to \infty.  
	\end{align}
	As a result, \eqref{SigmaLS eq1}, \eqref{SigmaLS eq2} and \eqref{SigmaLS eq3} imply that $\widehat{\Sigma}_{LS} \to_p \Sigma_0$ as $T\to \infty$. Then Proposition \ref{thmLS}($i$) is established. 
	
	We then prove the asymptotic normality of $\widehat{\bm \alpha}_{LS}$. 
	By Proposition \ref{thmLS}($i$), it holds that $\widehat{\bm{\alpha}}_{LS} \rightarrow_{p} \bm{\alpha}_0$ as $T \to \infty$. 
	To utilize Theorem 4.1.3 of \cite{Amemiya1985}, we then show that the following results hold:
	\begin{enumerate}
		\item[(b1)] $\partial^2 \widetilde{\mathcal{L}}_{LS}(\bm{\alpha})/(\partial \bm \alpha \partial \bm \alpha') $ exists and is continuous on $\Theta$;
		\item[(b2)] $ \partial^{2} \widetilde{\mathcal{L}}_{LS}(\widehat{\bm \alpha}_{LS}) / (\partial \bm \alpha \partial \bm \alpha^{\prime})\rightarrow_{p} \mathcal{J}_{LS}$ as $T\to \infty$, and $\mathcal{J}_{LS}$ is positive definite;
		\item[(b3)] $T^{1/2} \partial \widetilde{\mathcal{L}}_{LS}(\bm \alpha_0) / \partial \bm \alpha \rightarrow_d N\left(0_{n_\alpha\times 1}, \mathcal{I}_{LS}\right)$ as $T \to \infty$.
	\end{enumerate}	
	(b1) holds under Assumption \ref{assum_stationarity_y}, 
	and (b3) holds by Lemma \ref{lemma_score_asy_Norm_LS}. 
	Hence we only need to show that (b2) holds. 
  Similar to the proof of \eqref{o_epsilon_initial}, using Lemma \ref{lemma_initial_effect_LS}($iii$) and Markov's inequality, it can be shown that 
	\begin{align*}
		\sup_{\bm \alpha \in \mathcal{A}}  &\left\|  \frac{1}{T} \sum_{t=1}^{T} \bigg\{\left[ (\bm \varepsilon_t^\prime (\bm \alpha)\otimes I_{n_\alpha})
		\dfrac{\partial \rm{vec}(\partial \bm \varepsilon_{t}^\prime (\bm \alpha)/ \partial \bm \alpha)}{\partial \bm \alpha^\prime}\bm \varepsilon_t(\bm \alpha)
		+ \frac{\partial \bm \varepsilon_t^\prime(\bm \alpha)}{\partial \bm \alpha} \frac{\partial \bm \varepsilon_t(\bm \alpha)}{\partial \bm \alpha^\prime}  
		\right]
		\right. \nonumber\\
		&\left.\quad\quad\quad \quad 
		-\left[
		(\widetilde{\bm \varepsilon}_t^\prime (\bm \alpha)\otimes I_{n_\alpha}) \dfrac{\partial \rm{vec}(\partial \widetilde{\bm \varepsilon}_{t}^\prime (\bm \alpha)/ \partial \bm \alpha)}{\partial \bm \alpha^\prime} 
		+ \frac{\partial \widetilde{\bm \varepsilon}_t^\prime(\bm \alpha)}{\partial \bm \alpha} \frac{\partial \widetilde{\bm \varepsilon}_t(\bm \alpha)}{\partial \bm \alpha^\prime}  
		\right]\bigg\} \right\| = O_p\left(\frac{1}{T}\right).
	\end{align*}
	This together with Lemma \ref{lemma_three consistent_LS}($iii$) implies that 
	\begin{align}\label{ddL_LS convergence}
		\frac{\partial^{2} \widetilde{\mathcal{L}}_{LS}(\widehat{\bm \alpha}_{LS})} {\partial \bm \alpha \partial \bm \alpha^{\prime}}\rightarrow_{p} E\left[\frac{\partial^{2} \mathcal{L}_{LS}(\widehat{\bm \alpha}_{LS})}{\partial \bm \alpha \partial \bm \alpha^{\prime}}\right] \;\; \text{as} \;\; T \to \infty. 
	\end{align}
	Moreover, note that 
	\begin{align*}
		E\left[\frac{\partial^{2} \mathcal{L}_{LS}(\bm{\alpha}_0)}{\partial \bm \alpha \partial \bm \alpha^{\prime}}\right] 
		= E\left[ \frac{\partial \bm \varepsilon_{t}^\prime(\bm \alpha_0)}{\partial \bm \alpha} 
		\frac{\partial \bm \varepsilon_{t}(\bm \alpha_0)}{\partial \bm \alpha^\prime}\right] 
		= \mathcal{J}_{LS}. 
	\end{align*}
	Since $\widehat{\bm{\alpha}}_{LS} \rightarrow_{p} \bm{\alpha}_0$ as $T \to \infty$ and $E[\partial ^2 \mathcal{L}_{LS}(\bm{\alpha})/\partial \bm{\alpha}\partial \bm{\alpha}^\prime]$ is continuous on $\mathcal{A}$, we have that 
	\begin{align}\label{ddepsilon convergence}
		E\left[\frac{\partial^{2}\widetilde{\mathcal{L}}_{LS}(\widehat{\bm \alpha}_{LS})} {\partial \bm \alpha \partial \bm \alpha^{\prime}}\right]\rightarrow_{p} \mathcal{J}_{LS} \;\; \text{as} \;\; T \to \infty. 
	\end{align}
	Then by \eqref{ddL_LS convergence} and \eqref{ddepsilon convergence}, it holds that 
	\begin{align*}
		\frac{\partial^{2} \widetilde{\mathcal{L}}_{LS}(\widehat{\bm \alpha}_{LS})} {\partial \bm \alpha \partial \bm \alpha^{\prime}} \rightarrow_{p} \mathcal{J}_{LS} \;\; as \;\; T \to \infty. 
	\end{align*}
	Furthermore, $\mathcal{J}_{LS}$ is positive definite by Lemma \ref{lemma_postive_J_LS}. 
	Thus (b2) holds. 
	As a result, by the consistency of $\widehat{\bm \alpha}_{LS}$, (b1)--(b3) and Theorem 4.1.3 of \cite{Amemiya1985}, it can be shown that 
	\begin{align*} 
		\sqrt{T}(\widehat{\bm \alpha}_{LS}-\bm \alpha_0) \rightarrow_{d} N(0_{n_\alpha\times 1}, \mathcal{J}_{LS}^{-1} \mathcal{I}_{LS} \mathcal{J}_{LS}^{-1}) 
		\;\; \text{as} \;\; T \to \infty,  
	\end{align*}
	and then Proposition \ref{thmLS}($ii$) holds. 
	
	Lastly we prove Proposition \ref{thmLS}($iii$), that is the asymptotic normality of ${\rm vec}(\widehat{\Sigma}_{LS})$. 
	By \eqref{SigmaLS eq1}--\eqref{taylor sigma}, as $T \to \infty$, it holds that  
	\begin{align} \label{SigmaLS converge in d eq1}
		&\sqrt{T}{\rm vec}(\widehat{\Sigma}_{LS}) \nonumber \\
		=&\frac{1}{\sqrt{T}} \sum_{t=1}^{T}\Bigg\{  {\rm vec} (\bm \varepsilon_{0t} \bm \varepsilon_{0t}^{\prime})
		+ \left[(\bm \varepsilon_{t}(\bm \alpha^\star) \otimes I_N) \frac{\partial \bm \varepsilon_t(\bm \alpha^\star)}{\partial \bm \alpha^\prime} +(I_N \otimes \bm \varepsilon_{t}(\bm \alpha^\star)) \frac{\partial \bm \varepsilon_t(\bm \alpha^\star)}{\partial \bm \alpha^\prime}  \right] (\widehat{\bm \alpha}_{LS} - \bm \alpha_0)\Bigg\} + o_p(1),
	\end{align}
	where $\bm \alpha^\star$ lies between $\widehat{\bm \alpha}_{LS}$ and $\bm \alpha_{0}$. Using Lindeberg–L\'{e}vy central limit theorem under Assumption \ref{assum_stationarity_epsilon}, $E(\bm \varepsilon_{0t} \bm \varepsilon_{0t}^\prime \mid \mathcal{F}_{t-1}) = \Sigma_0$ and $E(\|\bm \varepsilon_{0t}\|^4) < \infty$, then by the martingale CLT for a stationary and ergodic sequence of square integrable martingale increments\citep{Billingsley1961lindeberg} we obtain that 
	\begin{align} \label{SigmaLS converge in d eq2}
		\frac{1}{\sqrt{T}} \sum_{t=1}^{T}{\rm vec}(\bm \varepsilon_{0t} \bm \varepsilon_{0t}^{\prime}-\Sigma_0) \rightarrow_{d} N(0_{N^2 \times 1}, \mathcal{K})
		\;\; \text{as} \;\; T \to \infty,  
	\end{align}
	where $\mathcal{K} = \var[{\rm vec}(\bm \varepsilon_{0t} \bm \varepsilon_{0t}^{\prime})]$.
	Moreover, by Proposition \ref{thmLS}($ii$) and \eqref{o_1d_sigma}, it holds that
	\begin{align} \label{SigmaLS converge in d eq3}
		\frac{1}{T} \sum_{t=1}^{T}\left[(\bm \varepsilon_{t}(\bm \alpha^\star) \otimes I_N) \frac{\partial \bm \varepsilon_t(\bm \alpha^\star)}{\partial \bm \alpha^\prime} +(I_N \otimes \bm \varepsilon_{t}(\bm \alpha^\star)) \frac{\partial \bm \varepsilon_t(\bm \alpha^\star)}{\partial \bm \alpha^\prime}  \right]
		\sqrt{T}(\widehat{\bm \alpha}_{LS} - \bm \alpha_0)
		=o_p(1),
	\end{align}
	as $T \to \infty$. 
	As a result, using Slutsky's theorem, \eqref{SigmaLS converge in d eq1}--\eqref{SigmaLS converge in d eq3} imply that 
	\begin{align*}
		\sqrt{T} \left({\rm vec}(\widehat{\Sigma}_{LS})- {\rm vec}(\Sigma_{0})\right)\rightarrow_{d}N(0_{N^2\times 1},\mathcal{K}) 
		\;\; \text{as} \;\; T \to \infty.  
	\end{align*} 
	This completes the proof of Proposition \ref{thmLS}.
\end{proof}

\subsection{Proof of Proposition \ref{thmcompare_efficient}}
\begin{proof}
	We first establish the following results: 
	\begin{enumerate}[(a)]
		\item $\sqrt{T}(\widehat{\bm \alpha}- \bm \alpha_0)\rightarrow_d N(0_{n_\alpha\times 1},\Xi)$ as $T \to \infty$; 
		\item $\sqrt{T} ({\rm vec}(\widehat{\Sigma})- {\rm vec}(\Sigma_{0}))\rightarrow_{d}N(0_{N^2\times 1},\mathcal{K})$ as $T \to \infty$, 
	\end{enumerate}
	where 
	$\bm{\theta}_0=(\bm{\alpha}_0^\prime, \bm{\sigma}_0^\prime)^\prime$ with $\bm{\sigma}_0= \text{vech}(\Sigma_0)$, 
	$\widehat{\bm \theta} = (\widehat{\bm \alpha}^\prime , \widehat{\bm \sigma}^\prime)^\prime$ with $\widehat{\bm \sigma}= \text{vech}(\widehat{\Sigma})$, 
	$\Xi = \{E [(\partial \bm \varepsilon_t^\prime (\bm \alpha_0)/\partial \bm \alpha) \\
	\Sigma_0^{-1} (\partial \bm \varepsilon_t (\bm \alpha_0)/\partial \bm \alpha^\prime) ]\}^{-1} $ and $\mathcal{K} = \var[{\rm vec}(\bm \varepsilon_{0t} \bm \varepsilon_{0t}^{\prime})]$. 
	For (a), 
	by Theorem \ref{thmQMLE}, it holds that 
	\begin{align} \label{asymptotic distribution of QMLE}
		\sqrt{T}(\widehat{\bm{\theta}} -\bm{\theta}_0 ) \rightarrow_d N(0_{n\times 1},\mathcal{J}^{-1} \mathcal{I} \mathcal{J}^{-1}), 
	\end{align}
	where $\mathcal{I} = E(P_{0t}^\prime S_1 P_{0t})$ and $\mathcal{J} = E(P_{0t}^\prime S_2 P_{0t})$ with $S_1$, $S_2$ and $P_{0t}$ block structured by \eqref{I_2 and J_2}. 
	By matrix multiplication, it can be directly obtained that the asymptotic covariance matrix of $\sqrt{T}(\widehat{\bm \alpha}- \bm \alpha_0)$ is $\Xi$. 
	This together with \eqref{asymptotic distribution of QMLE} implies that (a) holds. 
	For (b), note that $\widehat{\bm \theta} = (\widehat{\bm \alpha}^\prime , \widehat{\bm \sigma}^\prime)^\prime$ satisfies that 
	\begin{align*}
		0 = \sum_{t=1}^{T} \frac{\partial \widetilde{l}_t(\widehat{\bm \theta})}{\partial \bm\sigma}
		= \frac{1}{2}D^\prime \sum_{t=1}^{T} \text{vec}\left(\Sigma^{-1}(\widehat{\bm \sigma})-\Sigma^{-1}(\widehat{\bm \sigma})\widetilde{\bm{\varepsilon}}_t(\widehat{\bm \alpha}) \widetilde{\bm{\varepsilon}}_t^\prime(\widehat{\bm \alpha}) \Sigma^{-1}(\widehat{\bm \sigma})\right), 
	\end{align*}
	where $D$ is a duplication matrix and $\widetilde{\bm{\varepsilon}}_t(\bm{\alpha}) =\bm{y}_t-\sum_{h=1}^{t-1}A_{h}(\bm{\alpha}) \bm{y}_{t-h}$. 
	This together with $\widehat{\bm \sigma}= \text{vech}(\widehat{\Sigma})$ and $\widehat{\Sigma}=\Sigma(\widehat{\bm \sigma})$ implies that 
	\begin{align*}
		\widehat{\Sigma} = \frac{1}{T} \sum_{t=1}^{T}\widetilde{\bm \varepsilon}_{t}(\widehat{\bm \alpha}) \widetilde{\bm \varepsilon}_{t}^\prime (\widehat{\bm \alpha}). 
	\end{align*}
	Moreover, by Proposition \ref{thmLS}, it holds that 
	\begin{align} \label{asymptotic distribution of Sigmahat1}
		\widehat{\Sigma}_{LS} = \frac{1}{T} \sum_{t=1}^{T}\widetilde{\bm \varepsilon}_{t}(\widehat{\bm \alpha}_{LS}) \widetilde{\bm \varepsilon}_{t}^\prime (\widehat{\bm \alpha}_{LS}) 
		\;\; \text{and} \;\; 
		\sqrt{T} ({\rm vec}(\widehat{\Sigma}_{LS})- {\rm vec}(\Sigma_{0}))\rightarrow_{d}N(0_{N^2\times 1},\mathcal{K}), 
	\end{align}
	where the asymptotic distribution depends on the consistency of $\widehat{\bm \alpha}_{LS}$. 
	Thus $\widehat{\Sigma}$ has the same asymptotic distribution as $\widehat{\Sigma}_{LS}$, and then (b) holds.  

	Next we show ($i$). 
	Recall that $\Xi_{1} = \mathcal{J}_{LS}^{-1}\mathcal{I}_{LS}\mathcal{J}_{LS}^{-1}$ with $\mathcal{I}_{LS} = E[(\partial \bm \varepsilon_{t}^\prime(\bm \alpha_0) / \partial \bm \alpha) \Sigma_{0} (\partial \bm \varepsilon_{t}(\bm \alpha_0)/ \partial \bm \alpha')]$ and $\mathcal{J}_{LS} = E[(\partial \bm \varepsilon_{t}^\prime(\bm \alpha_0) / \partial \bm \alpha) (\partial \bm \varepsilon_{t}(\bm \alpha_0)/ \partial \bm \alpha')]$, and $\Xi =\Big\{E [(\partial \bm \varepsilon_t^\prime (\bm \alpha_0)/\partial \bm \alpha) \Sigma_0^{-1} (\partial \bm \varepsilon_t (\bm \alpha_0)/\partial \bm \alpha^\prime) ]\Big\}^{-1} $. 
	Denote the random matrix 
	\begin{align*}
		\mathcal{R}_t=
		\left[
		\begin{matrix}
			A_t
			& -B_t\\
			-B_t
			& C_t\\
		\end{matrix}
		\right],
	\end{align*}
	where $A_t= (\partial \bm \varepsilon_{t}^\prime(\bm \alpha_0)/\partial \bm \alpha) \Sigma_0^{-1} (\partial \bm \varepsilon_{t}(\bm \alpha_0)/\partial \bm \alpha^\prime) $, $B_t=(\partial \bm \varepsilon_{t}^\prime(\bm \alpha_0)/\partial \bm \alpha) (\partial \bm \varepsilon_{t}(\bm \alpha_0)/\partial \bm \alpha^\prime)$, and $C_t=(\partial \bm \varepsilon_t^\prime (\bm \alpha_0)/\partial \bm \alpha) \Sigma_0 (\partial \bm \varepsilon_t (\bm \alpha_0)/\partial \bm \alpha^\prime)$. 
	Then it holds that $\Xi_{1}- \Xi = (EB_t)^{-1} (EC_t) (EB_t)^{-1} - (EA_t)^{-1}$, 
	and $(EB_t) (\Xi_{1}- \Xi) (EB_t) = EC_t - (EB_t) (EA_t)^{-1} (EB_t)$ is the Schur complement of the block $EA_t$ of the block matrix $ER_t$. 
	Note that by Lemma \ref{lemma_postive_J_LS} and its proof, matrices $EA_t$, $EB_t$ and $EC_t$ are positive definite. 
	Thus using the property of the Schur complement of a block matrix, to show that $\Xi_{1}- \Xi$ is positive semi-definite, it suffices to show that $ER_t$ is positive semi-definite. 
	Let $\bm c = ( \bm c_1^\prime , \bm c_2^\prime )^\prime$ be an arbitrary constant vector with $\bm c_1, \bm c_2 \in \mathbb{R}^{n_\alpha}$. 
	We have that $\bm c^\prime \mathcal{R}_t \bm c = \bm c_1 ^\prime A_t \bm c_1 -2\bm c_1 ^\prime B_t \bm c_2+ \bm c_2 ^\prime C_t \bm c_2$. 
	Denote $\widetilde{\bm c}_1 = \Sigma_{0}^{-1/2} (\partial \bm \varepsilon_{t}(\bm \alpha_0)/ \partial \bm \alpha') \bm c_1$ and $\widetilde{\bm c}_2 = \Sigma_{0}^{1/2} (\partial \bm \varepsilon_{t}(\bm \alpha_0)/ \partial \bm \alpha') \bm c_2$, 
	it then follows that $\bm c^\prime \mathcal{R}_t \bm c = \widetilde{\bm c}_1^\prime  \widetilde{\bm c}_1 - 2 \widetilde{\bm c}_1^\prime  \widetilde{\bm c}_2 +\widetilde{\bm c}_2^\prime  \widetilde{\bm c}_2 = (\widetilde{\bm c}_1 - \widetilde{\bm c}_2)^\prime (\widetilde{\bm c}_1 - \widetilde{\bm c}_2) \geq 0$ a.s. 
	Thus $\bm c^\prime E\mathcal{R}_t \bm c \geq 0$, which implies that $E\mathcal{R}_t$ is positive semi-definite. 
	As a result, $\Xi_{1}- \Xi$ is positive semi-definite. 
	Moreover, it is obvious that $\Xi_{1} = \Xi$ holds if and only if $\Sigma_{0}=\sigma^2 I_{N}$ for some $\sigma^2>0$. 

	Note that ($ii$) can be verified directly by (b) and \eqref{asymptotic distribution of Sigmahat1}. 
	We accomplish the proof of this proposition. 
\end{proof}

\subsection{Proofs of Remark \ref{order-condition} and Theorem \ref{thmBIC}}
\begin{proof}[\bf Proof of Remark \ref{order-condition}]
	By ($i$) and ($ii$) in the proof of Lemma \ref{lemma_identifiability}, Assumption \ref{assum_identifiability} ensures the orders $r$ and $s$ non-degenerate. 

	Next we show that the condition $G_{0p} \neq \sum_{i=1}^{r} G_{0,p+i} + \sum_{j=1}^{s} G_{0,p+r+2j-1}$ guarantees the order $p$ non-decreasing. 
	Note that the order $p$ will reduce to $p-1$ if and only if there exist $\{\lambda_{\star i}\}_{i=1}^{r}$, $\{\gamma_{\star j}\}_{j=1}^{s}$, $\{\varphi_{\star j}\}_{j=1}^{s}$ and $\{G_{\star k}\}_{k=p}^{p+r+2s}$ 
	such that the following equations holds: 
	\begin{align}
		&G_{0p} = \sum_{i=1}^{r} \lambda_{\star i} G_{\star, p+i-1} 
			   + \sum_{j=1}^{s} \gamma_{\star j} \left[\cos(\varphi_{\star j}) G_{\star, p+r+2j-2} + \sin(\varphi_{\star j}) G_{\star, p+r+2j-1}\right], \label{eqs h equal p}\\
		&\sum_{i=1}^{r} \lambda_{0i}^{h-p} G_{0,p+i} 
	 + \sum_{j=1}^{s} \gamma_{0j}^{h-p} \left[\cos((h-p) \varphi_{0j}) G_{0,p+r+2j-1} + \sin((h-p) \varphi_{0j}) G_{0,p+r+2j}\right] \notag\\
		&\mathrel{\phantom{G_{0p}}}= \sum_{i=1}^{r} \lambda_{\star i}^{h-p+1} G_{\star, p+i-1} 
		                        + \sum_{j=1}^{s} \gamma_{\star j}^{h-p+1} \left[\right.\cos((h-p+1) \varphi_{\star j}) G_{\star, p+r+2j-2}\\
								& \quad \quad\quad\quad \quad\quad\quad+ \sin((h-p+1) \varphi_{\star j}) G_{\star, p+r+2j-1}\left.\right] \notag\\  
		&\mathrel{\phantom{G_p}}\text{for all} \;\; h \geq p+1. \label{eqs h more than p}
	\end{align}
	By ($i$) and ($ii$) in the proof of Lemma \ref{lemma_identifiability}, \eqref{eqs h more than p} implies that 
	$\lambda_{\star i} = \lambda_{0i}$, $\gamma_{\star j} = \gamma_{0j}$, $\varphi_{\star j} = \varphi_{0j}$, $\lambda_{\star i} G_{\star, p+i-1} = G_{0,p+i}$, and 
	\begin{align} \label{eq cos(m+1) and cos(m)}
		&\gamma_{\star j} \left[\cos((h-p+1) \varphi_{\star j}) G_{\star, p+r+2j-2} + \sin((h-p+1) \varphi_{\star j}) G_{\star, p+r+2j-1}\right] \notag\\
		= &\cos((h-p) \varphi_{0j}) G_{0,p+r+2j-1} + \sin((h-p) \varphi_{0j}) G_{0,p+r+2j}, 
	\end{align}
	hold for all $1 \leq i \leq r$, $1 \leq j \leq s$ and $h \geq p+1$. 
	Furthermore, using the fact $a \cos((m+1)\varphi) + b \sin((m+1)\varphi) = (a \cos(\varphi) + b \sin(\varphi)) \cos(m \varphi) + (b \cos(\varphi) - a \sin(\varphi)) \sin(m \varphi)$, 
	\eqref{eq cos(m+1) and cos(m)} implies that 
	\begin{align*}
		&\gamma_{\star j} (\cos(\varphi_{\star j}) G_{\star, p+r+2j-2} + \sin(\varphi_{\star j}) G_{\star, p+r+2j-1}) \cos((h-p) \varphi_{\star j}) \\
		+ &\gamma_{\star j} (\cos(\varphi_{\star j}) G_{\star, p+r+2j-1} - \sin(\varphi_{\star j}) G_{\star, p+r+2j-2}) \sin((h-p) \varphi_{\star j}) \\
		= &G_{0,p+r+2j-1} \cos((h-p) \varphi_{0j}) + G_{0,p+r+2j} \sin((h-p) \varphi_{0j}), 
	\end{align*}
	for all $1 \leq j \leq s$ and $h \geq p+1$, and then by $\gamma_{\star j} = \gamma_{0j}$ and $\varphi_{\star j} = \varphi_{0j}$, we have that 
	\begin{align}
		&\cos(\varphi_{\star j}) G_{\star, p+r+2j-2} + \sin(\varphi_{\star j}) G_{\star, p+r+2j-1} = \gamma_{\star j}^{-1} G_{0,p+r+2j-1} 
		\;\; \text{and} \label{eq G cos + G sin}\\ 
		& \cos(\varphi_{\star j}) G_{\star, p+r+2j-1} - \sin(\varphi_{\star j}) G_{\star, p+r+2j-2} = \gamma_{\star j}^{-1} G_{0,p+r+2j}, \notag
	\end{align}
	for all $1 \leq j \leq s$ and $h \geq p+1$. 
	Above all, by $\lambda_{\star i} G_{\star, p+i-1} = G_{0,p+i}$, \eqref{eq G cos + G sin} and \eqref{eqs h equal p}, it can be obtained that 
	\begin{align*}
		G_{0p} = \sum_{i=1}^{r} G_{0,p+i} 
			     + \sum_{j=1}^{s} G_{0,p+r+2j-1}. 
	\end{align*}
	As a result, if $G_{0p} \neq \sum_{i=1}^{r} G_{0,p+i} + \sum_{j=1}^{s} G_{0,p+r+2j-1}$, then \eqref{eqs h equal p}--\eqref{eqs h more than p} do not hold, which implies that the order $p$ is irreducible. 
\end{proof}

\begin{proof}[\bf Proof of Theorem~\ref{thmBIC}]
	Recall that $\bm{\theta} =(\bm{\alpha}^{\prime},\bm{\sigma}^{\prime})^\prime$. 
	For $\iota=(p,r,s)$, $\iota^*=(p^*,r^*,s^*) \in \mathscr{M} = \{\iota \mid 0\le p \le p_{\rm max}, 0\le r\le r_{\rm max} ,0 \le s\le s_{\rm max}\}$, 
	denote $\Theta_{\iota}$ (or $\Theta_{\iota^*}$) as the parameter space of $\bm{\theta}$ with the order set to $\iota$ (or $\iota^*$), and 
	$\mathcal{A}_{\iota}$ (or $\mathcal{A}_{\iota^*}$) as the parameter space of $\bm{\alpha}$ with the order set to $\iota$ (or $\iota^*$). 
	Let $\widehat{\bm \alpha}_{\iota}$ (or $\widehat{\bm \alpha}_{\iota^*}$) be the LSE $\widehat{\bm \alpha}_{1}$ or QMLE $\widehat{\bm \alpha}_{2}$ with the order set to $\iota$ (or $\iota^*$). 
	To prove the selection consistency of the proposed ${\rm BIC}$ in \eqref{BIC}, it suffices to show that the following result holds for any $\iota \neq \iota^*$: 
	\begin{align}\label{BICconsistency}
		\lim_{T \to \infty}P({\rm BIC}(\iota)-{\rm BIC}(\iota^*)>0)=1.
	\end{align}
	By \eqref{BIC}, it holds that 
	\begin{align} \label{BIC_iota - BIC_iota*}
		{\rm BIC}(\iota)-{\rm BIC}(\iota^*)
		=& T\left(\ln|\widehat{\Sigma}(\widehat{\bm \alpha}_{\iota})|-\ln | \widehat{\Sigma}(\widehat{\bm \alpha}_{\iota^*})|\right) + (n(\iota)-n(\iota^*)) \ln T, 
	\end{align}
	where $\widehat{\Sigma}(\bm \alpha) = T^{-1} \sum_{t=1}^{T} \widetilde{\bm{\varepsilon}}_t(\bm \alpha) \widetilde{\bm{\varepsilon}}_t^\prime(\bm \alpha)$ with $\widetilde{\bm{\varepsilon}}_t(\bm{\alpha}) =\bm{y}_t-\sum_{h=1}^{t-1}A_{h}(\bm{\alpha}) \bm{y}_{t-h}$. 
	By Lemma \ref{lemma_uni_max_MLE}, we have that 
	\begin{align*}
		\bm{\theta}_0=(\bm{\alpha}_0^\prime, \bm{\sigma}_0^\prime)^\prime = \argmin_{\bm{\theta}\in \Theta_{\iota^*}} \left\{\ln|\Sigma(\bm \sigma)|+E [\bm \varepsilon_t(\bm \alpha)^\prime \Sigma^{-1}(\bm \sigma) \bm \varepsilon_t(\bm \alpha)]\right\}, 
	\end{align*}
	with $\Sigma_{0} = \Sigma(\bm \sigma_0)$. 
	In addition, let $\mathring{\Sigma} = \Sigma(\mathring{\bm\sigma})$ and 
	\begin{align*}
		\mathring{\bm{\theta}}=(\mathring {\bm \alpha}^\prime,\mathring{\bm\sigma}^\prime)^\prime =\argmin_{\bm{\theta}\in \Theta_{\iota}} \left\{\ln|\Sigma(\bm \sigma)|+E [\bm \varepsilon_t(\bm \alpha)^\prime \Sigma^{-1}(\bm \sigma) \bm \varepsilon_t(\bm \alpha)]\right\}.  
	\end{align*}
	
	To verify \eqref{BICconsistency}, we next consider two cases. 
	
	Case I (overfitting): $p\ge p^*, r\ge r^*, s\ge  s^*$, and at least one inequality holds. 
	Note that $n(\iota)-n(\iota^*) > 0$ in Case I, which implies that $(n(\iota)-n(\iota^*)) \ln T \to \infty$ as $T \to \infty$. 
	Thus by \eqref{BIC_iota - BIC_iota*}, to establish that \eqref{BICconsistency} holds for Case I, it suffices to show that  
	\begin{align*}
		T(\ln|\widehat{\Sigma}(\widehat{\bm \alpha}_{\iota})|-\ln | \widehat{\Sigma}(\widehat{\bm \alpha}_{\iota^*})|) = O_{p}(1) 
		\;\; \text{as} \;\; T \to \infty. 
	\end{align*}
	Rewrite $\ln|\widehat{\Sigma}(\widehat{\bm \alpha}_{\iota})|-\ln | \widehat{\Sigma}(\widehat{\bm \alpha}_{\iota^*})|$ as follows, 
	\begin{align} \label{ln|Sigmahat(alphabreve)| - ln|Sigmahat(alphahat)|}
		\ln|\widehat{\Sigma}(\widehat{\bm \alpha}_{\iota})|- \ln|\widehat{\Sigma}(\widehat{\bm \alpha}_{\iota^*})| 
		= &\left(\ln|\widehat{\Sigma}(\widehat{\bm \alpha}_{\iota})| - \ln |\Sigma(\mathring{\bm \alpha})|\right) 
		+ \left(\ln |\Sigma(\mathring{\bm \alpha})|- \ln |\Sigma(\bm \alpha_0)|\right) \notag\\
		&+ \left(\ln |\Sigma(\bm \alpha_0)| - \ln|\widehat{\Sigma}(\widehat{\bm \alpha}_{\iota^*})|\right),  
	\end{align}
	where $\Sigma(\bm \alpha) = T^{-1} \sum_{t=1}^{T} \bm{\varepsilon}_t(\bm \alpha) \bm{\varepsilon}_t^\prime(\bm \alpha)$ with $\bm{\varepsilon}_t(\bm{\alpha}) = \bm y_t - \sum_{h=1}^{\infty}A_{h}(\bm{\alpha})\bm y_{t-h}$. 
	Notice that the model with order $\iota=(p,r,s)$ in Case I corresponds to a bigger model. 
	It holds that $\bm \varepsilon_t(\bm \alpha_0) = \bm \varepsilon_t(\mathring{\bm \alpha})$ for all $t$, which implies that $\Sigma(\bm \alpha_0) = \Sigma(\mathring{\bm \alpha})$ and then the second parenthesis in \eqref{ln|Sigmahat(alphabreve)| - ln|Sigmahat(alphahat)|} is zero. 
	Below we will show that the following result holds when $p\ge p^*, r\ge r^*, s\ge  s^*$: 
	\begin{align} \label{ln|Sigmahat| - ln|Sigma|}
		\ln|\widehat{\Sigma}(\widehat{\bm \alpha}_{\iota})| - \ln |\Sigma(\mathring{\bm \alpha})| = O_{p}(T^{-1}),
	\end{align}
	which implies that the first and last parenthesis are $O_{p}(T^{-1})$. 
	Since it holds that 
	\begin{equation} \label{ln|Sigmahat| - ln|Sigma0|}
		\ln|\widehat{\Sigma}(\widehat{\bm \alpha}_{\iota})| - \ln |\Sigma(\mathring{\bm \alpha})| 
		= \ln \left(1 + \frac{|\widehat{\Sigma}(\widehat{\bm \alpha}_{\iota})| - |\Sigma(\mathring{\bm \alpha})|}{|\Sigma(\mathring{\bm \alpha})|}\right), 
	\end{equation}
	we only need to verify that (a) $|\widehat{\Sigma}(\widehat{\bm \alpha}_{\iota})| - |\Sigma(\mathring{\bm \alpha})| = O_{p}(T^{-1})$; and (b) $|\Sigma(\mathring{\bm \alpha})| = O_p(1)$. 
	In addition, as Case I corresponds to a bigger model, we set $G_{0,p+i}=0, G_{0,p+r+2j-1}=0, G_{0,p+r+2j}=0$ for $r^*<i\le r$ and $s^*<j\le s$ in this case. 

	We first show (a). 
	It suffices to show that $\widehat{\Sigma}(\widehat{\bm \alpha}_{\iota}) - \Sigma(\mathring{\bm \alpha}) = O_{p}(T^{-1})$ holds. 
	By Lemma \ref{lemma_initial_effect_LS}, we have 
	\begin{equation} \label{Sigmahathat - Sigmahat}
		\widehat{\Sigma}(\widehat{\bm \alpha}_{\iota}) - \Sigma(\widehat{\bm \alpha}_{\iota}) = O_{p}(T^{-1}).
	\end{equation} 
	Then we are left to establish that 
	\begin{equation} \label{Sigmahat - Sigma0}
		\Sigma(\widehat{\bm \alpha}_{\iota}) -\Sigma(\mathring{\bm \alpha}) = O_{p}(T^{-1}). 
	\end{equation}
	Denote $\widehat{\Delta}_h = A_{h}(\widehat{\bm \alpha}_{\iota}) - A_{h}(\mathring{\bm \alpha})$. 
	It can be shown that 
	\begin{align*}
		T \left[\Sigma(\widehat{\bm \alpha}_{\iota}) -\Sigma(\mathring{\bm \alpha})\right] 
		= & \sum_{t=1}^{T} \left[\bm \varepsilon_{t}(\widehat{\bm \alpha}_{\iota})\bm \varepsilon_{t}^\prime(\widehat{\bm \alpha}_{\iota}) - \bm \varepsilon_{t}(\mathring{\bm \alpha}) \bm \varepsilon_{t}^\prime(\mathring{\bm \alpha})\right]\\
		= & \sum_{t=1}^{T} \left[\left(\bm \varepsilon_{t}(\mathring{\bm \alpha}) - \sum_{h=1}^{\infty} \widehat{\Delta}_h \bm y_{t-h}\right)  \left(\bm \varepsilon_{t}(\mathring{\bm \alpha}) - \sum_{h=1}^{\infty} \widehat{\Delta}_h \bm y_{t-h}\right)^\prime
		- \bm \varepsilon_{t}(\mathring{\bm \alpha}) \bm \varepsilon_{t}^\prime(\mathring{\bm \alpha})\right]\\
		= & \sum_{t=1}^{T} \left[ -\sum_{h=1}^{\infty} \bm \varepsilon_{t}(\mathring{\bm \alpha}) \bm y_{t-h}^\prime \widehat{\Delta}_h^\prime  - \sum_{h=1}^{\infty}  \widehat{\Delta}_h \bm y_{t-h} \bm \varepsilon_{t}^\prime(\mathring{\bm \alpha}) + \sum_{h_1=1}^{\infty}\sum_{h_2=1}^{\infty}   \widehat{\Delta}_{h_1} \bm y_{t-h_1} \bm y_{t-h_2}^\prime \widehat{\Delta}_{h_2}^\prime \right]\\
		= & - \left(\sum_{h=1}^{\infty} \sqrt{T}\widehat{\Delta}_h \frac{1}{\sqrt{T}} \sum_{t=1}^{T}  \bm y_{t-h} \bm \varepsilon_{t}^\prime(\mathring{\bm \alpha})\right)^\prime
		-\sum_{h=1}^{\infty} \sqrt{T}\widehat{\Delta}_h \frac{1}{\sqrt{T}} \sum_{t=1}^{T}  \bm y_{t-h} \bm \varepsilon_{t}^\prime(\mathring{\bm \alpha})\\
		&+ \sum_{h_1=1}^{\infty} \sum_{h_2=1}^{\infty}\sqrt{T}\widehat{\Delta}_{h_1}\frac{1}{T} \sum_{t=1}^{T} \bm y_{t-h_2}  \bm y_{t-h_2}^\prime \sqrt{T} \widehat{\Delta}_{h_2}^\prime.
	\end{align*}
	Since that $T^{-1/2} \sum_{t=1}^{T}  \bm y_{t-h} \bm \varepsilon_{t}^\prime(\mathring{\bm \alpha}) = O_p(1)$ holds by the martingale CLT theorem under Assumptions \ref{assum_stationarity_y}--\ref{assum_stationarity_epsilon} and $T^{-1}\sum_{t=1}^{T} \bm y_{t-h_2}  \bm y_{t-h_2}^\prime =O_p(1)$ holds under Assumption \ref{assum_stationarity_y}, 
	we only need to verify that $\sqrt{T}\widehat{\Delta}_h = O_p(1)$ for $1 \leq h \leq p$ and $\sqrt{T}\widehat{\Delta}_h = O_p(h\rho^{h})$ for $h > p$. 
	Below we will focus on the cases $p\ge p^*, r=r^*, s=s^*=0 $ and $p=p^*, r>r^*, s=s^*=0 $, and the results for other cases can be verified similarly. 

	When $p\ge p^*, r=r^*$ and $s=s^*=0$, we have that the model is identifiable from Remark \ref{order-condition} and its proof, and then we can verify $\sqrt{T}(\widehat{\bm \alpha}_{\iota}- \mathring{\bm \alpha}) = O_p(1)$ with similar arguments in the proof of Theorems \ref{thmLS}--\ref{thmQMLE}. 
	Under Assumption \ref{assum_space}, it follows that 
	\begin{align*}
		\sqrt{T}\widehat{\Delta}_h 
		=&  \sqrt{T}(A_{h}(\widehat{\bm \alpha}_{\iota}) - A_{h}(\mathring{\bm \alpha})) 
		= \sqrt{T} (\widehat{G}_{\iota,h} - \mathring{G}_{h})
		= O_p(1) \;\; \text{for} \;\; 1 \leq h \leq p, \;\; \text{and} \\
		\sqrt{T}\widehat{\Delta}_h 
		=&  \sqrt{T}(A_{h}(\widehat{\bm \alpha}_{\iota}) - A_{h}(\mathring{\bm \alpha}))
		= \sqrt{T}\sum_{i=1}^{r} (\widehat{\lambda}_{\iota,i}^{h-p} \widehat{G}_{\iota,p+i}- \mathring{\lambda}_{i}^{h-p} \mathring{G}_{p+i} )\\
		=& \sum_{i=1}^{r}\left[\widehat{\lambda}_{\iota,i}^{h-p} \sqrt{T} (\widehat{G}_{\iota,p+i} - \mathring{G}_{p+i}) + \sqrt{T}(\widehat{\lambda}_{\iota,i}^{h-p} - \mathring{\lambda}_{i}^{h-p}) \mathring{G}_{p+i}\right]\\
		=& \sum_{i=1}^{r}\left[\widehat{\lambda}_{\iota,i}^{h-p} \sqrt{T} (\widehat{G}_{\iota,p+i} - \mathring{G}_{p+i}) + \sqrt{T}(\widehat{\lambda}_{\iota,i} - \mathring{\lambda}_{i})\sum_{k=0}^{h-p-1}(\widehat{\lambda}_{\iota,i}^{h-p-1-k}\mathring{\lambda}_{i}^k) \mathring{G}_{p+i}\right]\\
		=& O_p(h\rho^{h}) \;\; \text{for} \;\; h > p, 
	\end{align*}
	where $\mathring{\lambda}_{i}$ and $\mathring{G}_{i}$ are defined corresponding to $\mathring{\bm \alpha}$. 

	When $p=p^*, r>r^*$ and $s=s^*=0 $, the model is non-identifiable in estimation due to the non-identification of parameters $\{\lambda_{i}: r^*<i\le r\}$. 
	However, $\{\lambda_{i}, G_{j}: 1\le i\le r^*, 1\le j\le p+r\}$ are identifiable in estimation under Assumption \ref{assum_space}, which can be obtained by Remark \ref{order-condition} and its proof. 
	Thus similar to the proof of Theorems \ref{thmQMLE}, we can show that $\sqrt{T}(\widehat{\lambda}_{\iota,i} - \mathring{\lambda}_{i})=O_p(1)$ for $1\le i\le r^*$ and $\sqrt{T} (\widehat{G}_{\iota,i} - \mathring{G}_{i}) = O_p(1)$ for $1\le i\le p+r$, where $\widehat{\lambda}_{\iota,i}$ (or $\widehat{G}_{\iota,i}$) is the LSE or QMLE of $\lambda_i$ (or $G_{i}$) with the order set to $\iota$. 
	Recall that $G_{0i}=0$ for $p+r^*<i\le p+r$. 
	Note that $\mathring{\lambda}_{i} = \lambda_{0i}$ for $1 \leq i \leq r^*$ and $\mathring{G}_{i} = G_{0i}$ for $1 \leq i \leq p+r$ in this case. 
	Under Assumption \ref{assum_space}, it then follows that 
	\begin{align*}
		\sqrt{T}\widehat{\Delta}_h 
		=&  \sqrt{T}(A_{h}(\widehat{\bm \alpha}_{\iota}) - A_{h}(\mathring{\bm \alpha})) 
		= \sqrt{T} (\widehat{G}_{\iota,h} - G_{0h})
		= O_p(1) \;\; \text{for} \;\; 1 \leq h \leq p, \;\; \text{and} \\
		\sqrt{T}\widehat{\Delta}_h 
		=& \sqrt{T}(A_{h}(\widehat{\bm \alpha}_{\iota}) - A_{h}(\mathring{\bm \alpha})) \\
		=& \sqrt{T}\sum_{i=1}^{r^*} (\widehat{\lambda}_{\iota,i}^{h-p} \widehat{G}_{\iota,p+i}- \lambda_{0i}^{h-p} G_{0,p+i} ) 
		+ \sqrt{T}\sum_{i=r^*+1}^{r} \widehat{\lambda}_{\iota,i}^{h-p} \widehat{G}_{\iota,p+i}\\
		=& \sum_{i=1}^{r^*}\left[\widehat{\lambda}_{\iota,i}^{h-p} \sqrt{T} (\widehat{G}_{\iota,p+i} - G_{0,p+i}) + \sqrt{T}(\widehat{\lambda}_{\iota,i}^{h-p} - \lambda_{0i}^{h-p}) G_{0,p+i}\right] \\
		&+ \sqrt{T}\sum_{i=r^*+1}^{r} \widehat{\lambda}_{\iota,i}^{h-p} (\widehat{G}_{\iota,p+i} - G_{0,p+i})\\
		=& \sum_{i=1}^{r^*}\left[\widehat{\lambda}_{\iota,i}^{h-p} \sqrt{T} (\widehat{G}_{\iota,p+i} - G_{0,p+i}) + \sqrt{T}(\widehat{\lambda}_{\iota,i} - \lambda_{0i})\sum_{k=0}^{h-p-1}(\widehat{\lambda}_{\iota,i}^{h-p-1-k}\lambda_{0i}^k) G_{0,p+i}\right]\\
		&+ \sum_{i=r^*+1}^{r} \widehat{\lambda}_{\iota,i}^{h-p} \sqrt{T} (\widehat{G}_{\iota,p+i} - G_{0,p+i})\\
		=& O_p(h\rho^{h}) \;\; \text{for} \;\; h > p. 
	\end{align*}
	Thus we have established \eqref{Sigmahat - Sigma0}. This together with \eqref{Sigmahathat - Sigmahat} implies that (a) holds. 

	Next we show (b). 
	By Lemma \ref{lemma_three consistent_LS}($i$) and Assumption \ref{assum_stationarity_y}, we can obtain that 
	\begin{align} \label{Sigma(alpha0) and Sigma(alphacircle)}
		&\Sigma(\bm \alpha_0) 
		= E[\bm{\varepsilon}_t(\bm \alpha_0) \bm{\varepsilon}_t^\prime(\bm \alpha_0)] + o_{p}(1) 
		= \Sigma_0 + o_{p}(1). 
	\end{align}
	This together with $\Sigma(\bm \alpha_0) = \Sigma(\mathring{\bm \alpha})$ that we mention above as well as Assumption \ref{assum_stationarity_epsilon}, implies that (b) holds. 
	Thus by \eqref{ln|Sigmahat| - ln|Sigma0|} and (a)--(b), we have 
	$$
		\ln|\widehat{\Sigma}(\widehat{\bm \alpha}_{\iota})| - \ln |\Sigma(\mathring{\bm \alpha})| 
		= \ln (1+O_p(T^{-1})) = O_{p}(T^{-1}), 
	$$
	that is \eqref{ln|Sigmahat| - ln|Sigma|} holds. 
	As a result, \eqref{BICconsistency} holds for Case I. 
	
	Case II (misspecification): $p< p^*$ or $r< r^*$ or $s< s^*$. 
	By \eqref{BIC_iota - BIC_iota*}, it holds that 
	\begin{align} \label{BIC_iota - BIC_iota*, mis}
		{\rm BIC}(\iota)-{\rm BIC}(\iota^*)
		=& T\left(\ln|\widehat{\Sigma}(\widehat{\bm \alpha}_{\iota})|-\ln | \widehat{\Sigma}(\widehat{\bm \alpha}_{\iota^*})|\right) + (n(\iota)-n(\iota^*)) \ln T \notag\\
		=& T\Big[
		\left(\ln|\widehat{\Sigma}(\widehat{\bm \alpha}_{\iota})| - \ln |E(\bm \varepsilon_{t}(\mathring{\bm \alpha}) \bm \varepsilon_{t}^\prime(\mathring{\bm \alpha}))|\right)  \notag\\
		&+ \left(\ln |E(\bm \varepsilon_{t}(\mathring{\bm \alpha}) \bm \varepsilon_{t}^\prime(\mathring{\bm \alpha}))|- \ln |E(\bm \varepsilon_{t}(\bm \alpha_0) \bm \varepsilon_{t}^\prime(\bm \alpha_0))|\right) \notag\\
		&+ \left(\ln |E(\bm \varepsilon_{t}(\bm \alpha_0) \bm \varepsilon_{t}^\prime(\bm \alpha_0))| - \ln|\widehat{\Sigma}(\widehat{\bm \alpha}_{\iota^*})|\right) 
		\Big] \notag\\
		&+ (n(\iota)-n(\iota^*)) \ln T. 
	\end{align}
	Hence to establish \eqref{BICconsistency}, it suffices to show 
	(c) $\ln |E(\bm \varepsilon_{t}(\mathring{\bm \alpha}) \bm \varepsilon_{t}^\prime(\mathring{\bm \alpha}))|- \ln |E(\bm \varepsilon_{t}(\bm \alpha_0) \bm \varepsilon_{t}^\prime(\bm \alpha_0))| > \delta$ for some constant $\delta > 0$; 
	(d) $\ln|\widehat{\Sigma}(\widehat{\bm \alpha}_{\iota^*}) - E(\bm \varepsilon_{t}(\bm \alpha_0) \bm \varepsilon_{t}^\prime(\bm \alpha_0))| = o_{p}(1)$ as $T \to \infty$; and 
	(e) $\ln|\widehat{\Sigma}(\widehat{\bm \alpha}_{\iota})| - \ln |E(\bm \varepsilon_{t}(\mathring{\bm \alpha}) \bm \varepsilon_{t}^\prime(\mathring{\bm \alpha}))| = o_{p}(1)$ as $T \to \infty$. 
	
	For (c), denote $p_m = \max\{p,p^*\}$, $r_m = \max\{r,r^*\}$, $s_m = \max\{s,s^*\}$ and $\iota_m = (p_m, r_m, s_m)$. 
	And let $\bm \alpha_0^+$ (or $\mathring{\bm \alpha}^+$) be a parameter vector with the order set to $\iota_m$, where $\bm \alpha_0$ (or $\mathring{\bm \alpha}$) being the subvector of $\bm \alpha_0^+$ (or $\mathring{\bm \alpha}^+$) and the extra parameters being zero. 
	Moreover, denote $\mathcal{A}_{\iota_m}$ as the parameter space of $\bm \alpha_{\iota_m}$, including the points $\bm\alpha_0^+$ and $\mathring{\bm \alpha}^+$.
	Since $E[\bm \varepsilon_t^\prime (\bm \alpha_{\iota^*}) \bm \varepsilon_{t}(\bm \alpha_{\iota^*})]$ has a unique minimum at $\bm \alpha_0$ on $\mathcal{A}_{\iota^*}$ by Lemma \ref{lemma_uni_max}, $E[\bm \varepsilon_t^\prime (\bm \alpha_{\iota_m}) \bm \varepsilon_{t}(\bm \alpha_{\iota_m})]$ has a unique minimum at $\bm \alpha_0^+$ on $\mathcal{A}_{\iota_m}$. 
	This together with $\bm\alpha_0^+, \mathring{\bm \alpha}^+ \in \mathcal{A}_{\iota_m}$, $E[\bm \varepsilon_t(\mathring{\bm \alpha}) \bm \varepsilon_t^\prime(\mathring{\bm \alpha})]=E[\bm \varepsilon_t(\mathring{\bm \alpha}^+) \bm \varepsilon_t^\prime(\mathring{\bm \alpha}^+)]$ and $E[\bm \varepsilon_t(\bm \alpha_0) \bm \varepsilon_t^\prime(\bm \alpha_0)] = E[\bm \varepsilon_t(\bm \alpha_0^+) \bm \varepsilon_t^\prime(\bm \alpha_0^+)]$, implies that the following result holds for some constant $\delta>0$: 
	$$ 
		\ln\left|E[\bm \varepsilon_t(\mathring{\bm \alpha}) \bm \varepsilon_t^\prime(\mathring{\bm \alpha})]\right|- \ln\left|E[\bm \varepsilon_t(\bm \alpha_0) \bm \varepsilon_t^\prime(\bm \alpha_0)]\right| 
		= \ln\left|E[\bm \varepsilon_t(\mathring{\bm \alpha}^+) \bm \varepsilon_t^\prime(\mathring{\bm \alpha}^+)]\right|- \ln\left|E[\bm \varepsilon_t(\bm \alpha_0^+) \bm \varepsilon_t^\prime(\bm \alpha_0^+)]\right| 
		> \delta. 
	$$
	Thus, (c) holds.
	Furthermore, (d) holds by Theorem \ref{thmLS}($i$).
	
	For (e), 
	we only need to show 
	(e1) $\widehat{\Sigma}(\widehat{\bm \alpha}_{\iota}) - \Sigma(\widehat{\bm \alpha}_{\iota}) = o_{p}(1)$; 
	(e2) $\Sigma(\widehat{\bm \alpha}_{\iota}) - \Sigma(\mathring{\bm \alpha})= o_{p}(1)$; and 
	(e3) $\Sigma(\mathring{\bm \alpha}) - E(\bm \varepsilon_{t}(\mathring{\bm \alpha}) \bm \varepsilon_{t}^\prime(\mathring{\bm \alpha})) = o_{p}(1)$ as $T \to \infty$. 
	It can be obtained that 
	(e1) holds by Lemma \ref{lemma_initial_effect_LS}, and 
	(e3) holds by the ergodic theorem under Assumption \ref{assum_stationarity_y}. 
	Then we are left to verify (e2). 
	Assume that $E(\bm \varepsilon_t^\prime(\bm \alpha_{\iota}) \bm \varepsilon_t (\bm \alpha_{\iota}))$ has a unique minimum at $\mathring{\bm \alpha}$ on $\mathcal{A}_{\iota}$. Similar to the proof of Theorem  \ref{thmLS}($i$), we can show that $\widehat{\bm \alpha}_{\iota} \to_p \mathring{\bm \alpha}$ as $T \to \infty$. By the mean value theorem, we have
	\begin{align*}
		\|\Sigma(\widehat{\bm \alpha}_{\iota}) - \Sigma(\mathring{\bm \alpha})\|
		\leq \left\| \frac{1}{T} \sum_{t=1}^{T} \frac{\partial \bm \varepsilon_{t}(\bar{\bm \alpha}_\iota) \bm \varepsilon_t^\prime(\bar{\bm \alpha}_\iota)}{\partial \bm \alpha} \right\|
		\left\|\widehat{\bm \alpha}_{\iota} -\mathring{\bm \alpha} \right\|
	\end{align*}
	where $\bar{\bm \alpha}_\iota$ lies between $\widehat{\bm \alpha}_{\iota}$ and $\mathring{\bm \alpha}$.
	This together with $\left\| T^{-1} \sum_{t=1}^{T} \partial \bm \varepsilon_{t}(\bar{\bm \alpha}_\iota) \bm \varepsilon_t^\prime(\bar{\bm \alpha}_\iota)/\partial \bm \alpha \right\|=O_p(1)$ by Lemma \ref{lemma_three consistent_LS} and $\left\|\widehat{\bm \alpha}_{\iota} -\mathring{\bm \alpha} \right\|=o_p(1)$ implies (e2) holds.
	Hence (e) holds. 
	Then by \eqref{BIC_iota - BIC_iota*, mis} and (c)--(e), we have 
	\begin{align*}
		{\rm BIC}(\iota)-{\rm BIC}(\iota^*)
		= T (o_p(1) + \delta) + O(\ln(T)) \to \infty  
		\;\; \text{as} \;\; T \to \infty. 		
	\end{align*}
	As a result, \eqref{BICconsistency} holds for Case II. 
	This completes the proof of Theorem \ref{thmBIC}. 
\end{proof}

\section{Derivatives and Notations}

Recall that $\bm{\varepsilon}_t(\bm{\alpha}) = \bm y_t - \sum_{h=1}^{\infty}A_{h}(\bm{\alpha})\bm y_{t-h}$ with $\bm{\alpha}=(\lambda_1,\ldots,\lambda_r, \bm{\eta}_1^\prime, \ldots ,\bm{\eta}_s^\prime, \bm{g}^\prime)^\prime$ and $\bm{\eta}_j = (\gamma_j, \varphi_j)$. 
By \eqref{model_VARinf}, $A_h(\bm{\alpha}) = G_h$ for $1\le h\le p$, and $A_h(\bm{\alpha}) = \sum_{k=p+1}^{d} \ell_{hk}(\bm \omega) G_k$ for $h \ge p+1$ with  
$\ell_{hk}(\bm{\omega})$ being the $(h,k)$-th entry of 
\begin{align*}
	L(\bm{\omega})
	=\left( 
	\begin{matrix}
		I_p & 0_{p\times(d-p)}\\
		0_{\infty\times p} & \bm{\ell}^{I}(\lambda_1)\cdots \bm{\ell}^I(\lambda_r) ~~ \bm{\ell}^{II}(\bm{\eta}_1)\cdots \bm{\ell}^{II}(\bm{\eta}_s) 
	\end{matrix}
	\right), 
\end{align*}
$\bm{\ell}^I(\lambda_i)=(\lambda_i,\lambda_i^2,\ldots)^\prime$ and 
\begin{align*}
	\bm{\ell}^{II}(\bm{\eta}_j)
	=\left(
	\begin{matrix}
		\gamma_j \cos(\varphi_j) &\gamma_j^2 \cos(2\varphi_j) & \cdots \\
		\gamma_j \sin(\varphi_j) &\gamma_j^2 \sin(2\varphi_j) & \cdots   
	\end{matrix}
	\right)^\prime.
\end{align*} 
Then the first derivatives of $\bm{\varepsilon}_t(\bm \alpha)$ are 
\begin{equation} \label{de_epsilon_alpha}
	\begin{aligned}
		&\frac{\partial \bm{\varepsilon}_t(\bm \alpha)}{\partial \lambda_i}
		=-G_{p+i} \sum_{h=p+1}^{\infty} (h-p) \lambda_i^{h-p-1} \bm{y}_{t-h},\\
		&\frac{\partial \bm{\varepsilon}_t(\bm \alpha)}{\partial \gamma_j}
		= \sum_{h=p+1}^{\infty}(h-p) \gamma_j^{h-p-1}\left\{-\cos[(h-p)\varphi_j]G_{p+r+2j-1}  -\sin[(h-p)\varphi_j]G_{p+r+2j}\right\}\bm{y}_{t-h},\\
		&\frac{\partial \bm{\varepsilon}_t(\bm \alpha)}{\partial \varphi_j}
		=  \sum_{h=p+1}^{\infty}(h-p) \gamma_j^{h-p}\left\{\sin[(h-p)\varphi_j]G_{p+r+2j-1} -\cos[(h-p)\varphi_j]G_{p+r+2j}\right\}\bm{y}_{t-h},\\
		&\frac{\partial \bm{\varepsilon}_t(\bm \alpha )}{\partial \bm g^\prime }= -\bm{z}_t^\prime\otimes I_N, 
	\end{aligned}
\end{equation}
where $1 \le i \le r$, $1 \le j \le s$, and $\bm z_{t} = \text{vec}(\bm z_{1t},\ldots ,\bm z_{dt})$ with $\bm{z}_{kt}=\sum_{h=1}^{\infty}\ell_{hk}\bm y_{t-h}$. 
In addition, $\partial \bm{\varepsilon}_t(\bm \alpha )/\partial \bm g^\prime$ is obtained by rewriting $\bm{\varepsilon}_t(\bm \alpha )$ as $\bm{\varepsilon}_t(\bm \alpha )=\bm{y}_t -  (\bm{z}_t^\prime\otimes I_N )\bm g$. 
Furthermore, the second derivatives of $\bm{\varepsilon}_t(\bm \alpha)$ are 
\begin{align} \label{de_epsilon_alpha^2}
		& \frac{\partial^2 \bm{\varepsilon}_t(\bm \alpha)}{\partial \lambda_i^2}
		=-G_{p+i} \sum_{h=p+2}^{\infty} (h-p)(h-p-1) \lambda_i^{h-p-2} \bm{y}_{t-h},\notag\\
		& \frac{\partial^2 \bm{\varepsilon}_t(\bm \alpha)}{\partial \gamma_j^2}
		=\sum_{h=p+2}^{\infty}(h-p)(h-p-1)\gamma_j^{h-p-2}\Big\{-\cos[(h-p)\varphi_j]G_{p+r+2j-1} \\ 
		& \qquad \qquad \qquad  \qquad \qquad \qquad \qquad  \qquad \qquad -\sin[(h-p)\varphi_j]G_{p+r+2j}\Big\}\bm{y}_{t-h},\notag\\
		&\frac{\partial^2 \bm{\varepsilon}_t(\bm \alpha)}{\partial \varphi_j^2}
		=\sum_{h=p+1}^{\infty}(h-p)^2 \gamma_j^{h-p}\left\{\cos[(h-p)\varphi_j]G_{p+r+2j-1} +\sin[(h-p)\varphi_j]G_{p+r+2j}\right\}\bm{y}_{t-h},\notag\\
		&\frac{\partial^2 \bm{\varepsilon}_t(\bm \alpha)}{\partial \gamma_j \partial \varphi_j}
		=\sum_{h=p+1}^{\infty}(h-p)^2\gamma_j^{h-p-1}\left\{\sin[(h-p)\varphi_j]G_{p+r+2j-1} -\cos[(h-p)\varphi_j]G_{p+r+2j}\right\}\bm{y}_{t-h},\notag\\
		&\frac{\partial^2 \bm{\varepsilon}_t(\bm \alpha )}{\partial \lambda_i \partial \bm{g}_{p+i}^\prime}
		=-\left[ \sum_{h=p+1}^{\infty} (h-p) \lambda_i^{h-p-1} \bm{y}_{t-h}\right]^\prime \otimes I_N,\notag\\
		&\frac{\partial^2 \bm{\varepsilon}_t(\bm \alpha )}{\partial \gamma_j \partial \bm{g}_{p+r+2j-1}^{\prime}}
		=-\left\{ \sum_{h=p+1}^{\infty} (h-p) \gamma_j^{h-p-1} \cos[(h-p)\varphi_j] \bm{y}_{t-h}\right\}^\prime \otimes I_N,\notag\\
		&\frac{\partial^2 \bm{\varepsilon}_t(\bm \alpha )}{\partial \gamma_j \partial \bm{g}_{p+r+2j}^\prime}
		=-\left\{ \sum_{h=p+1}^{\infty} (h-p) \gamma_j^{h-p-1} \sin[(h-p)\varphi_j] \bm{y}_{t-h}\right\}^\prime \otimes I_N,\notag\\
		&\frac{\partial^2 \bm{\varepsilon}_t(\bm \alpha )}{\partial \varphi_j \partial \bm{g}_{p+r+2j-1}^\prime}
		=\left\{ \sum_{h=p+1}^{\infty} (h-p) \gamma_j^{h-p} \sin[(h-p)\varphi_j] \bm{y}_{t-h}\right\}^\prime \otimes I_N,\notag\\
		&\frac{\partial^2 \bm{\varepsilon}_t(\bm \alpha )}{\partial \varphi_j \partial \bm{g}_{p+r+2j}^\prime}
		=-\left\{ \sum_{h=p+1}^{\infty} (h-p) \gamma_j^{h-p} \cos[(h-p)\varphi_j] \bm{y}_{t-h}\right\}^\prime \otimes I_N,
\end{align}
where $1 \le i \le r$, $1 \le j \le s$, and $\bm{g}_k=\text{vec}(G_k)$ for $1\le k\le d$. 
The other second derivatives of $\bm{\varepsilon}_t(\bm \alpha)$ are zeros.

Recall that $l_t(\bm{\theta}) =1/2\ln |\Sigma(\bm{\sigma})|+1/2 \bm{\varepsilon}_t^\prime (\bm{\alpha})\Sigma^{-1}(\bm{\sigma})\bm{\varepsilon}_t(\bm{\alpha})$ and 
$\bm{\theta}=(\bm{\alpha}^\prime,\bm{\sigma}^\prime)^\prime$. 
Note that $\text{vec}(\Sigma)=D \text{vech}(\Sigma)$, where $D$ is an $N^2\times N(N+1)/2$ duplication matrix. 
Then the first derivatives of $l_t(\bm{\theta})$ are 
\begin{equation} \label{de_l_theta}
\begin{aligned}
	\frac{\partial l_t(\bm \theta)}{\partial \bm{\alpha} }
	=& \frac{\partial \bm{\varepsilon}_t^\prime(\bm{\alpha})}{\partial \bm{\alpha}} \Sigma^{-1}(\bm \sigma)\bm{\varepsilon}_t(\bm{\alpha}),\\
	\frac{\partial l_t(\bm \theta)}{\partial \bm\sigma}
	=&\frac{1}{2}D^\prime \text{vec}\left(\Sigma^{-1}(\bm \sigma)-\Sigma^{-1}(\bm \sigma)\bm{\varepsilon}_t(\bm{\alpha}) \bm{\varepsilon}_t^\prime(\bm{\alpha}) \Sigma^{-1}(\bm \sigma)\right), 
\end{aligned}
\end{equation}
and the second derivatives of $l_t(\bm{\theta})$ are 
\begin{equation} \label{de_l_theta^2}
\begin{aligned}
	\frac{\partial^2 l_t(\bm \theta)}{\partial \bm{\alpha}\partial \bm{\alpha}^\prime}
	=& (\bm{\varepsilon}_t^\prime(\bm{\alpha})  \Sigma^{-1} (\bm \sigma)\otimes I_N)  \frac{\partial(\text{vec}(\partial \bm{\varepsilon}_t^\prime (\bm{\alpha}) /\partial \bm{\alpha}))}{\partial\bm{\alpha}^\prime} +\frac{\partial \bm{\varepsilon}_t^\prime(\bm{\alpha}) }{\partial \bm{\alpha}} \Sigma^{-1}(\bm \sigma)\frac{\partial\bm{\varepsilon}_t(\bm{\alpha})}{\partial \bm{\alpha}^\prime},\\
	\frac{\partial^2  l_t(\bm \theta)}{\partial \bm{\sigma} \partial \bm{\sigma}^\prime}
	=&-\frac{1}{2}D^\prime(\Sigma^{-1}(\bm \sigma)\otimes\Sigma^{-1}(\bm \sigma)-\Sigma^{-1}(\bm \sigma)\otimes \Sigma^{-1}(\bm \sigma)\bm{\varepsilon}_t(\bm{\alpha})\bm{\varepsilon}_t^\prime(\bm{\alpha})\Sigma^{-1}(\bm \sigma)\\
	&-\Sigma^{-1}(\bm \sigma)\bm{\varepsilon}_t(\bm{\alpha})\bm{\varepsilon}_t^\prime(\bm{\alpha})\Sigma^{-1}(\bm \sigma)\otimes\Sigma^{-1}(\bm \sigma))D,\\
	\frac{\partial^2  l_t(\bm \theta)}{\partial \bm{\alpha}\partial \bm{\sigma}^\prime}
	=&- \left(\bm{\varepsilon}_t^\prime (\bm{\alpha}) \Sigma^{-1}(\bm \sigma) \otimes \frac{\partial \bm{\varepsilon}_t^\prime(\bm{\alpha}) }{\partial \bm{\alpha}}\Sigma^{-1}(\bm \sigma)\right)D,
\end{aligned}
\end{equation}
where the first and second derivatives of $\bm{\varepsilon}_t(\bm \alpha)$ are defined in \eqref{de_epsilon_alpha} and \eqref{de_epsilon_alpha^2}. 
In addition, for the $n \times n$ matrices $\mathcal{I}=E[\partial l_{t}(\bm \theta_0)/ \partial \bm \theta\partial l_{t}(\bm \theta_0)/ \partial \bm \theta']$ and $\mathcal{J} = E(\partial ^2 l_t(\bm{\theta}_0)/\partial \bm{\theta}\partial \bm{\theta}^\prime)$. 
It can be shown that 
\begin{align} \label{I_2 and J_2}
	\mathcal{I} = E(P_{0t}^\prime S_1 P_{0t}) 
	\;\; \text{and} \;\;
	\mathcal{J} = E(P_{0t}^\prime S_2 P_{0t}), 
\end{align}
where the $ (N+N^2)\times n$ matrix $P_{0t} = P_t(\bm{\theta}_0)$, and the $(N+N^2)\times (N+N^2)$ matrices 
\begin{align*}
	S_1 = \left(
	\begin{matrix}
		\Sigma_0^{-1} & -\frac{1}{2} \Sigma_0^{-1} E(\bm{\varepsilon}_{0t} \bm{u}_{0t}^\prime)\mathcal{P}\\
		* & \frac{1}{4} \left[\mathcal{P} E(\bm{u}_{0t} \bm{u}_{0t}^\prime) \mathcal{P}- \text{vec}(\Sigma_0^{-1}) \text{vec}(\Sigma_0^{-1})^\prime \right]		
	\end{matrix}
	\right)
	\quad \text{and} \quad
	S_2 = \left(
	\begin{matrix}
		\Sigma_0^{-1} & 0_{N \times N^2}\\
		* & \frac{1}{2} \mathcal{P}
	\end{matrix}
	\right),
\end{align*}
with $*$ denoting the symmetric elements, $\mathcal{P} = \Sigma_0^{-1} \otimes \Sigma_0^{-1}$, 
$\bm{u}_0 = \text{vec}(\bm{\varepsilon}_{0t} \bm{\varepsilon}_{0t}^\prime)$ and 
\begin{align*}
	P_t(\bm{\theta}) = \left(
	\begin{matrix}
		\frac{\partial \bm \varepsilon_t (\bm \alpha)}{\partial \bm{\alpha}^\prime } & 0_{N \times N(N+1)/2  }\\
		0_{N^2 \times n_\alpha} & D
	\end{matrix}
	\right).
\end{align*}

\section{The Connection between the Scalable ARMA and VARMA Models}\label{Sec-relatedVARMA}
	We first consider the VARMA$(1,1)$ model 
	\begin{align} \label{VARMA(1,1)}
		\bm y_t= \Phi \bm y_{t-1}+ \bm \varepsilon_{t} - \Theta \bm \varepsilon_{t-1}, 
	\end{align}
	where $\Phi, \Theta \in \mathbb{R}^{N \times N}$. 
	Suppose that all eigenvalues of $\Theta$ are less than one in absolute value. 
	\eqref{VARMA(1,1)} can be written as the following VAR$(\infty)$ process: 
	\begin{align} \label{VAR(infinity)}
		\bm y_t = \sum_{h=1}^{\infty} A_h(\Phi, \Theta) \bm y_{t-h} +\bm \varepsilon_t
		\;\; \text{with} \;\; 
		A_h(\Phi, \Theta)=\Theta^{h-1}(\Phi-\Theta). 
	\end{align}
	Then we reparameterize $A_h$'s in \eqref{VAR(infinity)}. 
	Suppose that $\Theta$ has $r$ nonzero real eigenvalues, $\lambda_1, \ldots, \lambda_r$, and $s$ conjugate pairs of nonzero complex eigenvalues, $\left(\lambda_{r+2 j-1}, \lambda_{r+2 j}\right)=\left(\gamma_j e^{i \varphi_j}, \gamma_j e^{-i \varphi_j}\right)$ for $1 \leqslant j \leqslant s$ with $i$ being the imaginary unit. 
	$\Theta$ can be block-diagonalized as $\Theta =BJB^{-1}$ via the Jordan decomposition, where $B$ is an invertible matrix and $J = \diag\{\lambda_{1},\ldots,\lambda_{r},C_{1},\ldots,C_{s},0_{(N-r-2s) \times 1}^{\prime}\}$ with each $C_j$ being the $2\times 2 $ block determined by $\bm \eta_{j} = (\gamma_{j}, \varphi_{j})$ as follows
	\begin{align*}
		C_{j} = \gamma_{j}\left(
		\begin{matrix}
			\cos(\varphi_{j}) & \sin (\varphi_{j})\\
			-\sin (\varphi_{j}) & \cos(\varphi_{j})
		\end{matrix}
		\right).
	\end{align*}
	Substituting $\Theta =BJB^{-1}$ into \eqref{VAR(infinity)}, we obtain that 
	\begin{align*}
		\bm y_t = \sum_{h=1}^{\infty} BJ^{h-1}B^{-1} (\Phi-\Theta) \bm y_{t-h} +\bm \varepsilon_t. 
	\end{align*}
	Let $\underline{A} = B^{-1} (\Phi-\Theta)$ and $\underline{B} = B$. 
	It holds that 
	\begin{align*}
		\bm y_t = (\Phi-\Theta) \bm y_{t-1} + \sum_{h=2}^{\infty} \underline{B} J^{h-1} \underline{A} \bm y_{t-h} +\bm \varepsilon_t. 
	\end{align*}
	Furthermore, according to the block form of $J$, 
	let $\underline{A} = (\underline{A}_{1}, \ldots, \underline{A}_{r}, \underline{A}_{r+1}, \ldots, \\\underline{A}_{r+s}, \underline{A}_{r+s+1})^{\prime}$ and $\underline{B} = (\underline{B}_{1}, \ldots, \underline{B}_{r}, \underline{B}_{r+1}, \ldots, \underline{B}_{r+s}, \underline{B}_{r+s+1})$ with 
	$\underline{A}_{i} = \bm{a}_{i}$ and $\underline{B}_{i} = \bm{b}_{i}$ being $N \times 1$ matrices for $1 \leq i \leq r$, 
	$\underline{A}_{r+j} = (\bm{a}_{r+j}, \bar{\bm{a}}_{r+j})$ and $\underline{B}_{r+j} = (\bm{b}_{r+j}, \bar{\bm{b}}_{r+j})$ being $N \times 2$ matrices for $1 \leq j \leq s$, 
	and $\underline{A}_{r+s+1}$ and $\underline{B}_{r+s+1}$ being $N \times (N-r-2s)$ matrices. 
	It then follows that 
	\begin{align} \label{equivalent version}
			\bm y_t 
			&= (\Phi-\Theta) \bm y_{t-1}
			+ \sum_{h=2}^{\infty} \left[ 
			\sum_{i=1}^{r} \lambda_{i}^{h-1} \bm{b}_{i} \bm{a}_{i}^{\prime} 
			+ \sum_{j=1}^{s} \left(\bm{b}_{r+j}, \bar{\bm{b}}_{r+j}\right) C_{j}^{h-1} \left(\bm{a}_{r+j}, \bar{\bm{a}}_{r+j}\right)^{\prime}
			\right] \bm{y}_{t-h} + \bm \varepsilon_{t} \notag\\
			&= (\Phi-\Theta) \bm y_{t-1}
			+ \sum_{h=2}^{\infty} \Bigg\{ 
			\sum_{i=1}^{r} \lambda_{i}^{h-1} \bm{b}_{i} \bm{a}_{i}^{\prime} 
			+ \sum_{j=1}^{s} \gamma_{j}^{h-1} \Big[\cos((h-1) \varphi_{j}) \left(\bm{b}_{r+j} \bm{a}_{r+j}^{\prime} + \bar{\bm{b}}_{r+j} \bar{\bm{a}}_{r+j}^{\prime}\right) \notag\\
			&\mathrel{\phantom{= \underline{\bm{\omega}} + \sum_{h=0}^{\infty} \Bigg\{ \sum_{i=1}^{r} \lambda_{i}^{h-1} \bm{b}_{i} \bm{a}_{i}^{\prime} + \sum_{j=1}^{s} \gamma_{j}^{h-1} \Big[}}
			+ \sin((h-1) \varphi_{j}) \left(\bm{b}_{r+j} \bar{\bm{a}}_{r+j}^{\prime} - \bar{\bm{b}}_{r+j} \bm{a}_{r+j}^{\prime}\right)\Big]
			\Bigg\} \bm{y}_{t-h} + \bm \varepsilon_{t}. 
	\end{align}
	Let $G_{1} = \Phi-\Theta$, $G_{1+i} = \bm{b}_{i} \bm{a}_{i}^{\prime}$ for $1 \leq i \leq r$, and $G_{1+r+2j-1} = \bm{b}_{r+j} \bm{a}_{r+j}^{\prime} + \bar{\bm{b}}_{r+j} \bar{\bm{a}}_{r+j}^{\prime}$ and $G_{1+r+2j} = \bm{b}_{r+j} \bar{\bm{a}}_{r+j}^{\prime} - \bar{\bm{b}}_{r+j} \bm{a}_{r+j}^{\prime}$ for $1 \leq j \leq s$. 
	We can rewrite \eqref{equivalent version} in the form of scalable ARMA model in \eqref{model_VARinf}, that is, 
	\begin{align} \label{SARMA(1,1)}
		\bm y_t = \sum_{h=1}^{\infty} A_h(\bm \omega, \bm g) \bm{y}_{t-h} + \bm \varepsilon_{t}, 
	\end{align}
	where $A_1(\bm \omega, \bm g) = G_1$ and 
	$A_h(\bm \omega, \bm g) = \sum_{i=1}^{r} \lambda_{i}^{h-1} G_{1+i} 
			+ \sum_{j=1}^{s} \gamma_{j}^{h-1} [(\cos((h-1) \varphi_{j}) G_{1+r+2j-1} + \sin((h-1) \varphi_{j}) G_{1+r+2j})]$ for $h \geq 2$. 

	For the general VARMA$(p,q)$ model 
	\begin{align} \label{VARMA(p,q)}
		\bm y_t= \sum_{i=1}^{p} \Phi_{i} \bm y_{t-i}+ \bm \varepsilon_{t} - \sum_{j=1}^{q} \Theta_{j} \bm \varepsilon_{t-j}, 
	\end{align}
	where $\Phi_i$, $\Theta_j \in \mathbb{R}^{N \times N}$ for $1 \leq i \leq p$ and $1 \leq j \leq q$. 
	Suppose that \eqref{VARMA(p,q)} is invertible, then it can be written as the following VAR$(\infty)$ process: 
	$$
	\bm{y}_t=\sum_{h=1}^{\infty} \underbrace{\left(\sum_{i=0}^{p \wedge h} P \underline{\Theta}^{h-i} P^{\prime} \Phi_i\right)}_{A_h} \bm{y}_{t-h}+\bm{\varepsilon}_t 
	$$
	with
	$$
	\underline{\Theta}=
	\left(
	\begin{matrix}
	\Theta_1 & \Theta_2 & \cdots & \Theta_{q-1} & \Theta_q \\
	I_N & 0_{N \times N} & \cdots & 0_{N \times N} & 0_{N \times N} \\
	0_{N \times N} & I_N & \cdots & 0_{N \times N} & 0_{N \times N} \\
	\vdots & \vdots & & \vdots & \vdots \\
	0_{N \times N} & 0_{N \times N} & \cdots & I_N & 0_{N \times N}
	\end{matrix}\right),
	$$
	where $\Phi_0=-I_N$ and $P=\left(I_N, 0_{N \times N(q-1)}\right)$ are constant matrices. 
	Using the same techniques, we can also rewrite \eqref{VARMA(p,q)} in the form of scalable ARMA model in \eqref{model_VARinf}; see also the proofs of Proposition 2.2 of \cite{huang2025} and Proposition 1 of \cite{zheng2025interpretable}.

\newpage

\linespread{1.54}
\selectfont{}

\setlength{\bibsep}{1pt}
\bibliography{SARMA}

@STRING{Annals = {The Annals of Statistics}}

@Book{Amemiya1985,
	author    = {Amemiya, T.},
	publisher = {Harvard University Press},
	title     = {{Advanced Econometrics}},
	year      = {1985},
}

@article{athanasopoulos2008,
	title={{VARMA versus VAR for macroeconomic forecasting}},
	author={Athanasopoulos, George and Vahid, Farshid},
	journal={Journal of Business \& Economic Statistics},
	volume={26},
	pages={237--252},
	year={2008},
	publisher={Taylor \& Francis}
}

@article{athanasopoulos2012,
	title={{Two canonical VARMA forms: Scalar component models vis-{\`a}-vis the Echelon form}},
	author={Athanasopoulos, George and Poskitt, Don Stephen and Vahid, Farshid},
	journal={Econometric Reviews},
	volume={31},
	pages={60--83},
	year={2012},
	publisher={Taylor \& Francis}
}

@book{brockwell2009time,
	title={{Time series: Theory and methods}},
	author={Brockwell, Peter J and Davis, Richard A},
	year={2009},
	publisher={Springer Science \& Business Media}
}

@article{Billingsley1961lindeberg,
	title={{The Lindeberg-Levy theorem for martingales}},
	author={Billingsley, Patrick},
	journal={Proceedings of the American Mathematical Society},
	volume={12},
	pages={788--792},
	year={1961}
}

@article{zellner1974time,
	title={Time series analysis and simultaneous equation econometric models},
	author={Zellner, Arnold and Palm, Franz and others},
	journal={Journal of Econometrics},
	volume={2},
	pages={17--54},
	year={1974}
}

@article{wallis1977multiple,
	title={Multiple time series analysis and the final form of econometric models},
	author={Wallis, Kenneth F},
	journal={Econometrica: Journal of the Econometric Society},
	volume={45},
	pages={1481--1497},
	year={1977},
	publisher={JSTOR}
}

@article{CHAN2016,
	title = {{Large Bayesian VARMAs}},
	author = {Joshua C.C. Chan and Eric Eisenstat and Gary Koop},
	journal = {Journal of Econometrics},
	volume = {192},
	pages = {374-390},
	year = {2016}	
}

@article{dufour2022practical,
	title={{Practical methods for modeling weak VARMA processes: Identification, estimation and specification with a macroeconomic application}},
	author={Dufour, Jean-Marie and Pelletier, Denis},
	journal={Journal of Business \& Economic Statistics},
	volume={40},
	pages={1140--1152},
	year={2022},
	publisher={Taylor \& Francis}
}

@article{dias2018estimation,
	title={Estimation and forecasting in vector autoregressive moving average models for rich datasets},
	author={Dias, Gustavo Fruet and Kapetanios, George},
	journal={Journal of Econometrics},
	volume={202},
	pages={75--91},
	year={2018},
	publisher={Elsevier}
}

@article{hannan1984,
	title={Multivariate linear time series models},
	author={Hannan, Edward J and Kavalieris, Laimonis},
	journal={Advances in Applied Probability},
	volume={16},
	pages={492--561},
	year={1984},
	publisher={Cambridge University Press}
}

@book{johansen1995likelihood,
	title={Likelihood-based inference in cointegrated vector autoregressive models},
	author={Johansen, S{\o}ren and others},
	year={1995},
	publisher={Oxford University Press on Demand}
}

@article{kascha2012comparison,
	title={A comparison of estimation methods for vector autoregressive moving-average models},
	author={Kascha, Christian},
	journal={Econometric Reviews},
	volume={31},
	pages={297--324},
	year={2012},
	publisher={Taylor \& Francis}
}

@book{luetkpohl2005,
	title ={{The New Introduction to Multiple Time Series Analysis}},
	author = {L{\"u}tkepohl, Helmut},
	year = {2005},
	publisher={Springer Berlin, Heidelberg}
}

@article{ling2003asymptotic,
	title={{Asymptotic theory for a vector ARMA-GARCH model}},
	author={Ling, Shiqing and McAleer, Michael},
	journal={Econometric Theory},
	volume={19},
	pages={280--310},
	year={2003},
	publisher={Cambridge University Press}
}

@article{lutkepohl1996,
	title={{Specification of echelon-form VARMA models}},
	author={L{\"u}tkepohl, Helmut and Poskitt, Donald Stephen},
	journal={Journal of Business \& Economic Statistics},
	volume={14},
	pages={69--79},
	year={1996},
	publisher={Taylor \& Francis}
}

@article{mccracken2016fred,
	title={{FRED-MD: A monthly database for macroeconomic research}},
	author={McCracken, Michael W and Ng, Serena},
	journal={Journal of Business \& Economic Statistics},
	volume={34},
	pages={574--589},
	year={2016},
	publisher={Taylor \& Francis}
}

@article{poskitt1994asymptotic,
	title={{On the asymptotic relative efficiency of Gaussian and least squares estimators for vector ARMA models}},
	author={Poskitt, DS and Salau, MO},
	journal={Journal of Multivariate Analysis},
	volume={51},
	pages={294--317},
	year={1994},
	publisher={Elsevier}
}

@article{poskitt2003,
	title={On the specification of cointegrated autoregressive moving-average forecasting systems},
	author={Poskitt, Donald S},
	journal={International Journal of Forecasting},
	volume={19},
	pages={503--519},
	year={2003},
	publisher={Elsevier}
}

@article{schwarz1978estimating,
	title={Estimating the dimension of a model},
	author={Schwarz, Gideon},
	journal={The Annals of Statistics},
	volume={6},
	pages={461--464},
	year={1978},
	publisher={JSTOR}
}

@article{simionescu2013,
	title={{The use of VARMA models in forecasting macroeconomic indicators}},
	author={Simionescu, Mihaela},
	journal={Economics \& Sociology},
	volume={6},
	pages={94},
	year={2013},
	publisher={Centre of Sociological Research (NGO)}
}

@book{serfling2009approximation,
	title={Approximation Theorems of Mathematical Statistics},
	author={Serfling, Robert J},
	year={2009},
	publisher={John Wiley \& Sons}
}

@article{sims1980,
	title={Macroeconomics and reality},
	author={Sims, Christopher A},
	journal={Econometrica: Journal of the Econometric Society},
	pages={1--48},
	volume={48},
	year={1980},
	publisher={JSTOR}
}

@article{TiaoandBox1981,
	author = { G. C.   Tiao  and  G. E. P.   Box },
	title = {Modeling multiple time series with applications},
	journal = {Journal of the American Statistical Association},
	volume = {76},
	pages = {802-816},
	year  = {1981},
	publisher = {Taylor & Francis}
}

@book{Tsay2014,
	title={{Multivariate Time Series Analysis: With R and Financial Applications}},
	author={Tsay, Ruey S},
	year={2014},
	publisher={John Wiley \& Sons}
}

@misc{huang2025,
      title={{SARMA: Scalable low-rank high-dimensional autoregressive moving averages via tensor decomposition}}, 
      author={Feiqing Huang and Kexin Lu and Yao Zheng},
      year={2025},
      eprint={2405.00626},
      archivePrefix={arXiv},
      primaryClass={stat.ME},
      url={https://arxiv.org/abs/2405.00626}, 
}

@article{zheng2025interpretable,
  title={An interpretable and efficient infinite-order vector autoregressive model for high-dimensional time series},
  author={Zheng, Yao},
  journal={Journal of the American Statistical Association},
  volume={120},
  pages={212--225},
  year={2025},
  publisher={Taylor \& Francis}
}

@article{Wilms2023,
  title={Sparse identification and estimation of large-scale vector autoregressive moving averages},
  author={Wilms, Ines and Basu, Sumanta and Bien, Jacob and Matteson, David S},
  journal={Journal of the American Statistical Association},
  volume={118},
  pages={571--582},
  year={2023},
  publisher={Taylor \& Francis}
}

@article{Tiao_Tasy1989,
author = {Tiao, George C. and Tsay, Ruey S.},
title = {Model Specification in Multivariate Time Series},
journal = {Journal of the Royal Statistical Society: Series B (Methodological)},
volume = {51},
pages = {157-195},
year = {1989}
}

@book{reinsel2003elements,
  title={Elements of multivariate time series analysis},
  author={Reinsel, Gregory C},
  year={2003},
  publisher={Springer Science \& Business Media}
}

\end{document}